    \tikzset{
    path/.style={line width=2.2pt, red}
    }
\newcommand{\RR}{\mathbb{R}}
\newcommand{\mr}{\mathrm}
\newcommand{\ra}{\rightarrow}
\newcommand{\mc}{\mathcal}
\newcommand{\HH}{\mathcal{H}}
\newcommand{\xra}{\xrightarrow}
\newcommand{\xla}{\xleftarrow}
\newcommand{\CC}{\mathbb{C}}
\newcommand{\ii}{\mathrm{in}}
\newcommand{\oo}{\mathrm{out}}
\newcommand{\dd}{\partial}
\newcommand{\DN}{\mathrm{DN}}
\newcommand{\wh}{\widehat}
\newcommand{\til}{\widetilde}
\newcommand{\val}{\mathrm{val}}
\newcommand{\ol}{\overline}
\newcommand{\DNI}{\mathrm{DN}_\mathrm{int}}
\newcommand{\GCob}{\mathrm{GraphCob}}
\newcommand{\T}{\mathsf{T}}
\newcommand{\dvol}{d\mathrm{vol}}
\newcommand{\s}{\mathsf}
\newcommand{\ggamma}{\pmb{\gamma}}
\newcommand{\xxi}{\pmb{\xi}}
\newcommand{\ul}{\underline}
\newcommand{\D}{\mathcal{D}}
\theoremstyle{remark}
\newtheorem{remark}{Remark}[section]
\theoremstyle{plain}
\newtheorem{lemma}[remark]{Lemma}
\newtheorem{proposition}[remark]{Proposition}
\newtheorem{thm}[remark]{Theorem}
\newtheorem{corollary}[remark]{Corollary}
\theoremstyle{definition}
\newtheorem{definition}[remark]{Definition}
\newtheorem{example}[remark]{Example}
 \newglossaryentry{X}
 {
   name={$X$},
  sort={},
  description={Spacetime graph}
}
\newglossaryentry{Y}
 {
   name={$Y$, or $Y_i$},
  sort={},
  description={A subgraph $Y \subset X$  (or several subgraphs $Y_i \subset X$) of the spacetime graph }
}
\newglossaryentry{Laplacian}
{
 name={$\Delta_{X}$},
 sort={},
 description={Laplacian on $X$}
}
\newglossaryentry{Kinetic Operator}
{
  name={$K_X$},
  sort={},
  description={Kinetic operator on $X$, $K_X = \Delta_X + m^2$}
}
\newglossaryentry{Gamma}
{
  name={$\Gamma$},
  sort={},
  description={Feynman graph}
}
\newglossaryentry{Propagator}
{
  name={$G(x,y)$},
  sort={},
  description={The propagator or Green's function of the kinetic operator $K_X$, integral kernel (matrix) of $G = K_X^{-1}$}
}
\newglossaryentry{Partition function}
{
  name={$Z_X$},
  sort={},
  description={The partition function on $X$}
}
\newglossaryentry{Perturbative partition function}
{
  name={$Z^{\mr{pert}}_X$},
  sort={},
  description={The perturbative partition function on $X$}
}
\newglossaryentry{Extension operator}
{
  name={$E_{Y,X}$},
  sort={},
  description={The extension operator (also known as Poisson operator): extends a field $\phi_Y$ into bulk $X$ as a solution of Dirichlet problem}
}
\newglossaryentry{Dirichlet-to-Neumann operator}
{
  name={$\DN_{Y,X}$},
  sort={},
  description={Dirichlet-to-Neumann operator}
}
\newglossaryentry{Action functional}
{
  name={$S_X$},
  sort={},
  description={Action functional on the space of fields on $X$}
}
\newglossaryentry{First quantized action functional}
{
  name={$S_X^{1q}$},
  sort={},
  description={First quantized action functional}
}
\newglossaryentry{Zeta-regularized determinant}
{
  name={${\det}^\zeta A$},
  sort={},
  description={Zeta-regularized determinant of operator $A$}
}
\newglossaryentry{Path space}
{
  name={$P_X(u,v)$},
  sort={},
  description={The set of paths in $X$ joining $u$ to $v$}
}
\newglossaryentry{h-path space}
{
  name={$\Pi_X(u,v)$},
  sort={},
  description={The set of h-paths in $X$ joining $u$ to $v$}
}
\newglossaryentry{edge-to-path space}
{
  name={$P_{X}^{\Gamma}$},
  sort={},
  description={The set of edge-to-path maps from $\Gamma$ to $X$}
}
\newglossaryentry{edge-to-h-path space}
{
  name={$\Pi_{X}^{\Gamma}$},
  sort={},
  description={The set of edge-to-h-path maps from $\Gamma$ to $X$}
}
\newglossaryentry{degree of a h-path}
{
  name={$\mr{deg}(\tilde\gamma)$},
  sort={},
  description={The number of jumps in an h-path $\tilde\gamma$}
}
\newglossaryentry{length of a path}
{
  name={$l(\gamma)$},
  sort={},
  description={The length of a path (or h-path) $\gamma$ }
}
\newglossaryentry{h}{
name={$h(\tilde\gamma)$},
sort={},
description={The number of hesitations of an h-path $\tilde\gamma$}
}
\newglossaryentry{s}{
name={$s(\tilde\gamma)$},
sort={},
description={Weight of an h-path, $s(\tilde\gamma)=m^{-2l(\tilde\gamma)}(-1)^{h(\tilde\gamma)}$}
}
\newglossaryentry{w}
{
  name={$w(\gamma)$},
  sort={},
  description={Weight of a path, $w(\gamma)=\prod_{v\in V(\gamma)}\dfrac{1}{m^2+\mr{val}(v)}$}
}
\begin{document}

\title{
Combinatorial QFT on graphs: first quantization formalism
}
\author[I. Contreras]{Ivan Contreras}
\address{Department of Mathematics, Amherst College, 31 Quadrangle Drive, Amherst, MA 01002, USA}
\email{icontreraspalacios@amherst.edu}

\author[S. Kandel]{Santosh Kandel}
\address{California State University, Sacramento, Department of Mathematics and Statistics, CA 95819, USA}
\email{kandel@csus.edu}

\author[P. Mnev]{Pavel Mnev}
\address{University of Notre Dame, Notre Dame, IN 46556, USA}
\address{St. Petersburg Department of V. A. Steklov Institute of Mathematics of the Russian Academy of Sciences, 
27 Fontanka, St. Petersburg, Russia, 191023}
\email{pmnev@nd.edu}

\author[K. Wernli]{Konstantin Wernli}
\address{Centre for Quantum Mathematics, IMADA, University of Southern Denmark, Campusvej 55, 5230 Odense M, Denmark}
\email{kwernli@imada.sdu.dk}

\thanks{
The work of K. Wernli was supported by the ERCSyG project, Recursive and Exact New Quantum Theory (ReNewQuantum) which received
funding from the European Research Council (ERC) under the European Union’s Horizon 2020 research and innovation programme
under grant agreement No. 810573.}

\begin{abstract}
    We study a
    combinatorial model of the quantum scalar field with polynomial potential
    on a graph. In the first quantization formalism, the value of a Feynman graph is given by a sum over maps from the Feynman graph to the spacetime graph (mapping edges to paths). This picture interacts naturally with Atiyah-Segal-like cutting-gluing of spacetime graphs. In particular, one has combinatorial counterparts of the known gluing formulae for Green's functions and (zeta-regularized) determinants of Laplacians.
\end{abstract}
\maketitle

\setcounter{tocdepth}{3}
\tableofcontents

\section{Introduction}
%
In this paper we study a combinatorial model of the quantum massive scalar field with polynomial potential on a spacetime given a by a graph $X$. Our motivation to do so was the study of the first quantization formalism, that we recall in Section \ref{sec: first quantization} below, and in particular its interplay with locality, i.e. cutting and gluing of the spacetime manifold. At the origin is the Feynman-Kac formula \eqref{G 1st quant formula} for the Green's function of the kinetic operator. In case the spacetime is a graph, this formula has a combinatorial analog given by summing over paths with certain weights (see Section \ref{s: path sum formulae for the propagator and determinant}). These path sums interact very naturally with cutting and gluing, in a mathematically rigorous way, see Theorem \ref{thm: gluing prop and det} and its proof from path sum formulae \ref{sec: gluing proof path sums}. 

A second motivation to study this model was the notion of (extended) functorial QFTs with source a Riemannian cobordism. Few examples of functorial QFTs out of Riemannian cobordism categories exist, for instance  \cite{K}, \cite{KMW},\cite{Pickrell}. In this paper, we define a graph cobordism category and show that the combinatorial model defines a functor to the category of Hilbert spaces (Section \ref{ss: functorial picture}). We also propose an extended cobordism (partial) $n$-category of graphs  and a functor to a target $n$-category of commutative algebras defined by the combinatorial QFT we are studying (Section \ref{ss: extended graph QFT}). 

Finally, one can use this discrete toy model to approximate the continuum theory, which in this paper we do only in easy one-dimensional examples (see Section \ref{ss: gluing in Gaussian theory: comparison to continuum}). We think that the results derived in this paper will be helpful to study the interplay between renormalization and locality in higher dimensions (the two-dimensional case was discussed in detail in \cite{KMW}). 

\subsection{Motivation: first quantization formalism}\label{sec: first quantization}
We outline 
the idea of the first quantization picture in QFT 
in the example of the interacting scalar field.\footnote{We refer the reader to the inspiring exposition of this idea in \cite[Section 3.2]{Dijkgraaf97}.}

Consider the scalar field theory on a Riemannian $n$-manifold $M$ perturbed by  a polynomial potential $p(\phi)=\sum_{k\geq 3}\frac{p_k}{k!}\phi^k$, defined by the action functional
\begin{equation}\label{S (intro)}
S(\phi)=\int_M \left(\frac12\phi (\Delta+m^2) \phi+p(\phi)\right) d^nx. 
\end{equation}
Here $\phi\in C^\infty(M)$ is the field, $\Delta$ is the Laplacian determined by the metric, $m>0$ is the mass parameter and $d^nx$ denotes the metric volume element.

The partition function is formally given by a (mathematically ill-defined) functional integral understood perturbatively as a sum over Feynman graphs $\Gamma$,\footnote{In this discussion we will ignore the issue of divergencies and renormalization.}
\begin{equation} \label{Z pert expansion (intro)}
    Z_M=``\int_{C^\infty(M)} \mc{D}\phi \; e^{-\frac{1}{\hbar} S(\phi)}"\; = \left({\det}^\zeta(\Delta+m^2)\right)^{-\frac12}\cdot \sum_\Gamma \Phi_\Gamma.
\end{equation}
Here ${\det}^\zeta$ is the functional determinant in zeta function regularization.
The weight $\Phi_\Gamma$ of a Feynman graph is the product of Green's functions $G(x,y)$ of the kinetic operator $\Delta+m^2$ associated with the edges of $\Gamma$, integrated over ways to position vertices of $\Gamma$ at points of $M$ (times the vertex factors, a symmetry factor and a loop-counting factor):
\begin{equation}\label{Feynman weight (intro)}
    \Phi_\Gamma= \frac{\hbar^{|E|-|V|}}{|\mr{Aut}(\Gamma)|} \int_{M^{\times V}} d^n x_1 \cdots d^n x_{|V|} \; \prod_{(u,v)\in E} G(x_u,x_v) \cdot \prod_{v\in V} (-p_{\mr{val}(v)}).
\end{equation}
Here $V,E$ are the set of vertices and the set of edges of $\Gamma$, respectively.

Next, one can understand the kinetic operator $\Delta+m^2=\colon\wh{H}$ as a quantum Hamiltonian of an auxiliary quantum mechanical system with Hilbert space $L^2(M)$. Then, one can write the Green's function $G(x,y)$ as the evolution operator of this auxiliary system integrated over the time of evolution:
$$
 G(x,y)= \int_0^\infty dt\, \langle x|  e^{-t \wh{H}} |y\rangle.
$$
Replacing the evolution operator (a.k.a. heat kernel) with its Feynman-Kac path integral representation, one has
\begin{equation}\label{G 1st quant formula}
G(x,y)=\int_0^\infty dt \int_{\gamma\colon [0,t]\ra M,\; \gamma(0)=x,\, \gamma(t)=y} 
\mc{D}\gamma\; e^{-
S^{1q}(\gamma)
}.
\end{equation}
Here the inner integral is over paths $\gamma$ on $M$ parameterized by the interval $[0,t]$, starting at $y$ and ending at $x$; the auxiliary (``first quantization'') action in the exponent is
\begin{equation}\label{S^1q}
    S^{1q}(\gamma)=\int_0^t d\tau \left(\frac{\dot{\gamma}^2}{4}+m^2-\frac16 R(\gamma)\right),
\end{equation}
where $R$ is the scalar curvature of the metric on $M$; $\dot\gamma^2\colon=g_{\gamma(\tau)}(\dot\gamma,\dot \gamma)$ is the square norm of the velocity vector $\dot\gamma\in T_{\gamma(\tau)}M$ of the path w.r.t. the metric $g$ on $M$.\footnote{The action (\ref{S^1q}) can be obtained from the short-time asymptotics (Seeley-DeWitt expansion) of the heat kernel $\kappa(x,y;t)=\langle x|  e^{-t \wh{H}} |y\rangle \underset{t\ra 0}{\sim} (4\pi t)^{-\frac{n}{2}} e^{-\frac{d(x,y)^2}{4t}} (1+b_2(x,y)t+b_4(x,y)t^2+\cdots)$, with $b_{2k}$ smooth functions on $M\times M$ (in particular, on the diagonal), with $b_2(x,x)=-m^2+\frac16 R(x)$; $d(x,y)$ is the geodesic distance on $M$, see e.g. \cite{Vassilevich}.
One then has $\displaystyle \kappa(x,y;t)=\lim_{N\ra \infty} \int_{M^{\times(N-1)}}d^nx_{N-1} \cdots d^nx_1\kappa(x_N=x,x_{N-1};\delta t)\cdots
\kappa(x_2,x_1;\delta t) \kappa(x_1,x_0=y;\delta t)= \lim_{N\ra \infty} 
\int \prod_{j=1}^{N-1}\left((4\pi \delta t)^{-\frac{n}{2}}dx_j\right) \, e^{-\sum_{j=1}^N \left(\frac{d(x_{j},x_{j-1})^2}{4(\delta t)^2}+(m^2-\frac16 R(x_j))\right) \delta t}$.
In the r.h.s. one recognizes the path integral of (\ref{G 1st quant formula}) written as a limit of finite-dimensional integrals (cf. \cite{Feynman-Hibbs}). We denoted $\delta t=t/N$.
}

Plugging the integral representation (\ref{G 1st quant formula}) of the Green's function into (\ref{Feynman weight (intro)}), one obtains the following integral formula for the weight of a Feynman graph:
\begin{multline}\label{Feynman weight 1q formalism}
    \Phi_\Gamma = \frac{\hbar^{|E|-|V|}}{|\mr{Aut}(\Gamma)|}\int_{0<t_1,\ldots,t_{|E|}<\infty} dt_1\cdots dt_{|E|} \int_{\ggamma\colon \Gamma_{t_1,\ldots,t_{|E|}}\ra M} \mc{D}\ggamma\, e^{-S^{1q}(\ggamma)} \cdot \\
    \cdot \prod_{v\in V} \left(-p_{\mr{val}(v)}\right).
\end{multline}
Here $\Gamma_{t_1,\ldots,t_{|E|}}$ is the graph $\Gamma$ seen as a metric graph with $t_e$ the length of edge $e$. The outer integral is over metrics on $\Gamma$, the inner (path) integral is over maps $\ggamma$ of $\Gamma$ to $M$, sending vertices to points of $M$ and edges to paths connecting those points; $S^{1q}(\ggamma)$ is understood as a sum of expressions in the r.h.s. of (\ref{S^1q}) over edges of $\Gamma$.

We refer to the formula (\ref{Feynman weight 1q formalism}), representing the weight of a Feynman graph via an integral over maps $\Gamma\ra M$ (or, equivalently, as a partition function of an auxiliary 1d sigma model on the graph $\Gamma$ with target $M$), as the ``first quantization formula.''\footnote{As opposed to the functional integral (\ref{Z pert expansion (intro)}) -- the ``second quantization formula.''}

\begin{remark}
   It is known that $\frac{1}{6}R$ appears in the quantum Hamiltonian of the quantum mechanical system of a free particle on a closed Riemannian manifold $M$, see for example \cite{Anderson-Driver, Woodhouse}. Here, the difference is that  $\frac{1}{6}R$  is introduced  in the classical action (\ref{S^1q}) so that $\Delta + m^2$ is the quantum Hamiltonian.  
\end{remark}

\begin{remark}
    One can absorb the determinant factor in the r.h.s. of (\ref{Z pert expansion (intro)}) into the sum over graphs, if we extend the set of graphs $\Gamma$ to allow them to have circle connected components (with no vertices), with the rule
    \begin{multline}\label{Feynman weight of S^1}
        \Phi_{S^1}=-\frac12 \log{\det}^\zeta (\Delta+m^2) =
        \frac12 \int_0^\infty \frac{dt}{t} \,\operatorname{tr}\left( e^{-t\wh{H}}\right) \\=
        \frac12 \int_0^\infty \frac{dt}{t} \int_{\gamma\colon S^1_t\ra M} \mc{D}\gamma \,e^{-S^{1q}(\gamma)},
    \end{multline}
where the integral in $t$ is understood in zeta-regularized sense; $S^1_t=\mathbb{R}/t\mathbb{Z}$ is the circle 
of perimeter $t$. 
\end{remark}

\subsubsection{Version with 1d gravity.}
Another way to write the formula (\ref{G 1st quant formula}) is to consider paths $\gamma$ parameterized by the standard interval $I=[0,1]$ (with coordinate $\sigma$) 
and  introduce an extra field -- the metric $\xi=\ul{\xi}(\sigma)(d\sigma)^2$ on $I$:
\begin{equation}\label{G 1q with 1d gravity}
    G(x,y)= \int_{(\mr{Met}(I)\times \mr{Map}(I,M)_{x,y})/\mr{Diff(I)}}\mc{D}\xi\, 
    \mc{D}\gamma \, e^{- \bar{S}^{1q} (\gamma,\xi)}.
\end{equation}
Here $\mr{Map}(I,M)_{x,y}$ is the space of paths $\gamma\colon I\ra M$ from $x$ to $y$; the exponent in the integrand is
\begin{multline}\label{S^1q bar}
    \bar{S}^{1q} (\gamma,\xi)= \int_I \left(\frac14 (\xi^{-1}\otimes \gamma^* g)(d\gamma,d\gamma) + m^2-\frac16 R(\gamma)\right) d\mr{vol}_\xi\\
    = 
    \int_0^1 \left(\frac{\dot{\gamma}^2}{4\ul{\xi}} + m^2-\frac16 R(\gamma)\right) \sqrt{\ul{\xi}}\, d\sigma,
\end{multline}
with $d\mr{vol}_\xi$ the Riemannian volume form of $I$ induced by $\xi$. Note that the action (\ref{S^1q bar}) is invariant under diffeomorphisms of $I$. One can gauge-fix this symmetry by requiring that the metric $\xi$ is constant on $I$, then one is left with integration over the length $t$ of $I$ w.r.t. the constant metric; this reduces the formula (\ref{G 1q with 1d gravity}) back to (\ref{G 1st quant formula}). 

In (\ref{G 1q with 1d gravity}), the Green's function of the original theory on $M$ is understood in terms of a 1d sigma-model on $I$ with target $M$ coupled to 1d gravity.
For a Feynman graph, similarly to (\ref{Feynman weight 1q formalism}), one has
\begin{equation}\label{Feynman graph 1q with 1dgrav}
    \Phi_\Gamma=\hbar^{|E|-|V|}\int_{(\mr{Met}(\Gamma)\times \mr{Map}(\Gamma,M))/\mr{Diff}(\Gamma)}\mc{D}\xxi\,\mc{D}\ggamma \, e^{-\bar{S}^{1q}(\ggamma,\xxi)} \prod_{v\in V} \left(-p_{\mr{val}(v)}\right)
\end{equation}
-- the partition function of 1d sigma model on the Feynman graph $\Gamma$ coupled to 1d gravity on $\Gamma$; $\bar{S}^{1q}(\ggamma,\xxi)$ is understood as a sum of terms (\ref{S^1q bar}) over the edges of $\Gamma$.\footnote{If one thinks of the quotient by $\mr{Diff}(\Gamma)$ in (\ref{Feynman graph 1q with 1dgrav}) as a stack, one can see the symmetry factor $\frac{1}{|\mr{Aut}(\Gamma)|}$ as implicitly contained in the integral. 
To see that, one should think of the quotient by $\mr{Diff}(\Gamma)$ as first a quotient by the connected component of the identity map $\mr{Diff}_0(\Gamma)$ and then a quotient by the mapping class group $\pi_0 \mr{Diff}(\Gamma)=\mr{Aut}(\Gamma)$.

One can interpret the $1/t$ factor in the r.h.s. of (\ref{Feynman weight of S^1}) in a similar fashion: $\mr{Diff}(S^1)$ splits as $\mr{Diff}(S^1,\mr{pt})\times S^1$ --diffeomorphisms preserving a marked point on $S^1$, plus an extra factor $S^1$ corresponding to rigid rotations (moving the marked point). It is that extra $S^1$ factor that leads to the factor $1/t=1/\mr{vol}(S^1)$  in the integration measure over $t$. The factor $1/2$ in (\ref{Feynman weight of S^1}) comes by the previous mechanism from the mapping class group $\mathbb{Z}_2$ of $S^1$ (orientation preserving/reversing diffeomorphisms up to isotopy).
}
\subsubsection{Heuristics on locality in the first quantization formalism}
Suppose that we have a decomposition $M = M_1 \cup_Y M_2$ of $M$ into two Riemannian manifolds $M_i$, with common boundary $Y$. 
Then locality of quantum field theory -- or, a fictional ``Fubini theorem'' for the (also fictional) functional integral -- suggests a gluing formula 
\begin{equation}
Z_M = ``\int_{C^\infty(Y)}\mathcal{D}\phi_Y\; Z_{M_1}(\phi_Y)Z_{M_2}(\phi_Y)," \label{eq: gluing formula introduction}
\end{equation}
where $Z_{M_i}$ is a functional of $C^\infty(Y)$, again formally given by a functional integral understood as a sum over Feynman graphs,\footnote{Again, for the purpose of this motivational section we are not discussing the problem of divergencies and renormalization. For $n = \dim M = 2$, a precise definition of all involved objects and a proof of the gluing formula \eqref{eq: gluing formula introduction} can be found in \cite{KMW}. }
\begin{multline}
    Z_{M_i}(\phi_Y) = ``\int_{\substack{\phi \in C^\infty(M) \\ \phi|_{Y} = \phi_Y}}\mc{D}\phi\; e^{-\frac{1}{\hbar} S(\phi)}"\\ 
    =  \left({\det}^\zeta(\Delta_{M_i,Y}+m^2)\right)^{-\frac12}\cdot \sum_\Gamma \Phi_\Gamma(\phi_Y), \label{eq: rel Z pert intro} 
\end{multline} 
where we are putting Dirichlet boundary conditions on the kinetic operator. Feynman graphs now have 
bulk and boundary vertices, $V = V^{\mr{bulk}} \sqcup V^\partial$, where boundary vertices are required to be univalent. The set of edges then decomposes as $E = \sqcup_{i=0}^2 E_i$ with where edges in $E_i$ have $i$ endpoints in $V^\partial$. The weight of a Feynman graph then is 
\begin{multline}
    \Phi_\Gamma(\phi_Y) = \\ =\frac{\hbar^{|E|-|V|}}{|\mr{Aut}(\Gamma)|}\int_{M_i^{\times V^\mr{bulk}}}d^nx_1\cdots d^nx_{|V^\mr{bulk}|}\int_{Y^{V^\partial}} d^{n-1}y_1\cdots d^{n-1}y_{|V^\partial|} \\
    \prod_{v\in V^\mr{bulk}}(-p_{\val(v)})\prod_{w\in V^\partial}\phi_Y(y_w) \\ 
    \prod_{(u,v) \in E_0}G_{M_i,Y}(x_u,x_v)\prod_{(u,v) \in E_1}E_{Y,M_i}(x_u,y_v)\prod_{(u,v)\in E_2}-\DN_{Y,M_i}(y_u,y_v), \label{eq: intro feynman weight rel}
\end{multline}
where $G_{M_i,Y}$ denotes the Green's function of the operator with Dirichlet boundary conditions, $E_{Y,M_i}(x,y) = \partial_{n_y}G(x,y)$ is the normal derivative of the Green's function at a boundary point $y \in Y$, and $\DN_{Y,M_i}$ is the Dirichlet-to-Neumann operator associated to the kinetic operator (see Section \ref{ss: gluing in Gaussian theory: comparison to continuum} for details). 

Let us sketch an interpretation 
of the gluing formula for the Green's function from the standpoint of the first quantization formalism.
Let $x \in M_1$, $y\in M_2$ and consider a path $\gamma \colon [0,t] \to M$ with $\gamma(0) = x$ and $\gamma(t) = y$. Then the decomposition $M = M_1 \cup_Y M_2$ induces a decomposition $\gamma = \gamma_1 * \gamma_2 * \gamma_3$ as follows (``$*$'' means concatenation of paths). Let $t_0 = 0$, $t_1 = \mr{min} \{t, \gamma(t) \in Y\}$ and $t_2 =\mr{max} \{t, \gamma(t) \in Y\}$ and $t_3 = t$, then $\gamma_i = \gamma|_{[t_{i-1},t_i]}$. This gives a decomposition 
$$P_M(x,y) = \bigsqcup_{u,v \in Y}P'_{M_1}(x,u) \times P_M(u,v) \times P'_{M_2}(v,y),$$ 
where we have introduced the notation $P_M(x,y)$ for the set of all paths from $x$ to $y$ (of arbitrary length) and $P'_{M_i}(x,u)$ for the set of all paths starting at $x \in M_i$ and ending at $u \in Y$ and not intersecting $Y$ in between. See Figure \ref{fig: intro path decomp}.
\begin{figure}[H]
    \centering
    \begin{tikzpicture}
    \draw (0,0) ellipse (5cm and 3cm); 
    \node[left] at (-3,0) {$x$}; 
    \node[right] at (2,1) {$y$}; 
    \draw[thick] (0,-3) -- (0,3); 
    \draw plot [smooth] coordinates {(-3,0)  (-1.5,1.5)  (0,2)  (1,1.5)  (0,1)  (-1,0) (0,-1.5)  (1,-1)  (2,1)};
    \node[above] at (-1.5,1.5) {$\gamma_1$};
    \node[left] at (-1,0) {$\gamma_2$};
    \node[right] at (1,-1) {$\gamma_3$};
    \draw[fill = black] (0,2) circle (2pt);
    \draw[fill = black] (-3,0) circle (2pt);
    \draw[fill = black] (2,1) circle (2pt);
    \draw[fill = black] (0,-1.5) circle (2pt);
    \node[right] at (-5,0) {$M_1$}; 
    \node[left] at (5,0) {$M_2$}; 
    \node[left] at (0,-2.7) {$Y$};
    \end{tikzpicture}
    \caption{Decomposing a path $\gamma = \gamma_1 * \gamma_2 * \gamma_3$.}
    \label{fig: intro path decomp}
\end{figure}
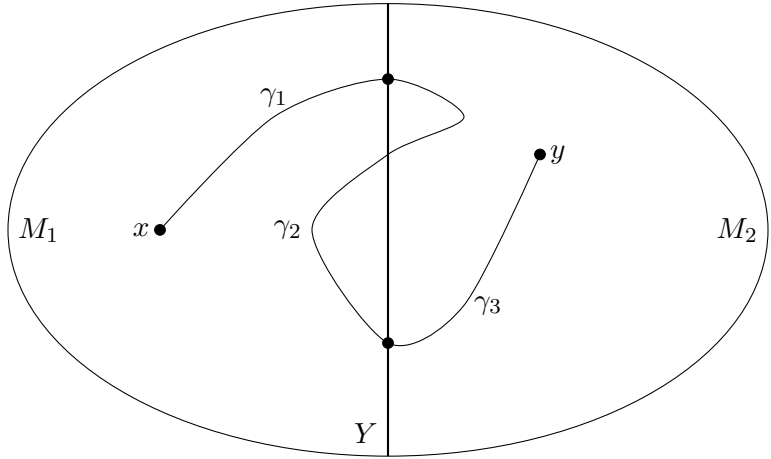

Paths of a specific length $t$ will be denoted $P^t_M(x,y)$, or $(P'_M)^t(x,u)$. Assuming a Fubini theorem for the path measure $\D\gamma$, additivity of the action suggests that we could rewrite \eqref{G 1st quant formula} as 
\begin{multline}
    G(x,y) ``=" \int_{Y \times Y}d^{n-1}u\, d^{n-1}v \int_{0}^\infty dt_1 \int_{\gamma_1\in (P'_{M_1})^{t_1}(x,u)} \D\gamma_1\, e^{-S^{\mr{1q}}(\gamma_1)} \\
    \int_{0}^\infty dt_2 \int_{\gamma_2 \in P_M^{t_2}(u,v)} \D\gamma_2\, e^{-S^{\mr{1q}}(\gamma_2)} 
    \int_{0}^\infty dt_3 \int_{\gamma_3 \in (P_{M_2})^{t_3}(v,y)} \D\gamma_3\, e^{-S^{\mr{1q}}(\gamma_1)}. \label{eq: intro decomp green's function}
\end{multline}  
Comparing with the gluing formula for the Green's function\footnote{See \cite[Proposition 4.2]{KMW} and Section \ref{subsubsec: gluing continuum} of the present paper for details.}
\begin{equation} G(x,y) = \int_{Y\times Y} d^{n-1}u\, d^{n-1}v\; E_{Y,M_1}(x,u)\, \varkappa_{Y,M}(u,v)\,  E_{Y,M_2}(v,y)\label{eq: gluing green intro}, 
\end{equation}
with $\varkappa_{Y,M} = (\DN_{Y,M_1}+\DN_{Y,M_2})^{-1}$ the inverse of the ``total'' Dirichlet-to-Neumann operator, suggests the following path integral formulae for the extension operator and $\varkappa$: 
\begin{align}
    E_{Y,M_i}(x,u) &= \int_{0}^\infty dt \int_{\gamma\in (P'_{M_i})^{t}(x,u)} \D\gamma\, e^{-S^{\mr{1q}}(\gamma)} ,\label{eq: intro path int E} \\ 
    \varkappa_{Y,M}(u,v) &= \int_{0}^\infty dt \int_{\gamma \in P_M^{t}(u,v)} \D\gamma \, e^{-S^{\mr{1q}}(\gamma)}. \label{eq: intro path int chi}
\end{align}
The results of our paper\footnote{Another reason to guess that formula is the fact that the integral kernel of the Dirichlet-to-Neumann operator is given by a symmetric normal derivative of the Green's function $\DN_{Y,M_i}(u,v) = -\partial_{n_u}\partial_{n_v}G_{M_i,Y}$ (in a regularized sense - see \cite[Remark 3.4]{KMW}), and formula \eqref{eq: intro path int E} for the first normal derivative of the Green's function.} actually suggest also the following path integral formula for the Dirichlet-to-Neumann operator: 
\begin{equation}
    \DN_{Y,M_i}(u,v)=
    \int_{0}^\infty dt\int_{\gamma \in (P''_{M_i})^t(u,v)}\D\gamma\,e^{-S^{\mr{1q}}(\gamma)}.
    \label{eq: intro path int D to N}
\end{equation}
Here $(P''_{M_i})^t(u,v)$ is the set of all paths from $\gamma \colon [0,t] \to M_i$ from $u \in Y$ to $v \in Y$ such that $\gamma(\tau) \notin Y$ for all $0 < \tau < t$. 

Assuming these formulae, we have again a ``first quantization formula'' for weights of Feynman graphs with boundary vertices
\begin{multline}\label{eq: relative Feynman weight 1q formalism}
    \Phi_\Gamma(\phi_Y) = \frac{\hbar^{|E|-|V|}}{|\mr{Aut}(\Gamma)|}\int_{0<t_1,\ldots,t_{|E|}<\infty} dt_1\cdots dt_{|E|}\\
    \int_{\ggamma\colon \Gamma_{t_1,\ldots,t_{|E|}}\ra M_i} \mc{D}\ggamma\,
    e^{-S^{1q}(\ggamma)} 
    \prod_{v\in V^\mr{bulk}} \left(-p_{\mr{val}(v)}\right)\prod_{v^\partial\in V^\partial}\phi_Y(\ggamma(v^\partial)).
\end{multline}
Here notation is as in \eqref{Feynman weight 1q formalism}, the only additional condition is that $\ggamma$ respects the type of edges in $\Gamma$, that is, for all $x \in \Gamma_{t_1,\ldots,t_{|E|}}$ we have $\ggamma(x) \in Y$ if and only if $x \in V^\partial$. 
\subsection{QFT on a graph. A guide to the paper.}
In this paper we study a toy (``combinatorial'' or ``lattice'') version of the scalar field theory (\ref{S (intro)}), where the spacetime manifold $M$ is replaced by a graph $X$, the scalar field $\phi$ is a function on the vertices of $X$ and the Laplacian in the kinetic operator is replaced by the graph Laplacian $\Delta_X$. 
I.e., the model is defined by the action
\begin{equation}
    S_X(\phi)=\sum_{v\in V_X} \frac12 \phi(v) \left((\Delta_X+m^2 \mr{Id}) \phi\right)(v)+p(\phi(v)),
\end{equation}
where $V_X$ is the set of vertices of $X$ and $p$ is the interaction potential (a polynomial of $\phi$), as before.

This model has the following properties. 
\begin{enumerate}[(i)]
    \item The ``functional integral'' over the space of fields is a finite-dimensional convergent integral (Section \ref{s: Scalar field theory on a graph}).
    \item The functional integral can be expanded in Feynman graphs, giving an asymptotic expansion of the nonperturbative partition function in powers of $\hbar$ (Section \ref{s: Interacting theory via Feynman diagrams}).
    \item Partition functions are compatible with unions of graphs over a subgraph (``gluing'') -- we see this as a graph counterpart of Atiyah-Segal functorial picture of QFT, with compatibility w.r.t. cutting-gluing $n$-manifolds along closed $(n-1)$-submanifolds. This functorial property of the graph QFT can be proven 
    \begin{enumerate}[(a)]
        \item 
    directly from the functional integral perspective (by a Fubini theorem argument) -- Section \ref{ss: functorial picture}, or 
    \item at the level of Feynman graphs (Section \ref{sec: cutting pert Z}). 
    \end{enumerate}
    The proof of functoriality at the level of Feynman graphs relies on the ``gluing formulae'' describing the behavior of Green's functions and determinants w.r.t. gluing of spacetime graphs (Section \ref{subsec:Gluing-in-Gaussian-theory}). These formulae are a combinatorial analog of known gluing formulae for Green's functions and zeta-regularized functional determinants on manifolds (Section \ref{ss: gluing in Gaussian theory: comparison to continuum}).
    \item The graph QFT admits a higher-categorical extension  which allows cutting-gluing along higher-codimension ``corners''\footnote{
    ``Corners'' in the graph setting are understood \`a la \v{C}ech complex, as multiple overlaps of ``bulk'' graphs.
    } (Section \ref{ss: extended graph QFT}), in the spirit of Baez-Dolan-Lurie extended TQFTs.
    \item The Green's function on a graph $X$ can be written as a sum over paths (Section \ref{s: path sum formulae for the propagator and determinant}, in particular Table \ref{tab: closed path formulas}), giving an analog of the formula (\ref{G 1st quant formula}); similarly, the determinant can be written as a sum over closed paths, giving an analog of (\ref{Feynman weight of S^1}).   This leads to
     a ``first-quantization'' representation of Feynman graphs, as a sum over maps $\Gamma\ra X$, sending vertices of $\Gamma$ to vertices of $X$ and sending edges of $\Gamma$ to paths on $X$ (connecting the images of the incident vertices) -- Section \ref{s: Interacting theory: first quantization formalism}. This yields a graph counterpart of the continuum first quantization formula (\ref{Feynman weight 1q formalism}).
     \item There are path sum formulae for the combinatorial extension (or ``Poisson'') operators and Dirichlet-to-Neumann operators (Section \ref{sec: path sums relative}, see in particular Table \ref{tab:my_label}), analogous to the path integral formulae \eqref{eq: intro path int E} and \eqref{eq: intro path int D to N}. 
     \item First quantization perspective gives a visual interpretation of the gluing formula for Green's functions and determinants on a graph $X=X'\cup_Y X''$ in terms of cutting the path into portions spent in $X'$ or in $X''$ (Section \ref{sec: gluing proof path sums}), and likewise an interpretation of the cutting-gluing of Feynman graphs (Section \ref{ss: 1st quantization cutting-gluing a Feynman graph}).
\end{enumerate}


\begin{remark}
A free (Gaussian) version of the combinatorial model we are studying in this paper was studied 
in \cite{RV}. Our twist on it is the deformation by a polynomial potential, the path-sum (first quantization) formalism, and the gluing formula for propagators (the BFK-like gluing formula for determinants was studied in \cite{RV}). 
\end{remark}

\subsection{Acknowledgements}
We thank Olga Chekeres, Andrey Losev and Donald R. Youmans for inspiring discussions on the first quantization approach in QFT. I.C., P.M. and K.W. would like to thank the Galileo Galilei Institute, where part of the work was completed, for hospitality.

\newpage
\pagestyle{fancyams}
\printunsrtglossary[type=main,title=Notations,nonumberlist,style=myNoHeaderStyle]

\section{Scalar field theory on a graph}\label{s: Scalar field theory on a graph}

Let $X$ be a finite graph. 
Consider the toy field theory on $X$ where fields are real-valued functions $\phi(v)$ on the set of vertices $V_X$, i.e., the space of fields is the space of 0-cochains on $X$ seen as a 1-dimensional CW complex,
$$F_X=C^0(X).$$
We define the action functional as 
\begin{equation}\label{S}
\begin{aligned}
    S_X(\phi)&=& \frac12( d\phi,d\phi ) +  \langle\mu,\frac{m^2}{2}\phi^2+ p(\phi)\rangle\\
    &=&\frac12 (\phi, (\Delta_X+m^2)\phi)+ \langle\mu, p(\phi)\rangle\\
   &=& \sum_{e\in E_X} \frac12 (\phi(v^1_e)-\phi(v^0_e))^2 + \sum_{v\in V_X} \left(\frac{m^2}{2}\phi(v)^2+ p(\phi(v))\right) .
\end{aligned}
\end{equation}
Here:
\begin{itemize}
    \item  $d\colon C^0(X) \ra C^1(X)$ is the cellular coboundary operator (we assume that $0$ cells carry $+$ orientation and $1$-cells carry some orientation -- the model does not depend on this choice).
    \item $(,)\colon \mr{Sym}^2 C^k(X)\ra \RR$ for $k=0,1$ is the standard metric, in which the cell basis is orthonormal.
    \item $\langle,\rangle$ is the canonical pairing of chains and cochains; $\mu$ is the 0-chain given by the sum of all vertices with coefficient $1$.\footnote{The 0-chain $\mu$ is an analog of the volume form on the spacetime manifold in our model. 
    If we want to consider the field theory on $X$ as a lattice approximation of a continuum field theory, we would need to scale the metric $(,)$ and the 0-chain $\mu$ appropriately with the mesh size. Additionally, one would need to add mesh-dependent counterterms to the action in order to have finite limits for the partition function and correlators.
    }
    \item $m>0$ is the fixed ``mass'' parameter.
    \item $p(\phi) 
    $ is a fixed polynomial
    (``potential''),
    \begin{equation}\label{p(phi)}
        p(\phi)=\sum_{k\geq 3} \frac{p_k}{k!}\phi^k.
    \end{equation}
    More generally, $p(\phi)$ can be a real analytic function.
    We will assume that $\frac{m^2}{2}\phi^2+ p(\phi)$ has a unique absolute minimum at $\phi=0$ and that it grows sufficiently fast\footnote{Namely, we want the integral $\int_\RR d\phi\, e^{-\frac{1}{\hbar}(\frac{m^2}{2}\phi^2+ p(\phi))}$ to converge for any $\hbar>0$.} at $\phi\ra \pm \infty$,
    so that the integral (\ref{Z}) converges measure-theoretically.
    \item $\Delta_X=d^T d\colon C^0(X)\ra C^0(X)$ is the graph Laplace operator on $0$-cochains; $d^T\colon C^1(X)\ra C^0(X)$ is the dual (transpose) map to the coboundary operator (in the construction of the dual, one identifies chains and cochains using the standard metric). The matrix elements of $\Delta_X$ in the cell basis, for $X$ a simple graph (i.e. without double edges and short loops), are
    $$ (\Delta_X)_{uv} =\left\{ 
    \begin{array}{ll}
          \mr{val}(v) & \mbox{if}\; u=v,  \\
         -1& \mbox{if  $u\neq v$ and $u$ is connected to $v$ by an edge}, \\
         0 & \mbox{otherwise},
    \end{array}
    \right. 
    $$
    where $\mr{val}(v)$ is the valence of the vertex $v$. 
    More generally, for $X$ not necessarily simple, one has 
    $$ 
    (\Delta_X)_{uv} =\left\{ 
    \begin{array}{ll}
          \mr{val}(v)-2\cdot \#\{\mbox{short loops $v\ra v$}\} & \mbox{if}\; u=v , \\
         - \#\{ \mbox{edges $u\ra v$}\}& \mbox{if  $u\neq v$}.
    \end{array}
    \right. 
    $$
\end{itemize}

We will be interested in the partition function
\begin{equation}\label{Z}
    Z_X=\int_{F_X} D\phi\; e^{-\frac{1}{\hbar}S_X} ,
\end{equation}
where $$D\phi=  
\prod_{v\in V_X} \frac{d\phi(v)}{\sqrt{2\pi\hbar}}$$ 
is the ``functional integral measure'' on the space of fields $F_X$ (in this case, just the Lebesgue measure on a finite-dimensional space); $\hbar>0$ is the parameter of quantization -- the ``Planck constant.''\footnote{Or one can think of $\hbar$ as ``temperature'' if one thinks of (\ref{Z}) as a partition function of statistical mechanics with $S$ the energy of a state $\phi$.}   
The integral in the r.h.s. of (\ref{Z}) is absolutely convergent.  
One can also consider correlation functions
\begin{equation}\label{correlator}
\langle \phi(v_1) \cdots \phi(v_n) \rangle = \frac{1}{Z_X}\int_{F_X} D\phi\; e^{-\frac{1}{\hbar}S_X} \phi(v_1)\cdots \phi(v_n).
\end{equation} 

\begin{remark}
    We stress that in this section we consider the \emph{nonperturbative} partition functions/correlators and $\hbar$ is to be understood as an actual positive number, unlike in the setting of perturbation theory (Section \ref{s: Interacting theory via Feynman diagrams}) where $\hbar$ becomes a formal parameter.
\end{remark}

\begin{remark}
    In this paper we use the Euclidean QFT convention for our (toy) functional integrals, with the integrand $e^{-\frac{1}{\hbar} S}$ instead of $e^{\frac{i}{\hbar}S}$, in order to have a better measure-theoretic convergence situation. The first convention leads to absolutely convergent integrals whereas the second leads to conditionally convergent oscillatory integrals.
\end{remark}

\subsection{Functorial picture}\label{ss: functorial picture}

One can interpret our model in the spirit of Atiyah-Segal functorial picture of QFT, 
as a (symmetric monoidal) functor 
\begin{equation}\label{Z as a functor}
\GCob \xrightarrow{(\HH,Z)} \mr{Hilb}  
\end{equation}
from the spacetime category\footnote{This terminology is taken from \cite{R}.} of graph cobordisms to the category of  
Hilbert spaces and Hilbert-Schmidt operators.

Here in the source category 
$\GCob$  is as follows:
\begin{itemize}
    \item The objects are graphs $Y$.
    \item A morphism from $Y_\ii$  to $Y_\oo$ is a graph $X$ which contains $Y_\ii$ and $Y_\oo$ as disjoint subgraphs. We will write $Y_\ii\xra{X}Y_\oo$ and refer to $Y_\ii,Y_\oo$ as ``ends'' (or ``boundaries'') of $X$, or we will say that $X$ is a ``graph cobordism'' between $Y_\ii$ and $Y_\oo$.
    \item The composition is given by unions of graphs with out-end of one cobordism identified with the in-end of the subsequent one:
    \begin{equation}\label{gluing of cobs}
    (Y_3\xla{X''}Y_2)\circ(Y_2\xla{X'}Y_1) = Y_3 \xla{X} Y_1, 
    \end{equation}
    where 
    $$X=X'\cup_{Y_2} X''.$$
    \item The monoidal structure is given by disjoint unions of graphs.
\end{itemize}
All graphs are assumed to be finite. As defined, $\GCob$ does not have unit morphisms (as usual for spacetime categories in non-topological QFTs); by abuse of language, we still call it a category.

The target category $\mr{Hilb}$ has as its objects  Hilbert spaces $\HH$ over $\CC$;\footnote{
 Alternatively (since we do not put $i$ in the exponent in the functional integral), one can consider Hilbert spaces over $\mathbb{R}$.
} the morphisms are Hilbert-Schmidt operators; 
the composition is composition of operators. The monoidal structure is given by tensor products (of Hilbert spaces and of operators).

The functor (\ref{Z as a functor}) is constructed as follows. For an end-graph $Y\in \mr{Ob}(\GCob)$, the associated vector space is 
\begin{equation}\label{H_Y}
\HH_Y=L^2(C^0(Y))
\end{equation}
-- the space of complex-valued   square-integrable 
functions on the vector space $C^0(Y)=\RR^{V_Y}$. 

For a graph cobordism $Y_\ii\xra{X} Y_\oo$, the associated operator 
$Z_X\colon \HH_{Y_\ii}\ra \HH_{Y_\oo}$ is
\begin{equation}\label{Z of a cob}
Z_X\colon \Psi_\ii \mapsto 
\Big(\Psi_\oo\colon \phi_\oo 
\mapsto \int_{F_{Y_\ii}}D\phi_\ii\;
\langle \phi_\oo | Z_X |\phi_\ii \rangle
\Psi_\ii(\phi_\ii)\Big) 
\end{equation} 
with the integral kernel
\begin{equation}\label{Z int kernel}
    \langle \phi_\oo | Z_X |\phi_\ii \rangle \colon=
\int_{F_X^{\phi_\ii,\phi_\oo}} [D\phi]^{\phi_\ii,\phi_\oo}  \; e^{-\frac{1}{\hbar}(S_X(\phi) -\frac12 S_{Y_\ii}(\phi_\ii)-\frac12 S_{Y_\oo}(\phi_\oo))} .
\end{equation}
Here 
\begin{itemize}
    \item $F_X^{\phi_\ii,\phi_\oo}$ is the space of fields on $X$ subject to boundary conditions $\phi_\ii,\phi_\oo$ imposed on the ends, i.e., it is the fiber of the evaluation-at-the-ends map 
$$F_X \ra F_{Y_\ii}\times F_{Y_\oo}$$
over the pair $(\phi_\ii,\phi_\oo)$. 
\item The measure 
$$[D\phi]^{\phi_\ii,\phi_\oo}=\prod_{v\in V_X\setminus (V_{Y_\ii}\sqcup V_{Y_\oo})} \frac{d\phi(v)}{\sqrt{2\pi\hbar}}$$ stands for the ``conditional functional measure'' on fields subject to boundary conditions. 
\end{itemize}

We will also call the expression (\ref{Z int kernel}) the partition function on the graph $X$ ``relative'' to the ends $Y_\ii$, $Y_\oo$, or just the partition function relative to the ``boundary'' subgraph $Y=Y_\ii\sqcup Y_\oo$, if the distinction between ``in'' and ``out'' is irrelevant. In the latter case we will use the notation $Z_{X,Y}(\phi_Y)$, with $\phi_Y=(\phi_\ii,\phi_\oo)$.

\begin{proposition}\label{prop: functoriality}
The assignment (\ref{H_Y}), (\ref{Z of a cob}) is a functor of monoidal categories.
\end{proposition}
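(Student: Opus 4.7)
The plan is to check the three conditions making $(Y\mapsto \HH_Y,\,X\mapsto Z_X)$ a symmetric monoidal functor: (a) for each graph cobordism $Y_\ii\xra{X}Y_\oo$, the kernel (\ref{Z int kernel}) defines a Hilbert--Schmidt operator $\HH_{Y_\ii}\to \HH_{Y_\oo}$; (b) $Z$ respects composition (\ref{gluing of cobs}); (c) $Z$ takes disjoint unions of graphs to tensor products. Since $\GCob$ has no identity morphisms, no unit check is needed.

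For (a), the kernel (\ref{Z int kernel}) is an ordinary Lebesgue integral over the finite-dimensional affine subspace $F_X^{\phi_\ii,\phi_\oo}\subset F_X$. The quadratic part of $S_X-\tfrac12 S_{Y_\ii}-\tfrac12 S_{Y_\oo}$ is still positive definite (after integrating out the bulk, its Hessian in the boundary variables is half the total Dirichlet--to--Neumann form), and the growth assumption on $\tfrac{m^2}{2}\phi^2+p(\phi)$ ensures that the nonlinear piece does not spoil this Gaussian domination. Hence the kernel decays like a Gaussian in $(\phi_\ii,\phi_\oo)$, lies in $L^2(F_{Y_\ii}\times F_{Y_\oo})$, and defines a Hilbert--Schmidt operator.

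For (b), the main step, consider $Y_1\xra{X'}Y_2\xra{X''}Y_3$ and set $X=X'\cup_{Y_2}X''$. The kernel of $Z_{X''}\circ Z_{X'}$ at $(\phi_1,\phi_3)$ is
\begin{equation*}
\int_{F_{Y_2}}D\phi_2\,\langle\phi_3|Z_{X''}|\phi_2\rangle\,\langle\phi_2|Z_{X'}|\phi_1\rangle.
\end{equation*}
All integrands being absolutely integrable and nonnegative, Fubini collapses this into a single integral over $F_X^{\phi_1,\phi_3}$ of the exponential of
\begin{equation*}
-\tfrac{1}{\hbar}\bigl[(S_{X'}-\tfrac12 S_{Y_1}-\tfrac12 S_{Y_2})+(S_{X''}-\tfrac12 S_{Y_2}-\tfrac12 S_{Y_3})\bigr].
\end{equation*}
Since the edges and vertices of $Y_2$ appear once in $S_X$ but twice in $S_{X'}+S_{X''}$, we have $S_{X'}+S_{X''}-S_{Y_2}=S_X$; the two copies of $-\tfrac12 S_{Y_2}$ cancel the single $+S_{Y_2}$ and leave exactly $S_X-\tfrac12 S_{Y_1}-\tfrac12 S_{Y_3}$ in the exponent. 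The measures also fit together: $D\phi_2$ supplies precisely the $V_{Y_2}$-factors missing from the bulk measures on $X'$ and $X''$ to assemble the bulk measure on $X$ relative to $Y_1\sqcup Y_3$. The iterated integral therefore equals $\langle\phi_3|Z_X|\phi_1\rangle$.

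For (c), if $X=X_1\sqcup X_2$ and $Y_{\ii/\oo}=Y_{\ii/\oo,1}\sqcup Y_{\ii/\oo,2}$, then the field space, the measure, and the action $S_X$ all split as products, so the kernel factorizes as $\langle\phi_{\oo,1}|Z_{X_1}|\phi_{\ii,1}\rangle\cdot\langle\phi_{\oo,2}|Z_{X_2}|\phi_{\ii,2}\rangle$. Under the natural unitary isomorphism $L^2(C^0(Y_1)\oplus C^0(Y_2))\cong L^2(C^0(Y_1))\otimes L^2(C^0(Y_2))$ this translates into $Z_{X_1\sqcup X_2}=Z_{X_1}\otimes Z_{X_2}$, and the symmetric monoidal coherence axioms hold on the nose because disjoint union is strictly symmetric monoidal on $\GCob$. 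The only nontrivial step is the Fubini argument in (b), whose success hinges entirely on the ``half-boundary'' subtraction in (\ref{Z int kernel}), engineered precisely to absorb the double counting of edges and vertices of the shared subgraph upon gluing.
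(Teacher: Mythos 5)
Your proof is correct and follows essentially the same route as the paper's: Fubini/Tonelli together with the additivity identity $S_{X'}+S_{X''}-S_{Y_2}=S_X$, which is exactly how the two half-boundary subtractions cancel against the double-counted interface. (The paper leaves the Hilbert--Schmidt and monoidal checks implicit; your parenthetical that the boundary Hessian is ``half the total Dirichlet-to-Neumann form'' is slightly imprecise --- it is $\DN_{Y,X}-\tfrac12 K_Y$, which is merely bounded below by $\tfrac12\DN_{Y,X}$ --- but this does not affect the argument.)
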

\begin{proof}
The main point to check is that composition is mapped to composition. It follows from Fubini theorem, locality of the integration measure (that it is a product over vertices of local measures) and additivity of the action:
\begin{equation} \label{S additivity}
    S_{X}(\phi)=S_{X'}(\phi|_{X'})+S_{X''}(\phi|_{X''})-S_{Y_2}(\phi|_{Y_2})
\end{equation}
in the notations of (\ref{gluing of cobs}). Indeed, it suffices to prove
\begin{equation}\label{functoriality eq}
    \int_{F_{Y_2}} D\phi_2\;\langle \phi_3 | Z_{X''} |\phi_2 \rangle \langle \phi_2 |Z_{X'}|\phi_1\rangle  \stackrel{!}{=}  
    \langle \phi_3 | Z_X |\phi_1 \rangle
\end{equation}
-- again, we are considering the gluing of graph cobordisms as in (\ref{gluing of cobs}). The l.h.s. is
\begin{multline*}
    \int_{F_{Y_2}} D\phi_2 \int_{F_{X'}^{\phi_1,\phi_2}} [D\phi']^{\phi_1,\phi_2}  \int_{F_{X''}^{\phi_2,\phi_3}} [D\phi'']^{\phi_2,\phi_3} 
    \exp \Big( -\frac{1}{\hbar}\Big(\big(-\frac12 S_{Y_1}(\phi_1)+\\
    +\underbrace{S_{X'}(\phi')-\frac12 S_{Y_2}(\phi_2)\big)+\big(-\frac12 S_{Y_2}(\phi_2)+S_{X''}(\phi'')}_{S_X(\phi)}-\frac12 S_{Y_3}(\phi_3)\big) \Big) \Big)\\
    =\int_{F_X^{\phi_1,\phi_3}} [D\phi]^{\phi_1,\phi_3} e^{-\frac{1}{\hbar}(S_X(\phi)-\frac12 S_{Y_1}(\phi_1)-\frac12 S_{Y_2}(\phi_3))},
\end{multline*}
which proves (\ref{functoriality eq}). Here we understood that $\phi$ is a field on the glued cobordism $X$ restricting to $\phi',\phi''$ on $X',X''$, respectively. Compatibility with disjoint unions is obvious by construction.
\end{proof}

\begin{remark}
One can interpret the correlator (\ref{correlator}) as the partition function of $X$ seen as a cobordism $\{v_1,\ldots,v_n\}\xra{X} \varnothing$  applied to the state $\phi(v_1)\otimes\cdots \otimes \phi(v_n)\in \HH_{\{v_1,\ldots,v_n\}}$.
\end{remark}

\begin{remark}\label{rem: graph-continuum QFT dictionary}
   The combinatorial model we are presenting is intended to be an analog (toy model) of the continuum QFT, according to the dictionary of Table \ref{table 1}. 
\begin{center}
   \begin{table}[H]
    \bgroup
    \def\arraystretch{1.3}
    \begin{adjustbox}{max width=1.1\textwidth,center}
   \begin{tabular}{c|c}
        combinatorial QFT & continuum QFT  \\ \hline
        graph $X$ & closed spacetime $n$-manifold $M$; \\ 
          field $\phi\colon V_X\ra \RR$ & scalar field $\phi\in C^\infty(M)$; \\
        action (\ref{S}) & action $S(\phi)=\int_M \frac12 d\phi\wedge *d\phi +(\frac{m^2}{2}\phi^2+p(\phi))d\mr{vol}$ \\
        & $ = \int_M (\frac12 \phi (\Delta+m^2)\phi+p(\phi) ) d\mr{vol}$;\\
        partition function (\ref{Z}) & functional integral on a closed manifold;\\ \hline
        graph cobordism $Y_\ii\xra{X} Y_\oo$ &  $n$-manifold $M$ with in/out-boundaries \\ & being closed $(n-1)$-manifolds $\gamma_\ii,\gamma_\oo$;\\
        gluing/cutting of graph cobordisms 
        & gluing/cutting of smooth $n$-cobordisms;\\
        matrix element (\ref{Z int kernel}) & functional integral \\ & with boundary conditions $\phi_\ii,\phi_\oo$.
   \end{tabular}
   \end{adjustbox}
   \egroup
   \caption{Comparison between toy model and continuum QFT.}
   \label{table 1}
   \end{table}
\end{center}
\end{remark}

When we want to emphasize that a graph $X$ is not considered as a cobordism (or equivalently $X$ is seen as a cobordism $\varnothing \xra{X} \varnothing$), we will call $X$ a ``closed'' graph (by analogy with closed manifolds).


\subsection{Aside: ``QFT with corners'' (or ``extended QFT'') picture}
\label{ss: extended graph QFT}
Fix any $n\geq 1$. We will describe a (tautological) extension 
of the functorial picture above for our graph model
as an $n$-extended QFT (with gluing/cutting along ``corners'' of codimension up to $n$), in the spirit of Baez-Dolan-Lurie program \cite{Baez-Dolan}, \cite{Lurie} of extended topological quantum field theories.\footnote{\label{footnote: no cob hypothesis}
Scalar field theory is not a topological field theory, so, e.g., we should not expect (and in fact do not have) an analog of the cobordism hypothesis here: the theory is not recovered from its value on highest-codimension corner.} One has a functor of symmetric monoidal $n$-categories
\begin{equation}\label{extended QFT functor}
\GCob^n \xra{(\HH,Z)} \T^n .
\end{equation}
We proceed to describe its ingredients.

\subsubsection{Source $n$-category}
The source $n$-category $\GCob^n$ is as follows.
\begin{itemize}
    \item Objects (a.k.a. $0$-morphisms) are graphs $X^{[0]}$ (the index in brackets is to emphasize that this is a graph at categorical level $0$).
    \item A 1-morphism between objects (graphs) $Y_1^{[0]},Y_2^{[0]}$ is a graph $X^{[1]}$ together with graph embeddings of $Y_1^{[0]},Y_2^{[0]}$ into $X^{[1]}$ with disjoint images.
    \item For $2\leq k\leq n$, a $k$-morphism between two $(k-1)$-morphisms $Y_1^{[k-1]},Y_2^{[k-1]}\in \mr{Mor}_{k-1}(A_1^{[k-2]},A_2^{[k-2]})$ is a graph $X^{[k]}$ equipped with embeddings of $Y_{1,2}^{[k-1]}$ satisfying the following ``maximal disjointness'' assumption: 
    in the resulting
     diagram of graph embeddings
    $$
    \xymatrix{X^{[k]} &  \\
    Y_1^{[k-1]} \ar[u] & Y_2^{[k-1]} \ar[ul] \\
    A_1^{[k-2]} \ar[u] \ar[ur] & A_2^{[k-2]} \ar[ul] \ar[u]
    }
    $$
     the intersection of images of $Y_1^{[k-1]},Y_2^{[k-1]}$ in $X^{[k]}$ is the union of images of $A_1^{[k-2]}$ and $A_2^{[k-2]}$. 
\end{itemize}


\begin{example}
    Figure \ref{fig: 2 mor} below shows  an example of a 2-morphism $\vcenter{\hbox{\includegraphics[scale=0.4]{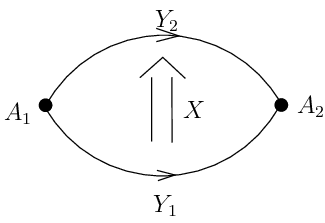}}}$ in $\GCob^2$:
\begin{figure}[H]
\includegraphics[scale=0.8]{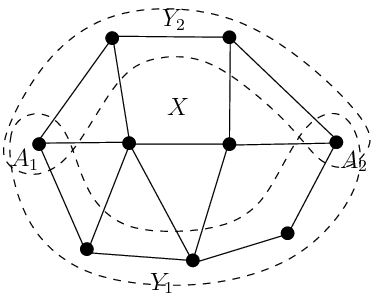}
\caption{An example of a 2-morphism in $\GCob^2$.}
\label{fig: 2 mor}
\end{figure}
\end{example}

The monoidal structure in $\GCob^n$ is given by disjoint unions at each level, and the composition is given by unions (pushouts) $X_1^{[k]}\cup_{Y^{[k-1]}} X_2^{[k]}$.

\begin{remark}
In $\GCob^n$ we only consider ``vertical'' compositions of $k$-morphisms, i.e., one can only glue two level $k$ graphs over a level $k-1$ graph, not over any graph at level $k'<k$ (otherwise, the composition would fail the maximal disjointness assumption). One might call this structure a ``partial'' $n$-category\footnote{
Or a ``Pickwickian $n$-category.'' (Cf. ``He had used the word in its Pickwickian sense\ldots He had merely considered him a humbug in a Pickwickian point of view.'' Ch. Dickens, Pickwick Papers.)
} (but by abuse of language we suppress ``partial''). On a related point, as in Section \ref{ss: functorial picture}, there are no unit $k$-morphisms.
\end{remark}

\subsubsection{Target $n$-category}
The target $n$-category $\T^n$ is as follows.
\begin{itemize}
    \item Objects are commutative unital algebras over $\CC$ (``CUAs'') $\HH^{[0]}$.
    \item For $1\leq k\leq n-1$, a $k$-morphism between CUAs $\HH_1^{[k-1]},\HH_2^{[k-1]}$ is a CUA $\HH^{[k]}$ equipped with injective morphisms of unital algebras of $\HH_1^{[k-1]},\HH_2^{[k-1]}$ into $\HH^{[k]}$.
    \item An $n$-morphism between CUAs $\HH_{1,2}^{[n-1]}$ is a linear map $Z\colon \HH_1^{[n-1]}\ra \HH_2^{[n-1]}$. This map is not required to be an algebra morphism.
\end{itemize}
The monoidal structure is given by tensor product at all levels. The composition of $n$-morphisms is the composition of linear maps. The composition of $k$-morphisms for $k<n$ is given by the balanced tensor product of algebras over a subalgebra,
$$ \HH_1^{[k]}\otimes_{\HH^{[k-1]}} \HH_2^{[k]}. $$

\subsubsection{The QFT functor} The functor (\ref{extended QFT functor}) is defined as follows.
\begin{itemize}
    \item For $0\leq k\leq n-1$, a graph $X^{[k]}\in \mr{Mor}_k(Y_1^{[k-1]},Y_2^{[k-1]})$ is mapped to the commutative unital algebra of functions on $0$-cochains on the graph 
    $$ \HH^{[k]}(X^{[k]}): =C^\infty(C^0(X^{[k]})), $$
    with algebra maps from functions on 0-cochains of $Y_1^{[k-1]},Y_2^{[k-1]}$ induced by graph inclusions $i_{1,2}\colon Y_{1,2}^{[k-1]}\ra X^{[k]}$ :
    \begin{equation}\label{i^**}
        (i_{1,2}^*)^*\colon C^\infty(C^0(Y_{1,2}^{[k-1]}))\ra C^\infty(C^0(X^{[k]})).
    \end{equation}
    Here $i_{1,2}^*$ is the restriction of a 0-cochain (field) from $X^{[k]}$ to $Y_{1,2}^{[k-1]}$, and $(i_{1,2}^*)^*$ is the pullback by this restriction map.
    \item A graph $X^{[n]}\in \mr{Mor}_n(Y_\ii^{[n-1]},Y_\oo^{[n-1]})$
    is mapped to the linear map $Z(X^{[n]})\colon \HH^{[n-1]}(Y_\ii^{[n-1]})\ra \HH^{[n-1]}(Y_\oo^{[n-1]})$  defined by a variant of (\ref{Z of a cob}), (\ref{Z int kernel}) allowing the in- and out-boundaries to intersect:
    \begin{multline} \label{Z with corners}
    Z(X)\colon \Psi_\ii \mapsto \Big(\Psi_\oo \colon \phi_\oo \mapsto \\
    \mapsto 
    \int_{F_X^{\phi_\oo} } [D \phi]^{\phi_\oo} e^{-\frac{1}{\hbar}(S_X(\phi)-\frac12 S_{ Y_\oo}(\phi_\oo)-\frac12 S_{Y_\ii}(\phi|_{Y_\ii}))} \Psi_\ii(\phi|_{Y_\ii})
    \Big).
    \end{multline}
    Here we suppressed the superscripts $[\cdots]$ for $X,Y 
    $ to lighten the notation. 
\end{itemize}
It is a straightforward check (by repeating  the argument of Proposition \ref{prop: functoriality}) that formula (\ref{Z with corners}) is compatible with gluing (vertical composition)  of $n$-morphisms 
in $\GCob^n$, see Figure \ref{fig: 2 mor comp}.
    \begin{figure}[H]
    \includegraphics[scale=0.7]{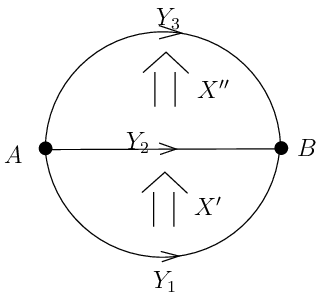} 
    \caption{Composition of 2-morphisms.}
    \label{fig: 2 mor comp}
    \end{figure}

\begin{remark}\label{rem: extended cob}
    The category $\GCob^n$ should be thought of as an (oversimplified) 
    toy model of Baez-Dolan-Lurie fully extended smooth cobordism  $(\infty,n)$-category, 
    where graphs $X^{[k]}$ model $k$-dimensional smooth strata.

    In this language, if we relabel our graphs by categorical co-level, $X^{\{k\}}\colon= X^{[n-k]}$, we should think of graphs $X^{\{0\}}$ as ``bulk,'' graphs $X^{\{1\}}$ as ``boundaries,'' graphs $X^{\{2\}}$ as ``codimension 2 corners,'' etc.
\end{remark}

\begin{remark}
We also remark that one can consider a different (simpler) version of the source category -- iterated cospans of graphs w.r.t. graph inclusions, without any disjointness conditions. While this $n$-category is simpler to define and admits non-vertical compositions, it has less resemblance to the  extended cobordism category, as here the intersection of strata of codimensions  $k$ and $l$ can have codimension less than $k+l$.
\end{remark}

\begin{remark}
Note that the most interesting part of the theory is concentrated in the top component of the functor (partition functions $Z$) -- e.g., the interaction potential $p(\phi)$ only affects it, not the spaces of states $\HH^{[k]}$. This is why we emphasize (cf. footnote \ref{footnote: no cob hypothesis}) that there is no analog of the cobordism hypothesis in our model: one cannot recover the entire functor (in particular, the top component) from its bottom component. 

This situation is similar to another example of an extended geometric (non-topological) QFT -- the 2D Yang-Mills theory. In this case, the area form affects the QFT 2-functor only at the top-dimension stratum (and thus ``obstructs'' the cobordism hypothesis), cf. \cite{IM}.
\end{remark}


\begin{remark}
The case $n=1$ of the formalism of this section is slightly (inconsequentially) different from the non-extended functorial picture of Section \ref{ss: functorial picture}, with target category $\mathsf{T}^1$ instead of $\mr{Hilb}$, with boundaries mapped to algebras of smooth functions on boundary values of fields rather than Hilbert spaces of $L^2$  functions of boundary values of fields. 

In the extended setting, we cannot use $L^2$ functions for two reasons: (a) they don't form an algebra and (b) the pullback (\ref{i^**}) of a function of field values on codim=2 corner vertices to a codim=1 boundary is generally not square integrable. 
(i.e. our QFT functor applied to the inclusion of a corner graph into the boundary graph does not land in the $L^2$ space).
\end{remark}

\section{Gaussian theory} 

\subsection{Gaussian theory on a closed graph}
Consider the free case of the model (\ref{S}), with the interaction $p(\phi)$ set to zero. The action is quadratic
\begin{equation}\label{S Gaussian}
S_X(\phi)=\frac12 ( \phi, K_X \phi ),
\end{equation}
where the kinetic operator is
$$ K_X\colon= \Delta_X+m^2 $$
-- it is a positive self-adjoint operator on $F_X$.
Let us denote its inverse
$$G_X\colon = (K_X)^{-1}$$
-- the ``Green's function'' or ``propagator;'' we will denote matrix elements of $G_X$ in the basis of vertices by $G_X(u,v)$, for $u,v\in V_X$.

The partition function (\ref{Z}) for a closed graph $X$ is the Gaussian integral
\begin{equation}\label{Gaussian Z}
Z_X=\int_{F_X} D\phi\; e^{-\frac{1}{\hbar}S_X(\phi)} = \det(K_X)^{-\frac12}.
\end{equation} 
The correlator (\ref{correlator}) is given by Wick's lemma, as a moment  of the Gaussian measure:
$$ 
\langle \phi(v_1)\cdots \phi(v_{2m}) \rangle = \hbar^m \sum_{\mr{partitions 
}\; \{1,\ldots,2m\}=\cup_{i=1}^m \{a_i,b_i\}} G_X(v_{a_1},v_{b_1})\cdots G_X(v_{a_m},v_{b_m}).
$$

\subsubsection{Examples}
\begin{example}\label{ex: path graph N=3}
Consider the  graph $X$ shown in Figure \ref{fig: line 3} below:
\begin{figure}[H]
    \centering
    \includegraphics{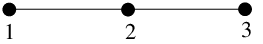}
    \caption{A line graph on 3 vertices.}
    \label{fig: line 3}
\end{figure}

The kinetic operator is
$$
K_X=
\left(
\begin{array}{ccc}
    1+m^2 & -1 & 0  \\
     -1& 2+m^2 & -1 \\
     0 &-1 & 1+m^2
\end{array}
\right).
$$
Its determinant is:
\begin{equation} \label{ex path3 det}
\det K_X=m^2(1+m^2)(3+m^2)
\end{equation}
and the inverse is
\begin{equation} \label{ex path3 G_X}
    G_X=\frac{1}{m^2(1+m^2)(3+m^2)}
\left(
\begin{array}{ccc}
    1+3m^2+m^4 & 1+m^2 & 1 \\
     1+m^2& (1+m^2)^2 & 1+m^2 \\
     1 &1+m^2 & 1+3m^2 + m^4
\end{array}
\right).
\end{equation}

\end{example}

\begin{example}\label{ex: path graph}
    Consider the line graph of length $N$:
    \begin{figure}[H]
    \centering
\includegraphics[scale=0.75]{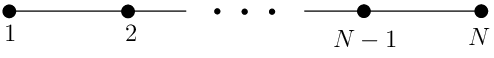}
    \caption{A line graph of length $N$.}
    \label{fig: path graph}
    \end{figure} 
    
The kinetic operator is the tridiagonal matrix
$$ 
K_X=\left(
\begin{array}{ccccc}
     1+m^2& -1 && & \\
     -1& 2+m^2 &-1 & & \\
     &-1 & \ddots & \ddots & \\
     &&\ddots& 2+m^2 & -1 \\
     &&& -1 & 1+m^2
\end{array}
\right).
$$
The matrix elements of its inverse are:\footnote{\label{footnote: path graph -- Neumann bc}
One finds this by solving the finite difference equation $-G(i+1,j)+(2+m^2)G(i,j)-G(i-1,j)=\delta_{ij}$, using the ansatz $G(i,j)=A_+ e^{\beta i} + A_- e^{-\beta i} $ for $i\leq j$ and $G(i,j)=B_+ e^{\beta i} + B_- e^{-\beta i} $ for $i\geq j$, with $A_\pm, B_\pm$ some coefficients depending on $j$. One imposes single-valuedness (``continuity'') at $i=j$ and ``Neumann boundary conditions'' 
$G(0,j)=G(1,j)$, $G(N,j)=G(N+1,j)$,
which -- together with the original equation at $i=j$ -- determines uniquely the solution. 
One can obtain the determinant from the propagator using the property
$\frac{d}{dm^2}\log\det K_X=\operatorname{tr} K_X^{-1}\frac{d}{dm^2} K_X=\sum_{i=1}^N G_X(i,i)$.
}
\begin{equation}\label{path graph G}
G_X(i,j) 
=\frac{\cosh\beta(N-|i-j|)+\cosh\beta (N+1-i-j)}{2\sinh \beta\, \sinh \beta N}
,\quad 1\leq i,j\leq N,
\end{equation}
where 
$\beta$ is related to $m$ by
\begin{equation}\label{q in path graph}
\sinh \frac{\beta}{2}=\frac{m}{2}.
\end{equation}
The determinant is:
\begin{equation} \label{N-N graph det}
\det K_X
=2\tanh \frac{\beta}{2}\, \sinh \beta N.
\end{equation}
\end{example}

\begin{example}\label{ex: circle graph}
    Consider the circle graph with $N$ vertices shown in Figure \ref{fig: Circle N} below:  
\begin{figure}[H]
    \centering
    \includegraphics[scale=0.75]{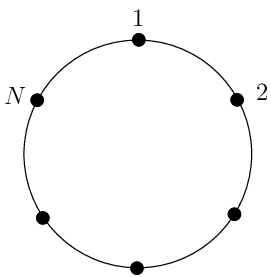}
    \caption{A circle graph with $N$ vertices.}
    \label{fig: Circle N}
\end{figure}    The kinetic operator is:
    $$ 
K_X=\left(
\begin{array}{ccccc}
     2+m^2& -1 && & -1 \\
     -1& 2+m^2 &-1 & & \\
     &-1 & \ddots & \ddots & \\
     &&\ddots& 2+m^2 & -1 \\
     -1 &&& -1 & 2+m^2
\end{array}
\right).
$$
(We are only writing the nonzero entries.) Its inverse is given by 
\begin{equation}\label{G circle graph}
G_X(i,j)  
=\frac{\cosh \beta (\frac{N}{2}-|i-j|)}{2 \sinh \beta\, \sinh \frac{\beta N}{2}}
,\quad 1\leq i,j\leq N. 
\end{equation}
Here 
$\beta$ is as in (\ref{q in path graph}). The determinant is:
\begin{equation}\label{det circle graph}
\det K_X 
= 4 \sinh^2 \frac{\beta N}{2}.
\end{equation}
For instance, for $N=3$ we obtain
\begin{equation}\label{eq: G circle graph N=3}
         G_X = \frac{1}{m^2(m^2+3)}
   \begin{pmatrix} 
    m^2+1 & 1 & 1 \\ 
    1 & m^2 +1 & 1 \\ 
    1 & 1 & m^2 +1 
    \end{pmatrix}
\end{equation}
and 
\begin{equation}
    \det K_X = m^2(m^2+3)^2. \label{eq: det circle graph N=3}
\end{equation}
\end{example}


\subsection{Gaussian theory 
relative to the boundary
} \label{subsec:Gaussian-theory-relative-boundary}
Consider the Gaussian theory on a graph $X$ with ``boundary subgraph'' $Y\subset X$. 

\subsubsection{Dirichlet problem}
Consider the following ``Dirichlet problem.'' For a fixed field configuration on the boundary $\phi_Y\in F_Y$, we are looking for a field configuration on $X$, $\phi \in F_X$ such that
\begin{eqnarray} 
\phi|_Y&=& \phi_Y, \label{Dirichlet eq1}
\\ \label{Dirichlet eq2}
(K_X \phi)(v)&=&0 \;\;\mr{for\;all}\; v\in V_X\setminus V_Y.
\end{eqnarray}
Equivalently, we are minimizing the action (\ref{S Gaussian}) on the fiber $F_X^{\phi_Y}$ of the evaluation-on-$Y$ map $F_X \ra F_Y$ over $\phi_Y$. 
The solution exists and is unique due to convexity and nonnegativity of $S_X$.

Let us write the inverse of $K_X$ as a $2\times 2$ block matrix according to partition of vertices of $X$ into (1) not belonging to $Y$ (``bulk vertices'') or (2) belonging to $Y$ (``boundary vertices''):
\begin{equation}\label{K_X inverse as a block matrix}
    (K_X)^{-1} = \left( 
    \begin{array}{c|c}
         A& B \\ \hline
         C& D
    \end{array}
    \right).
\end{equation}
Note that this matrix is symmetric, so $A$ and $D$ are symmetric and $C=B^T$.

Then, we can write the solution of the Dirichlet problem as follows: (\ref{Dirichlet eq2}) implies
$ K_X\phi = \left(
\begin{array}{c}
     0  \\ 
      \xi
\end{array}
\right)
$ for some $\xi\in F_Y$. Hence, 
$$\phi= (K_X)^{-1}
\left(
\begin{array}{c}
     0  \\ 
      \xi
\end{array}
\right) =
\left(
\begin{array}{c}
     B\xi  \\ 
      D\xi
\end{array}
\right) \underset{(\ref{Dirichlet eq1})}{=} \left(
\begin{array}{c}
     0  \\ 
      \phi_Y
\end{array}
\right).
$$
Therefore, $\xi=D^{-1}\phi_Y$ and the solution of the Dirichlet problem is
\begin{equation}\label{Dirichlet solution}
   \phi= \left(
\begin{array}{c}
     B D^{-1} \phi_Y  \\ 
      \phi_Y
\end{array}
\right).
\end{equation}

\subsubsection{Dirichlet-to-Neumann operator}\label{sec:DtoN}
Note also that the evaluation of the action $S_X$ on the solution of the Dirichlet problem is
\begin{multline}\label{S on solution of Dirichlet problem}
    S_X(\phi)=\frac12 (\phi, K_X\phi) =
    \frac12 \Big(  
    \left(
\begin{array}{c}
     B D^{-1} \phi_Y  \\ 
      \phi_Y
\end{array}
\right)
    ,
    \left(
\begin{array}{c}
     0  \\ 
      \xi
\end{array}
\right)
    \Big) \\
    = \frac12 (\phi_Y,\xi)=\frac12 (\phi_Y, D^{-1} \phi_Y).
\end{multline}

The map sending $\phi_Y$ to the corresponding $\xi$ (i.e. the kinetic operator evaluated on the solution of the Dirichlet problem) is a combinatorial analog of the Dirichlet-to-Neumann operator.\footnote{\label{footnote: DN operator in continuum}
Recall that in the continuum setting, for $X$ a manifold with boundary, the Dirichlet-to-Neumann operator $\DN\colon C^\infty(\dd X)\ra C^\infty(\dd X)$ maps a smooth function $\phi_\dd$ to the normal derivative $\dd_n \phi(x)$ on $\dd X$ of the solution $\phi$ of the Helmholtz equation $(\Delta+m^2)\phi=0$ subject to Dirichlet boundary condition $\phi|_\dd = \phi_\dd$.} 
We will call the operator $\DN_{Y,X}\colon=D^{-1}\colon F_Y\ra F_Y$ the (combinatorial) Dirichlet-to-Neumann operator.\footnote{We put the subscripts in $\DN_{Y,X}$ to emphasize that we are 
extending $\phi_Y$ into $X$ as a solution of (\ref{Dirichlet eq2}).
When we will discuss gluing, the same $Y$ can be a subgraph of two different graphs $X',X''$; then it is important into which graph we are extending $\phi_Y$.} 

Recall (see e.g. \cite{KMW}) that in the continuum setting, the action of the free massive scalar field on a manifold with boundary, evaluated on a classical solution with Dirichlet boundary condition $\phi_\dd$ is $\int_{\dd X}\frac12\phi_\dd \DN(\phi_\dd) $. Comparing with (\ref{S on solution of Dirichlet problem})  
reinforces the idea that
is reasonable to call $D^{-1}$ the Dirichlet-to-Neumann operator.

We will denote the operator $BD^{-1}$ appearing in (\ref{Dirichlet solution}) by
\begin{equation}\label{E}
    E_{Y,X}=BD^{-1}
\end{equation}
-- the ``extension'' operator (extending $\phi_Y$ into the bulk of $X$ as a solution of the Dirichlet problem).\footnote{
In \cite{RV}, this operator is called the \emph{Poisson operator}.
}
\subsubsection{Partition function and correlators (relative to a boundary subgraph)}\label{sec:rel part funct}
Let us introduce a notation for the blocks of the matrix $K_X$ corresponding to splitting of the vertices of $X$ into bulk and boundary vertices, similarly to (\ref{K_X inverse as a block matrix}):
\begin{equation}\label{K_X  as a block matrix}
    K_X = \left( 
    \begin{array}{c|c}
         \wh{A}=K_{X,Y}& \wh{B} \\ \hline
         \wh{C}& \wh{D}
    \end{array}
    \right).
\end{equation}

The partition function relative to $Y$ (cf. (\ref{Z int kernel})) is again given by a Gaussian integral
\begin{multline}\label{Gaussian Z rel to Y}
    Z_{X,Y}(\phi_Y)= \int_{F_X^{\phi_Y}} [D\phi]^{\phi_Y} e^{-\frac{1}{\hbar}(S_X(\phi)-\frac12 S_Y(\phi_Y))}\\
    =\det(K_{X,Y} 
    )^{-\frac12} e^{-\frac{1}{2\hbar}(\phi_Y, (\DN_{Y,X}-\frac12 K_Y)\phi_Y)}.
\end{multline}

The normalized correlators (depending on the boundary field $\phi_Y$) are as follows.
\begin{itemize}
    \item 1-point correlator:\footnote{When specifying that a vertex $v$ is in $V_X\setminus V_Y$ we will use a shorthand and write $v\in X\setminus Y$.}
    \begin{equation}\label{Gaussian 1-point corr}
        \langle \phi(v) \rangle_{\phi_Y} = (E_{Y,X} \phi_Y)(v),\quad v\in X\setminus Y.
    \end{equation}
    \item Centered 
    $2m$-point correlator:
    \begin{multline}\label{Gaussian centered correlator}
        \langle \delta \phi(v_1)\cdots \delta \phi(v_{2m})  \rangle_{\phi_Y}=\\=
        \hbar^m\sum_{\mr{partitions }\; \{1,\ldots,2m\}=\cup_{i=1}^m \{a_i,b_i\}} G_{X,Y}(v_{a_1},v_{b_1})\cdots G_{X,Y}(v_{a_m},v_{b_m}),\\ v_1,\ldots,v_{2m}\in X\setminus Y.
    \end{multline}
    Here: 
    \begin{itemize}
        \item 
    $\delta \phi(v)\colon = \phi(v)- \langle \phi(v) \rangle_{\phi_Y}= \phi(v)-(E_{Y,X} \phi_Y)(v)$ is the fluctuation of the field w.r.t. its average; 
    \item $G_{X,Y}\colon=(K_{X,Y})^{-1}$ is the ``propagator with Dirichlet boundary condition on $Y$'' (or ``propagator relative to $Y$'').
    \end{itemize}
    \item Non-centered correlators follow from (\ref{Gaussian centered correlator}), e.g.
    \begin{equation}\label{Gaussian non-centered 2-point corr}
        \langle \phi(v_1) \phi(v_2) \rangle_{\phi_Y}  = \hbar\, G_{X,Y}(v_1,v_2)+(E_{Y,X}\phi_Y)(v_1)\cdot (E_{Y,X}\phi_Y)(v_2).
    \end{equation}
\end{itemize}

\begin{remark}
In our notations, the subscript $X,Y$ (as in $K_{X,Y}$, $G_{X,Y}$, $Z_{X,Y}$) stands for an object on $X$ relative to $Y$.\footnote{I.e. we think of $(X,Y)$ as a \emph{pair} of 1-dimensional CW complexes, where ``pair'' has the same meaning as in, e.g., the long exact sequence in cohomology of a pair.} On the other hand, the subscript $Y,X$ (as in $\DN_{Y,X}$, $E_{Y,X}$) refers to an object related to extending a field on $Y$ to a classical solution in the ``bulk'' $X$.
\end{remark}

\subsubsection{Examples}
\begin{example}\label{ex: path 2 graph rel bdry}
    Consider the  graph $X$ shown in Figure \ref{fig: line 2} below, relative the subgraph $Y$ consisting solely of the vertex $2$.
    \begin{figure}[H]
    \includegraphics[scale=0.8]{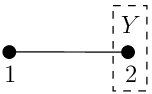}
    \caption{A graph with two vertices relative to one vertex.}
    \label{fig: line 2}
    \end{figure}
     The full kinetic operator is
    $$ 
   K_X= \left(
    \begin{array}{c|c}
        1+m^2 & -1 \\ \hline
        -1 & 1+m^2
    \end{array}
    \right)
    $$
    and the relative version is its top left block, $K_{X,Y}=1+m^2$. The relative propagator is 
    $G_{X,Y}=K_{X,Y}^{-1}=\frac{1}{1+m^2}$.
    The inverse of the full kinetic operator is
     $$ 
   K_X^{-1}= \frac{1}{m^2(2+m^2)}\left(
    \begin{array}{c|c}
        1+m^2 & 1 \\ \hline
        1 & 1+m^2
    \end{array}
    \right).
    $$
    The DN operator is the inverse of the bottom right block: $\DN_{Y,X}=\frac{m^2(2+m^2)}{1+m^2}$ and the extension operator (\ref{E}) is
    $E_{Y,X}=\frac{1}{1+m^2}$.

    In particular, the relative partition function is
    $$ 
    Z_{X,Y}(\phi_Y)=(1+m^2)^{-\frac12} \, e^{-\frac{1}{2\hbar}\left(\frac{m^2(2+m^2)}{1+m^2}-\frac{m^2}{2}\right)\phi_Y^2}.
    $$
\end{example}

\begin{example}\label{ex: path graph rel one end}
    Consider the line graph 
    of length $N$  relative to the subgraph consisting of the right endpoint $Y=\{N\}$ (Figure \ref{fig: line N rel 1}). 
\begin{figure}[H]
    \centering
    \includegraphics[scale=0.75]{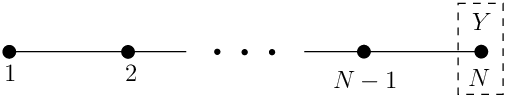}
    \caption{A line graph relative to one endpoint.}
    \label{fig: line N rel 1}
\end{figure}   The relative propagator is
    $$
G_{X,Y}(i,j)=\frac{\sinh\beta (N-\frac12-|i-j|)+\sinh\beta(N+\frac12-i-j)}{2 \sinh \beta\, \cosh \beta (N-\frac12)},\quad 1\leq i,j\leq N-1,
    $$
    with $\beta$ as in (\ref{q in path graph}).
    The DN operator is the inverse of the $N-N$ block (element) of the absolute propagator (\ref{path graph G}):
    $$
    \DN_{Y,X}=\frac{2\sinh \frac{\beta}{2}\, \sinh\beta N}{\cosh \beta (N-\frac12)}.
    $$
    The extension operator is
    $$ E_{Y,X}(i,N)= \frac{\cosh \beta (i-\frac12)}{\cosh \beta (N-\frac12)},\quad 1\leq i\leq N-1 $$
    and the determinant is
    \begin{equation} \label{N-D graph determinant}
    \det K_{X,Y}= \frac{\cosh \beta (N-\frac12)}{\cosh \frac{\beta}{2}}.
    \end{equation}
\end{example}

\begin{example}\label{ex: path graph rel both ends}
    Consider again the line graph, but now relative to both left and right endpoints, see Figure \ref{fig: line N rel 2} below.
    \begin{figure}[H]
    \centering
    \includegraphics[scale = 0.75]{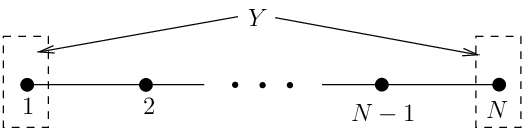}
    \caption{A line graph relative to both endpoints.}
    \label{fig: line N rel 2}
\end{figure}    

Then we have:
    \begin{gather} \label{G for path graph rel both ends}
        G_{X,Y}(i,j)=\frac{\cosh\beta (N-1-|i-j|)-\cosh\beta(N+1-i-j)}{2\sinh \beta\, \sinh \beta(N-1)},\;\; 2\leq i,j\leq N-1,\\
        \label{DN for path graph rel both ends}
        \DN_{Y,X}=\frac{2\sinh \frac{\beta}{2}}{\sinh \beta (N-1)} \left(
        \begin{array}{cc}
            \cosh \beta (N-\frac12) &  -\cosh \frac{\beta}{2} \\ 
            -\cosh \frac{\beta}{2} &  \cosh \beta (N-\frac12)
        \end{array}
        \right), \\
        \label{E for path graph rel both ends} E_{Y,X}(i,1)= \frac{\sinh \beta(N-i)}{\sinh\beta(N-1)} ,\;\;
        E_{Y,X}(i,N)= \frac{\sinh \beta (i-1)}{\sinh\beta(N-1)}
        , \\
        \label{det for path graph rel both ends}
        \det K_{X,Y} = \frac{\sinh \beta (N-1)}{\sinh \beta}.
    \end{gather}
\end{example}

\subsection{Gluing in Gaussian theory. 
Gluing of propagators and determinants.} \label{subsec:Gluing-in-Gaussian-theory}

\subsubsection{Cutting a closed graph} \label{sss Gaussian theory, cutting  closed graph}
Consider a closed 
graph $X=X'\cup_Y X''$ obtained from graphs $X',X''$ by gluing 
along a common subgraph $X'\supset Y \subset X''$. 

\begin{thm}\label{thm: gluing prop and det}
\begin{enumerate}[(a)]
    \item The propagator on $X$ is expressed in terms the data (propagators, DN operators, extension operators) on $X',X''$ relative to $Y$ as follows.
    \begin{itemize}
        \item For both vertices $v_1,v_2 \in X'$:
    \begin{multline}\label{propagator gluing eq1}
        G_X(v_1,v_2)=
     G_{X',Y}(v_1,v_2)+\\
     +\sum_{u_1,u_2\in Y}E_{Y,X'}(v_1,u_1)\DN^{-1}_{Y,X}(u_1,u_2)E_{Y,X'}(v_2,u_2) .
    \end{multline}
    For both vertices in $X''$, the formula is similar. Here the total DN operator is
    \begin{equation}\label{DN total}
        \DN_{Y,X}=\DN_{Y,X'}+\DN_{Y,X''}-K_Y.
    \end{equation}
    Also, we assume by convention that $G_{X',Y}(v_1,v_2)=0$ if either of $v_1,v_2$ is in $Y$. We also set $E_{Y,X'}(u,v)=\delta_{u,v}$ if $u,v\in Y$.
    \item For $v_1\in X'$, $v_2\in X''$,
    \begin{equation}\label{propagator gluing eq2}
        G_X(v_1,v_2)=
        \sum_{u_1,u_2\in Y}E_{Y,X'}(v_1,u_1 )\DN^{-1}_{Y,X}(u_1,u_2)E_{Y,X''}(v_2,u_2) 
    \end{equation} 
    and similarly for $v_1\in X''$, $v_2\in X'$. 
    \end{itemize}
    \item The determinant of $K_X$ is expressed in terms of the data on $X',X''$ relative to $Y$ as follows:
    \begin{equation}\label{determinant gluing}
        \det K_{X}=\det (K_{X',Y}) \det (K_{X'',Y}) \det (\DN_{Y,X}).
    \end{equation}
\end{enumerate}
\end{thm}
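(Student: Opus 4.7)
The plan is to derive both (a) and (b) simultaneously from a block decomposition of $K_X$ together with standard Schur complement / block-inversion identities.

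First, I would partition $V_X = (V_{X'}\setminus V_Y) \sqcup V_Y \sqcup (V_{X''}\setminus V_Y)$ and write $K_X$ as a $3\times 3$ block matrix in this ordering. Two features of the gluing $X = X'\cup_Y X''$ are crucial: (i) no edge of $X$ connects $V_{X'}\setminus V_Y$ to $V_{X''}\setminus V_Y$, so those off-diagonal blocks vanish; (ii) for $v\in V_Y$ one has $\val_X(v) = \val_{X'}(v)+\val_{X''}(v)-\val_Y(v)$, and similarly for off-diagonal entries at $u,v\in V_Y$ (edges between $Y$-vertices in $X$ are exactly edges of $Y$). These identities force the $(Y,Y)$-block of $K_X$ to equal $\wh D_{X'}+\wh D_{X''}-K_Y$, with $\wh B_W$ and $\wh D_W$ the off-diagonal and $Y$-$Y$ blocks of $K_W$ as in~(\ref{K_X  as a block matrix}), while the $(\mathrm{bulk},Y)$ blocks are inherited unchanged as $\wh B_{X'}$ and $\wh B_{X''}$.

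I would then take the Schur complement of $K_X$ with respect to the block-diagonal bulk-bulk block $\mathrm{diag}(K_{X',Y},K_{X'',Y})$. Using the Schur complement identity $\DN_{Y,W} = \wh D_W - \wh B_W^T K_{W,Y}^{-1}\wh B_W$ derived in Section~\ref{sec:DtoN}, the resulting Schur complement in the $(Y,Y)$-block equals $\DN_{Y,X'}+\DN_{Y,X''}-K_Y$, which is precisely $\DN_{Y,X}$ by~(\ref{DN total}). The standard multiplicative formula for determinants under Schur complement then yields (b) immediately: $\det K_X = \det K_{X',Y}\cdot \det K_{X'',Y}\cdot \det \DN_{Y,X}$.

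For (a), I would apply the standard block-inverse formula to the same decomposition. Writing $K_X = \begin{pmatrix} U & V \\ V^T & Z \end{pmatrix}$ with $U=\mathrm{diag}(K_{X',Y},K_{X'',Y})$ and $V=(\wh B_{X'}^T, \wh B_{X''}^T)^T$, the $(\mathrm{bulk},\mathrm{bulk})$ block of $K_X^{-1}$ equals $U^{-1}+U^{-1}V\,\DN_{Y,X}^{-1}\,V^T U^{-1}$, the $(\mathrm{bulk},Y)$ block equals $-U^{-1}V\,\DN_{Y,X}^{-1}$, and the $(Y,Y)$ block equals $\DN_{Y,X}^{-1}$. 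The key identification is $-K_{X',Y}^{-1}\wh B_{X'} = E_{Y,X'}$ (and analogously for $X''$): starting from $E_{Y,X'}=BD^{-1}$ in~(\ref{E}) and inverting $K_{X'}$ block-wise one finds $B=-K_{X',Y}^{-1}\wh B_{X'}\,\DN_{Y,X'}^{-1}$, so $BD^{-1}=-K_{X',Y}^{-1}\wh B_{X'}$. Substituting converts the three block expressions above into~(\ref{propagator gluing eq1}) and~(\ref{propagator gluing eq2}). The cases in which $v_1$ or $v_2$ lies in $V_Y$ are absorbed by the stated conventions $G_{X',Y}(v_1,v_2)=0$ and $E_{Y,X'}|_{V_Y\times V_Y}=\mathrm{Id}$: under these, the right-hand sides of~(\ref{propagator gluing eq1})--(\ref{propagator gluing eq2}) collapse exactly to the $(\mathrm{bulk},Y)$ and $(Y,Y)$ blocks of $K_X^{-1}$ obtained above. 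The only real hazards are bookkeeping ones -- checking the $(Y,Y)$-block identity by a careful edge count so that edges lying in $Y$ are not double- or under-counted, and tracking the sign in $E_{Y,X'}=-K_{X',Y}^{-1}\wh B_{X'}$ -- after which the rest is a uniform Schur complement argument.
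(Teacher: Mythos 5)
Your proof is correct, and it is in essence the paper's second proof of this theorem (the Schur-complement argument), though run in the opposite direction: the paper starts from the $2\times 2$ block decomposition of $K_X^{-1}$, takes $\DN_{Y,X}=D^{-1}$ and $E_{Y,X}=BD^{-1}$ as the definitions, and reduces everything to the single block-inversion identity $\wh A^{-1}=A-BD^{-1}C$, whereas you start from a $3\times 3$ block decomposition of $K_X$ itself and take the Schur complement over the bulk--bulk block. The payoff of your direction is that you actually \emph{derive} the additivity formula (\ref{DN total}) for the total DN operator -- by computing the $(Y,Y)$-block of $K_X$ and identifying the Schur complement with $\DN_{Y,X'}+\DN_{Y,X''}-K_Y$ via $\DN_{Y,W}=\wh D_W-\wh B_W^T K_{W,Y}^{-1}\wh B_W$ -- whereas the paper's Schur-complement proof takes (\ref{DN total}) as given (it is established separately there in the functional-integral proof). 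One small imprecision: your parenthetical claim that ``edges between $Y$-vertices in $X$ are exactly edges of $Y$'' is not what you need and is false when $Y$ is not a full subgraph (an edge of $X'$ may join two $Y$-vertices without belonging to $Y$); what you actually use, and what is true, is the inclusion--exclusion count that the number of $u$--$v$ edges in $X$ equals the number in $X'$ plus the number in $X''$ minus the number in $Y$, which holds because an edge of $X$ lies in both $X'$ and $X''$ precisely when it lies in $Y$. Everything else -- the block-diagonality of the bulk--bulk block, the identity $E_{Y,X'}=-K_{X',Y}^{-1}\wh B_{X'}$, and the collapse of the boundary conventions onto the $(\mathrm{bulk},Y)$ and $(Y,Y)$ blocks -- checks out.
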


We will give three proofs of these gluing formulae:
\begin{enumerate}[(1)]
    \item From Fubini theorem for the ``functional integral'' (QFT/second quantization approach).
    \item From inverting a $2\times 2$ block matrix via Schur complement and Schur's determinant formula.
    \item From path counting (first quantization approach) -- later, in Section \ref{sec: gluing proof path sums}. 
\end{enumerate}

\subsubsection{Proof 1 (``functional integral approach'')}

First, consider the partition function on $X$ relative to $Y$:
\begin{multline}\label{Gaussin gluing proof 1 eq1}
    Z_{X,Y}(\phi_Y)=\int_{F_X^{\phi_Y}} [D\phi]^{\phi_Y} e^{-\frac{1}{\hbar} (S_X(\phi)-\frac12 S_Y(\phi_Y))} \underset{(\ref{S additivity}) }{=} \\=
    \int_{F_{X'}^{\phi_Y}} [D\phi']^{\phi_Y} \int_{F_{X''}^{\phi_Y}} [D\phi'']^{\phi_Y} e^{-\frac{1}{\hbar}(S_{X'}(\phi')+S_{X''}(\phi'')-\frac32 S_Y(\phi_Y))} \\
    = (\det K_{X',Y})^{-\frac12}(\det K_{X'',Y})^{-\frac12}e^{-\frac{1}{2\hbar} (\phi_Y, (\DN_{Y,X'}+\DN_{Y,X''}-\frac32 K_Y)\phi_Y)}.
\end{multline}
Comparing the r.h.s. with (\ref{Gaussian Z rel to Y}) as functions of $\hbar$, we obtain the formula (\ref{DN total}) for the total DN operator and the relation for determinants
\begin{equation*}
    \det K_{X,Y} = \det K_{X',Y}\cdot \det K_{X'',Y}.
\end{equation*}

The partition function on $X$ can be obtained by integrating (\ref{Gaussin gluing proof 1 eq1}) over the field on the ``gluing interface'' $Y$:
\begin{multline*}
    Z_X= \int_{F_X} D\phi\, e^{-\frac{1}{\hbar}S_X(\phi)}= \int_{F_Y} D\phi_Y\, e^{\frac{1}{2\hbar}S_Y(\phi_Y)} Z_{X,Y}(\phi_Y) =\\
    =\int_{F_Y} D\phi_Y (\det K_{X,Y})^{-\frac12} e^{-\frac{1}{2\hbar}(\phi_Y,\DN_{Y,X}\phi_Y)}=(\det K_{X,Y})^{-\frac12} (\det \DN_{Y,X})^{-\frac12}.
\end{multline*}
Comparing the r.h.s. with (\ref{Gaussian Z}), we obtain the gluing formula for determinants (\ref{determinant gluing}).

Next, we prove the gluing formula for propagators thinking of them as 2-point correlation functions. We denote by $\ll\cdots\gg$ correlators not normalized by the partition function.   Consider the case $v_1,v_2\in X'$. We have
\begin{multline*}
    \underbrace{\ll \phi(v_1)\phi(v_2) \gg}_{\hbar\, G_X(v_1,v_2)\cdot Z_X} =\int_{F_X} D\phi\, e^{-\frac{1}{\hbar}S_X(\phi)}\phi(v_1)\phi(v_2) =\\=
    \int_{F_Y} D\phi_Y \underbrace{\int_{F_{X'}^{\phi_Y}} [D\phi']^{\phi_Y} \phi'(v_1) \phi'(v_2) e^{-\frac{1}{\hbar}(S_{X'}(\phi')-\frac12 S_Y(\phi_Y))} }_{ \ll \phi(v_1)\phi(v_2)\gg_{\phi_Y}^{X'} \underset{(\ref{Gaussian non-centered 2-point corr}  )}{=}Z_{X',Y} (\phi_Y)\cdot (\hbar\, G_{X',Y}(v_1,v_2)+(E_{Y,X'}\phi_Y)(v_1)\cdot (E_{Y,X'}\phi_Y)(v_2))} \cdot \\
    \cdot \underbrace{\int_{F_{X''}^{\phi_Y}}  [D\phi'']^{\phi_Y} e^{-\frac{1}{\hbar}(S_{X''}(\phi'')-\frac12 S_Y(\phi_Y))}}_{Z_{X'',Y}(\phi_Y)} \\
    = \int_{F_Y} D\phi_Y (\det K_{X',Y})^{-\frac12} (\det K_{X'',Y})^{-\frac12} e^{-\frac{1}{2\hbar}(\phi_Y,\DN_{Y,X}\phi_Y) } \cdot \\
    \cdot 
    (\hbar\, G_{X',Y}(v_1,v_2)+\sum_{u_1,u_2\in Y} E_{Y,X'}(v_1,u_1) \phi_Y(u_1) \phi_Y(u_2) E_{Y,X'}(v_2,u_2)) \\
    = Z_X\cdot \hbar (G_{X',Y}(v_1,v_2)+\sum_{u_1,u_2\in Y}  E_{Y,X'}(v_1,u_1) \DN^{-1}_{Y,X}(u_1,u_2) E_{Y,X'}(v_2,u_2)).
\end{multline*}
This proves the gluing formula (\ref{propagator gluing eq1}).

Finally, consider the case $v_1\in X'$, $v_2\in X''$. By a similar computation we find
\begin{multline*}
    \underbrace{\ll \phi(v_1) \phi(v_2)  \gg}_{\hbar\, G_X(v_1,v_2)\cdot Z_X} = 
    \int_{F_Y} D\phi_Y \underbrace{\int_{F_{X'}^{\phi_Y}} [D\phi']^{\phi_Y} \phi'(v_1) e^{-\frac{1}{\hbar}(S_{X'}(\phi')-\frac12 S_Y(\phi_Y))} }_{ 
    =\ll \phi(v_1) \gg_{\phi_Y}^{X'}
    \underset{(    \ref{Gaussian 1-point corr}         )}{=}Z_{X',Y}(\phi_Y)\cdot (E_{Y,X'}\phi_Y)(v_1)}\cdot \\ \cdot
    \underbrace{\int_{F_{X''}^{\phi_Y}} [D\phi'']^{\phi_Y} \phi''(v_2) e^{-\frac{1}{\hbar}(S_{X''}(\phi'')-\frac12 S_Y(\phi_Y))}}_{
    = \ll \phi(v_2)\gg^{X''}_{\phi_Y}= Z_{X'',Y}(\phi_Y)\cdot (E_{Y,X''}\phi_Y)(v_2)
    } \\
    = \int_{F_Y} D\phi_Y (\det K_{X',Y})^{-\frac12} (\det K_{X'',Y})^{-\frac12} e^{-\frac{1}{2\hbar}(\phi_Y,\DN_{Y,X}\phi_Y) } \cdot \\ \cdot 
    \sum_{u_1,u_2\in Y} E_{Y,X'}(v_1,u_1) \phi_Y(u_1) \phi_Y(u_2) E_{Y,X''}(v_2,u_2)  \\
    =Z_X \cdot \hbar\sum_{u_1,u_2\in Y} E_{Y,X'}(v_1,u_1) \DN^{-1}_{Y,X}(u_1,u_2) E_{Y,X''}(v_2,u_2).
\end{multline*}
This proves (\ref{propagator gluing eq2}).

\subsubsection{Proof 2 (Schur complement approach)}

Let us introduce the notations
\begin{equation}\label{olG, olE}
    \ol{G}_{X,Y} =
    \left( \begin{array}{c|c}
        G_{X,Y} &  0 \\ \hline
         0& 0 
    \end{array}  
    \right)
    ,\quad
    \ol{E}_{Y,X} = 
    \left(
    \begin{array}{c}
         E_{Y,X}  \\
         \mr{id} 
    \end{array}
    \right)
\end{equation}
for the extension of the propagator on $X$ relative to $Y$ by zero to vertices of $Y$ and the extension of the extension operator by identity to vertices of $Y$ (the blocks correspond to vertices of $X\setminus Y$ and vertices of $Y$, respectively).\footnote{
Note that one can further refine the block decompositions (\ref{olG, olE}) according to partitioning of vertices in $X\setminus Y$ into those in $X'\setminus Y$ and those in $X''\setminus Y$. Then the block $G_{X,Y}$ becomes $\left(
\begin{array}{cc}
    G_{X',Y} & 0  \\
    0 & G_{X'',Y}
\end{array}
\right)$ and the block $E_{Y,X}$ becomes $\left(
\begin{array}{c}
     E_{Y,X'}  \\
      E_{Y,X''}
\end{array}
\right)$.
} 
Using these notations, gluing formulae (\ref{propagator gluing eq1}), (\ref{propagator gluing eq2}) for the propagator can be jointly expressed as
\begin{equation}\label{gluing of propagators, matrix form}
    G_X\stackrel{!}{=}\ol{G}_{X,Y}+ \ol{E}_{Y,X} \DN_{Y,X}^{-1} \ol{E}_{Y,X}^T.
\end{equation}
The r.h.s. here is
\begin{multline*}
    \left( 
    \begin{array}{c|c}
         G_{X,Y} + E\, \DN^{-1} E^T & E\, \DN^{-1}  \\ \hline
         \DN^{-1} E^T& \DN^{-1} 
    \end{array}
    \right) =
    \left(
    \begin{array}{c|c}
       \wh{A}^{-1}+B D^{-1} D D^{-1} C   & BD^{-1} D \\ \hline
       DD^{-1}C  & D
    \end{array}
    \right)
    \\=
     \left(
    \begin{array}{c|c}
       \wh{A}^{-1}+B D^{-1} C   & B \\ \hline
      C  & D
    \end{array}
    \right).
\end{multline*}
Here we are suppressing the subscript $Y,X$ for $E$ and $\DN$; notations for the blocks are as in (\ref{K_X inverse as a block matrix}), (\ref{K_X  as a block matrix}). So, the only part to check is that the 1-1 block above is $A$. It is a consequence of the inversion formula for $2\times 2$ block matrices, which in particular asserts that the 1-1 block $\wh{A}$ of the matrix $K_X$ inverse to $G_X$ is the inverse of the Schur complement of the 2-2 block in $G_X$, i.e.,
$$ \wh{A}^{-1} = A-B D^{-1}C.  $$
This finishes the proof of the gluing formula for propagators (\ref{gluing of propagators, matrix form}). 

Schur's formula for a determinant of a block $2\times 2$ matrix applied to (\ref{K_X inverse as a block matrix}) yields
\begin{eqnarray*}
    \det K_X^{-1}= \det \underbrace{D}_{\DN^{-1}} \cdot \det (\underbrace{A-BD^{-1}C}_{\wh{A}^{-1}=K_{X,Y}^{-1}})
\end{eqnarray*}
and thus
$$
\det K_X = \det \DN\cdot \det K_{X,Y}= \det \DN\cdot \det K_{X',Y} \cdot \det K_{X'',Y} .
$$
In the last equality we used that $K_{X,Y}$ is block-diagonal, with blocks corresponding to $X'\setminus Y$ and $X''\setminus Y$. This proves the gluing formula for determinants.

\subsubsection{Examples}

\begin{example}
    Consider the gluing of two line graphs of length 2, $X',X''$ over a common vertex $Y$ into a line graph $X$ of length 3 as pictured in Figure \ref{fig: gluing N=3 line} below.
\begin{figure}[H]
    \centering
    \includegraphics{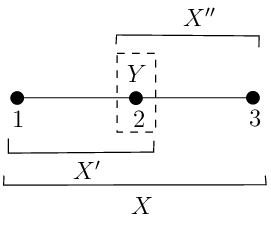}
    \caption{Gluing two line graphs into a longer line graph.}
    \label{fig: gluing N=3 line}
\end{figure} 

The data of the constituent graphs $X',X''$ relative to $Y$ was computed in Example \ref{ex: path 2 graph rel bdry}. We assemble the data on the glued graph $X$ using the gluing formulae of Theorem \ref{thm: gluing prop and det}. We have
    \begin{multline*}
    \underbrace{\DN_{Y,X}}_{=\DN_{Y,X}(2,2)}= \DN_{Y,X'}+\DN_{Y,X''}-K_Y =\\
    =\frac{m^2(2+m^2)}{1+m^2}+\frac{m^2(2+m^2)}{1+m^2}-m^2 = \frac{m^2(3+m^2)}{1+m^2} .
    \end{multline*}
    For the propagator we have, e.g.,
    \begin{multline*} G_X(1,1)=G_{X',Y}(1,1)+ E_{Y,X'}(1,2) \DN^{-1}_{Y,X}(2,2) E_{Y,X'}(1,2) \\
    = \frac{1}{1+m^2}+\frac{1}{1+m^2}\cdot \frac{1+m^2}{m^2(3+m^2)}\cdot \frac{1}{1+m^2} 
    \end{multline*}
    and 
     \begin{multline*} G_X(1,3)=E_{Y,X'}(1,2) \DN^{-1}_{Y,X}(2,2) E_{Y,X''}(3,2) 
    = \frac{1}{1+m^2}\cdot \frac{1+m^2}{m^2(3+m^2)}\cdot \frac{1}{1+m^2} ,
    \end{multline*}
    which agrees with the 1-1 entry and 1-3 entry in (\ref{ex path3 G_X}) respectively.

    For the gluing of determinants, we have
    $$ \det K_{X',Y}\cdot \det K_{X'',Y}\cdot \det\DN_{Y,X} = (1+m^2)\cdot (1+m^2)\cdot \frac{m^2(3+m^2)}{1+m^2} , $$
    which agrees with (\ref{ex path3 det}).
\end{example}

\begin{example}
    Consider the circle graph $X$ with $N$ vertices presented as a gluing by the two endpoints of two line graphs $X'$, $X''$ of lengths $N',N''$ respectively, with $N=N'+N''-2 $, see Figure \ref{fig: circle gluing} below.
    
    \begin{figure}[H]
    \includegraphics{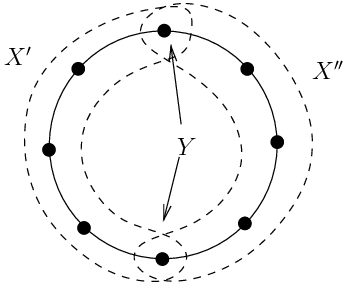}
    \caption{Gluing a circle from two intervals.}
    \label{fig: circle gluing}
    \end{figure}
    
    One can then use the gluing formulae of Theorem \ref{thm: gluing prop and det} to recover the propagator and the determinant on the circle graph (cf. Example \ref{ex: circle graph}) from the data for line graphs relative to the endpoints (cf. Example \ref{ex: path graph rel both ends}). E.g. for the determinant, we have
    \begin{multline*}
    \underbrace{\frac{\sinh \beta (N'-1)}{\sinh \beta}}_{\det K_{X',Y}} \cdot \underbrace{\frac{\sinh \beta (N''-1)}{\sinh \beta}}_{\det K_{X'',Y}} \cdot 
    \det \Big(\DN_{Y,X'}+\DN_{Y,X''}-
    \left(
    \begin{array}{cc}
         m^2 & 0  \\
         0 & m^2
    \end{array}
    \right)
    \Big) = \\
    = \underbrace{4\sinh^2 \frac{\beta N}{2}}_{\det K_X}.
    \end{multline*}
    Here the $2\times 2$ matrices $\DN_{Y,X'}$, $\DN_{Y,X''}$ are given by (\ref{DN for path graph rel both ends}), with $N$ replaced by $N',N''$, respectively.
\end{example}

\subsubsection{General cutting/gluing of cobordisms}
Consider the gluing of graph cobordisms (\ref{gluing of cobs}),
$$ Y_1 \xra{X'} Y_2 \xra{X''} Y_3  \quad =\quad  Y_1 \xra{X} Y_3. $$

Let us introduce the following shorthand notations 
\begin{itemize}
    \item DN operators: $\DN_{ij,A}\colon= \DN_{Y_i\sqcup Y_j,A} $, with $i,j\in \{1,2,3\}$ and $A\in \{X',X'',X\}$. Also, by $(\DN_{ij,A})_{kl}$ we will denote $Y_k-Y_l$ block in $\DN_{ij,A}$.
\item ``Interface'' DN operator: $\DNI\colon=(\DN_{Y_1\sqcup Y_2\sqcup Y_3,X})_{Y_2-Y_2\mr{-block}}$.
\item Extension operators: $E_{ij,A}=E_{Y_i\sqcup Y_j,A}$. We will also denote its $A-Y_k$ block by $(E_{ij,A})_k$.
\item Propagators: $G_{A,ij}\colon= G_{A,Y_i\sqcup Y_j}$.
\end{itemize}

One has the following straightforward generalization of Theorem \ref{thm: gluing prop and det} to the case of possibly nonempty $Y_1,Y_3$.
\begin{thm} \label{thm: gluing formulae for composition of graph cobordisms}
The data of the Gaussian theory on the glued cobordism $Y_1 \xra{X} Y_3$ can be computed from the data of the constituent cobordisms $Y_1 \xra{X'} Y_2$, $Y_2 \xra{X''} Y_3$ as follows
\begin{enumerate}[(a)]
    \item Glued DN operator $\DN_{13,X}$:
    {\tiny
    \begin{equation}\label{gluing of DN operators}
        \left(
        \begin{array}{cc}
             (\DN_{12,X'})_{11}- (\DN_{12,X'})_{12}\DNI^{-1} (\DN_{12,X'})_{21}&   -(\DN_{12,X'})_{12} \DNI^{-1}(\DN_{23,X''})_{23} \\
             - (\DN_{23,X''})_{32}\DNI^{-1} (\DN_{12,X'})_{21} &  (\DN_{23,X''})_{33} - (\DN_{23,X''})_{32}\DNI^{-1} (\DN_{23,X''})_{23}
        \end{array}
        \right).
    \end{equation}
    }
    The blocks correspond to vertices of $Y_1$ and $Y_3$.
The interface DN operator here is
\begin{equation}
    \DNI=(\DN_{12,X'})_{22}+(\DN_{23,X''})_{22}-K_{Y_2}.
\end{equation}
\item Extension operator $ E_{13,X}$:
{\tiny
\begin{equation}
    \left( 
    \begin{array}{cc}
         (E_{12,X'})_1- (E_{12,X'})_2 \DNI^{-1} (\DN_{12,X'})_{21}  &  -(E_{12,X'})_2 \DNI^{-1} (\DN_{23,X''})_{23}  \\
         -  \DNI^{-1} (\DN_{12,X'})_{21}  &  - \DNI^{-1} (\DN_{23,X''})_{23} \\
          - (E_{23,X''})_2 \DNI^{-1} (\DN_{12,X'})_{21}  &  (E_{23,X''})_3 -(E_{23,X''})_2 \DNI^{-1} (\DN_{23,X''})_{23} 
    \end{array}
    \right).
\end{equation}
}
Here horizontally, the blocks correspond to vertices of $Y_1$, $Y_3$; vertically -- to vertices of $X'\setminus (Y_1\sqcup Y_2)$, $Y_2$ and $X''\setminus (Y_2\sqcup Y_3)$.
\item Determinant:
\begin{equation}
    \det K_{X,Y_1\sqcup Y_3}=\det K_{X',Y_1\sqcup Y_2}\cdot \det K_{X'',Y_2\sqcup Y_3}\cdot\det \DNI.
\end{equation}
\item Propagator:
\begin{itemize}
    \item For $v_1,v_2\in X'$, 
    \begin{equation}
        G_{X,13}(v_1,v_2)= G_{X',12}(v_1,v_2)+\sum_{u_1,u_2\in Y} E_{12,X'}(v_1,u_1) \DNI^{-1}(u_1,u_2) E_{12,X'}(v_2,u_2)
    \end{equation}
    and similarly for $v_1,v_2\in X''$.
    \item For $v_1\in X'$, $v_2\in X''$,
    \begin{equation}
        G_{X,13}(v_1,v_2)= \sum_{u_1,u_2\in Y} E_{12,X'}(v_1,u_1) \DNI^{-1}(u_1,u_2) E_{23,X''}(v_2,u_2)
    \end{equation}
    and similarly for $v_1\in X''$, $v_2\in X'$.
\end{itemize}
\end{enumerate}
\end{thm}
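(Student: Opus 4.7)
The plan is to mimic the ``functional integral / Fubini'' approach used in Proof 1 of Theorem \ref{thm: gluing prop and det}, with the only difference being that $Y_1$ and $Y_3$ carry external boundary data throughout. The starting point is the identity
\begin{equation*}
    Z_{X,Y_1\sqcup Y_3}(\phi_{Y_1},\phi_{Y_3})=\int_{F_{Y_2}}D\phi_{Y_2}\; Z_{X',Y_1\sqcup Y_2}(\phi_{Y_1},\phi_{Y_2})\,Z_{X'',Y_2\sqcup Y_3}(\phi_{Y_2},\phi_{Y_3}),
\end{equation*}
which follows from Fubini together with the additivity
$S_X-\tfrac12 S_{Y_1}-\tfrac12 S_{Y_3}=(S_{X'}-\tfrac12 S_{Y_1}-\tfrac12 S_{Y_2})+(S_{X''}-\tfrac12 S_{Y_2}-\tfrac12 S_{Y_3})$ obtained by rearranging (\ref{S additivity}) applied to $X=X'\cup_{Y_2}X''$.

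The first step is to plug in the explicit Gaussian expressions for the two relative partition functions, which by (the obvious two-boundary analog of) (\ref{Gaussian Z rel to Y}) have the form
\begin{equation*}
    Z_{X',12}(\phi_{Y_1},\phi_{Y_2})=(\det K_{X',Y_1\sqcup Y_2})^{-\frac12}\exp\Bigl(-\tfrac{1}{2\hbar}\bigl(\phi,(\DN_{12,X'}-\tfrac12 K_{Y_1\sqcup Y_2})\phi\bigr)\Bigr),
\end{equation*}
and similarly for $Z_{X'',23}$. Collecting the quadratic form in $\phi_{Y_2}$ one finds that its kernel is exactly $\DNI=(\DN_{12,X'})_{22}+(\DN_{23,X''})_{22}-K_{Y_2}$, which is positive definite. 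Performing the Gaussian integral in $\phi_{Y_2}$ then produces (i) a determinant prefactor $(\det\DNI)^{-1/2}$, which combined with $(\det K_{X',12}\cdot\det K_{X'',23})^{-1/2}$ and the Gaussian form of the left-hand side yields the determinant gluing formula; and (ii) a Gaussian in $(\phi_{Y_1},\phi_{Y_3})$ whose quadratic form, upon completing the square in $\phi_{Y_2}$, is precisely the $2\times 2$ Schur-complement block matrix (\ref{gluing of DN operators}) after subtracting $\tfrac12 K_{Y_1\sqcup Y_3}$. Comparing with the Gaussian form of $Z_{X,13}(\phi_{Y_1},\phi_{Y_3})$ yields both (a) and (c).

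For the extension operator (b), I will compute the $1$-point correlator $\langle\phi(v)\rangle_{\phi_{Y_1},\phi_{Y_3}}=(E_{13,X}\phi_{Y_1\sqcup Y_3})(v)$ for $v\in X\setminus(Y_1\sqcup Y_3)$, splitting into the three cases $v\in X'\setminus(Y_1\sqcup Y_2)$, $v\in Y_2$, $v\in X''\setminus(Y_2\sqcup Y_3)$. In each case we again factorize the functional integral over $\phi_{Y_2}$, use (\ref{Gaussian 1-point corr}) relative to two boundaries for the inner integrals (giving $E_{12,X'}$ or $E_{23,X''}$ times the relative partition function), and perform the remaining Gaussian integration over $\phi_{Y_2}$ using the saddle point $\phi_{Y_2}^*=-\DNI^{-1}\bigl((\DN_{12,X'})_{21}\phi_{Y_1}+(\DN_{23,X''})_{23}\phi_{Y_3}\bigr)$. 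Reading off the coefficient of each of $\phi_{Y_1}$ and $\phi_{Y_3}$ gives exactly the nine blocks displayed in (b). The propagator formulas in (d) follow in the same way from the $2$-point correlator computation of Proof 1 of Theorem \ref{thm: gluing prop and det}, the only difference being the extra (spectator) boundary data on $Y_1,Y_3$, which simply makes the inner correlators $G_{X',12}$ and $E_{12,X'}$ (resp.\ their $X''$ analogs) appear in place of $G_{X',Y}$ and $E_{Y,X'}$.

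The main obstacle is purely bookkeeping: one must keep the three-fold block decomposition of fields (into $Y_1$, $Y_2$, $Y_3$) consistent across the four identities, and correctly track which $\DN$ and $E$ block one is picking out at each stage. A useful sanity check is that specializing to $Y_1=Y_3=\varnothing$ collapses the $3\times 3$ Schur structure to the single interface term of Theorem \ref{thm: gluing prop and det}, reproducing (\ref{DN total}), (\ref{propagator gluing eq1})--(\ref{propagator gluing eq2}) and (\ref{determinant gluing}). A Schur-complement proof, paralleling Proof 2, is also available and would proceed by viewing $\DN_{13,X}$ as the Schur complement of the ``bulk $+ Y_2$'' block of $K_X$, and then iterating the Schur complement first over $(X\setminus(Y_1\sqcup Y_2\sqcup Y_3))$ (producing $\DN_{123,X}=\DN_{12,X'}\oplus_{Y_2}\DN_{23,X''}$ in the obvious sense) and then over $Y_2$ (producing $\DNI^{-1}$); I expect this second proof to be more compact but less transparent than the functional integral argument outlined above.
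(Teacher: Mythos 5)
Your proposal is correct and follows exactly the route the paper intends: the paper gives no separate proof of this theorem, presenting it as a ``straightforward generalization'' of Theorem \ref{thm: gluing prop and det}, and your argument is precisely the functional-integral Proof 1 of that theorem carried out with the extra spectator boundaries $Y_1,Y_3$ (with the Schur-complement route correctly identified as the analogue of Proof 2). The only point worth making explicit when writing it up is that in part (d) the outer Gaussian integral over $\phi_{Y_2}$ now carries a linear source, so one must separate the order-$\hbar$ connected piece (which gives $\DNI^{-1}$ in the propagator formula) from the disconnected piece $(E_{13,X}\phi)(v_1)(E_{13,X}\phi)(v_2)$ reproduced by part (b).
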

\subsubsection{Self-gluing and trace formula}
As another generalization of Theorem \ref{thm: gluing prop and det}, one can consider the case of a graph $X$ relative to a subgraph $Y$ that admits a decomposition $Y = Y_1 \sqcup Y_2$ where $Y_1$ and $Y_2$ are isomorphic graphs. Then, specifying a graph isomorphism $f\colon Y_1 \to Y_2$, we can glue $Y_1$ to $Y_2$ using $f$ to form a new graph $\tilde{X}$ with a distinguished subgraph $\tilde{Y}$.\footnote{In the setting of theorem \ref{thm: gluing prop and det}, we have $X = X'\sqcup X''$, and there are no edges between $X'$ and $X ''$. In the following discussion we will suppress $f$ but remark that in principle the glued graphs $\tilde{X}$ and $\tilde{Y}$ do depend on $f$.} We have $\tilde{Y} \cong Y_1 \cong Y_2$ if and only if there are no edges between $Y_1$ and $Y_2$. See Figure \ref{fig: self gluing}.

\begin{figure}[H]
    \centering
\begin{tikzpicture}
\node at (-.75,.5) {$X$};
    \foreach \x in {0,1,2} 
    {
    \foreach \y in {0,1} 
    {
    \draw[fill = black] (\x,\y) circle (2pt);
    }
    }
    \draw (0,0) -- (1,0) -- (2,1); 
    \draw (0,1) -- (1,1) -- (2,0);
    \draw[dashed] (1,0.5) ellipse (0.25cm and 0.75cm);
    \draw[dashed] (2,0.5) ellipse (0.25cm and 0.75cm);
    \node at (3,0.5) {$\leadsto$};
    \node at (1,-.5) {$Y_1$};
    \node at (2,-.5) {$Y_2$};
    \begin{scope}[shift={(4,0)}]
    \foreach \x in {0,1} 
    {
    \foreach \y in {0,1} 
    {
    \draw[fill = black] (\x,\y) circle (2pt);
    }
    }
    \draw (0,0) -- (1,0); 
    \draw (0,1) -- (1,1); 
    \draw (1,0) to[bend left = 30]  (1,1);
    \draw (1,0) to[bend right = 30] (1,1);
    \draw[dashed] (1,0.5) ellipse (0.25cm and 0.75cm);
        \node at (1,-.5) {$\tilde{Y}$};
        \node at (1.75,0.5) {$\tilde{X}$};
    \end{scope}
\end{tikzpicture}\caption{An example of self-gluing.}
    \label{fig: self gluing}
\end{figure}
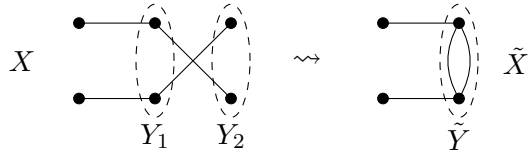

Then one has the following relation between the Dirichlet-to-Neumann operators of $Y$ relative to $X$ and $\tilde{Y}$ relative to $\tilde{X}$:  
\begin{proposition}\label{prop: self gluing}
Let $\phi \in C^0(Y_1) 
\simeq C^0(Y_2) \simeq C^0(\tilde{Y})$, then\footnote{Below we are identifying using $f$ to identify $V(Y_1)$ and $V(Y_2)$, and then also $\phi$ and $(f^{-1})^*\phi$.}    \begin{equation}
        \left( (\phi,\phi), \left(\DN_{Y,X} - \frac12 K_Y\right) \begin{pmatrix} \phi \\ \phi \end{pmatrix}\right) = (\phi, \DN_{\tilde{Y},\tilde{X}} \phi)  -  \left(\phi, \left(\frac12 K_{\tilde{Y}} - \frac{1}{2}K_{Y_1}\right)\phi\right).\label{eq: prop self gluing I}
    \end{equation}
    Equivalently, 
    \begin{equation} \left( (\phi,\phi), \left(\DN_{Y,X} - \frac12 K_{Y_1} - \frac12 K_{Y_2}\right) \begin{pmatrix} \phi \\ \phi \end{pmatrix} \right) = (\phi, \DN_{\tilde{Y},\tilde{X}} \phi). \label{eq: prop self gluing II}
    \end{equation}
    \end{proposition}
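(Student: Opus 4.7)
The plan is to compare the relative partition functions $Z_{X,Y}(\phi,\phi)$ and $Z_{\til X,\til Y}(\phi)$ by a Fubini-style argument, and then read off the desired identity of quadratic forms from the explicit Gaussian formula (\ref{Gaussian Z rel to Y}). The key is that a field $\phi'\in F_X$ satisfying $\phi'|_{Y_1}=\phi'|_{Y_2}=\phi$ (via the identification $f$) descends to a unique field $\til\phi\in F_{\til X}$ with $\til\phi|_{\til Y}=\phi$, and conversely; since the gluing only identifies vertices in $Y$, the bulk measures $[D\phi']^{(\phi,\phi)}$ and $[D\til\phi]^\phi$ agree under this bijection.

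Next, I compute the action difference on corresponding fields. Passing from $X$ to $\til X$ removes exactly the vertices of $V_{Y_2}$ (identified with $V_{Y_1}$) and the edges of $E_{Y_2}$ (identified with $E_{Y_1}$ via $f$), while the bulk-bulk edges, bulk-boundary edges, and cross-edges in $E_X^{12}$ are preserved and contribute identically to both actions. Using $\phi|_{Y_2}\circ f=\phi|_{Y_1}$, this gives
\begin{equation*}
S_X(\phi')-S_{\til X}(\til\phi)=\sum_{e\in E_{Y_1}}\tfrac12 (d\phi)_e^2+\sum_{v\in V_{Y_1}}\tfrac{m^2}{2}\phi(v)^2=\tfrac12(\phi,K_{Y_1}\phi).
\end{equation*}
Since $Y=Y_1\sqcup Y_2$ as an abstract graph, $K_Y=K_{Y_1}\oplus K_{Y_2}$ with $K_{Y_2}=K_{Y_1}$, so $\tfrac12 S_Y(\phi,\phi)=\tfrac12(\phi,K_{Y_1}\phi)$ and therefore $S_X(\phi')-\tfrac12 S_Y(\phi,\phi)=S_{\til X}(\til\phi)$.

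Combining these two observations,
\begin{equation*}
Z_{X,Y}(\phi,\phi)=\int [D\til\phi]^\phi\,e^{-\frac{1}{\hbar}S_{\til X}(\til\phi)}=e^{-\frac{1}{2\hbar}S_{\til Y}(\phi)}\,Z_{\til X,\til Y}(\phi).
\end{equation*}
Applying the explicit Gaussian formula (\ref{Gaussian Z rel to Y}) to both sides and matching the $\hbar$-independent prefactors gives $\det K_{X,Y}=\det K_{\til X,\til Y}$ (a useful byproduct), while matching the $\phi$-dependent quadratic exponents, the $\tfrac12(\phi,K_{\til Y}\phi)$ coming from $\DN_{\til Y,\til X}-\tfrac12 K_{\til Y}$ cancels against $S_{\til Y}(\phi)=\tfrac12(\phi,K_{\til Y}\phi)$, yielding the identity in the form (\ref{eq: prop self gluing II}); the form (\ref{eq: prop self gluing I}) follows by rewriting $\tfrac12 K_Y=\tfrac12 K_{Y_1}\oplus\tfrac12 K_{Y_2}$ as an operator on $C^0(Y_1)\oplus C^0(Y_2)$.

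The main obstacle is the edge-by-edge bookkeeping in the action comparison, in particular the cross-edges $E_X^{12}$, which become edges of the (possibly multi-)graph $\til Y$: one must verify that they contribute the \emph{same} term to $S_X(\phi')$ and $S_{\til X}(\til\phi)$ -- so they cancel in the difference -- and that the only non-cancelling contribution arises from the duplicated vertices of $Y_2\cong Y_1$ and edges of $E_{Y_2}\cong E_{Y_1}$ that are collapsed in the gluing.
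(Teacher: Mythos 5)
Your main computation is correct and follows essentially the same route as the paper: you compare $S_X(\phi')$ and $S_{\tilde X}(\tilde\phi)$ on corresponding field configurations, observe that the cross-edges in $E_X^{12}$ contribute identically before and after gluing while the duplicated vertices and edges of $Y_2\cong Y_1$ contribute one extra copy of $S_{Y_1}(\phi)=\tfrac12(\phi,K_{Y_1}\phi)$, and then read off the identity from the Gaussian formula (\ref{Gaussian Z rel to Y}). This correctly establishes \eqref{eq: prop self gluing II} (and the byproduct $\det K_{X,Y}=\det K_{\tilde X,\tilde Y}$), and your explicit treatment of the cross-edges is exactly the point that needs care.

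However, the last step — deducing \eqref{eq: prop self gluing I} ``by rewriting $\tfrac12 K_Y=\tfrac12K_{Y_1}\oplus\tfrac12K_{Y_2}$'' — does not work. If $K_Y$ really equalled $K_{Y_1}\oplus K_{Y_2}$, then the left-hand sides of \eqref{eq: prop self gluing I} and \eqref{eq: prop self gluing II} would coincide, while their right-hand sides differ by $\tfrac12\bigl(\phi,(K_{\tilde Y}-K_{Y_1})\phi\bigr)$, which is nonzero precisely when there are edges between $Y_1$ and $Y_2$; so under your convention \eqref{eq: prop self gluing I} would be false in exactly the interesting case. The convention implicit in the statement (and made explicit in the paper's proof) is that $K_Y$ includes the cross-edge contribution, i.e.\ $K_Y-K_{Y_1}-K_{Y_2}=K_{Y_1,Y_2}$, and the equivalence of \eqref{eq: prop self gluing I} and \eqref{eq: prop self gluing II} rests on the identity of quadratic forms $\bigl((\phi,\phi),K_{Y_1,Y_2}(\phi,\phi)\bigr)=\bigl(\phi,(K_{\tilde Y}-K_{Y_1})\phi\bigr)$ — the same cross-edge bookkeeping you already did for the action, but now applied to $S_Y$ itself. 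Relatedly, your intermediate formula $Z_{X,Y}(\phi,\phi)=e^{-\frac{1}{2\hbar}S_{\tilde Y}(\phi)}Z_{\tilde X,\tilde Y}(\phi)$ differs from the paper's $Z_{X,Y}(\phi,\phi)=e^{-\frac{1}{2\hbar}S_{Y_1}(\phi)}Z_{\tilde X,\tilde Y}(\phi)$ by exactly this $\tfrac12 S_{Y_1,Y_2}(\phi)$ term, for the same reason. The fix is short — insert the identification $K_Y-K_{Y_1}-K_{Y_2}=K_{Y_1,Y_2}=K_{\tilde Y}-K_{Y_1}$ on the diagonal — but as written the passage from \eqref{eq: prop self gluing II} to \eqref{eq: prop self gluing I} is a gap.
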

\begin{proof}
    We have 
    $$S_X = S_{X,Y}  + S_{Y_1} + S_{Y_2} + S_{Y_1,Y_2},$$
    where the first term contains contributions to the action from vertices in $X\setminus Y$ and edges with at least one vertex in $X \setminus Y$, while the last term contains just contributions from edges between $Y_1$ and $Y_2$. Evaluating on the subspace of fields $F_X^{({\phi,\phi})}$ that agree on $Y_1$ and $Y_2$, we get  
    $$ S_X\big|_{F_X^{({\phi,\phi})}} = S_{X,Y} + 2S_{Y_1}(\phi) + S_{Y_1,Y_2}(\phi)$$ 
    and $S_X - \frac{1}{2}S_Y = S_{X,Y} + S_{Y_1} + \frac{1}{2}S_{Y_1,Y_2}.$
    On the other hand, we have 
    $$
    \begin{aligned}
    S_{\tilde{X}} - \frac12 S_{
    \tilde{Y}}\big|_{F_{\tilde{X}}^\phi} &= S_{\tilde{X},\tilde{Y}} + \frac12S_{\tilde{Y}}(\phi)= S_{X,Y} + \frac{1}{2}S_{Y_1}(\phi) +\frac{1}{2} S_{Y_1,Y_2}(\phi) \\ &= S_X - \frac{1}{2}S_{Y}(\phi) - \frac12 S_{Y_1}.
    \end{aligned}
    $$
    Therefore, 
    $$Z_{X,Y}((\phi,\phi)) = Z_{\tilde{X},\tilde{Y}}(\phi) e^{-\frac{1}{2\hbar}S_{Y_1}(\phi)}. $$
    Noticing that the relative operators agree $K_{X,Y} = K_{\tilde{X},\tilde{Y}}$, and using \eqref{Gaussian Z rel to Y}, we obtain \eqref{eq: prop self gluing I}. 
    To see \eqref{eq: prop self gluing II}, notice that the difference $K_Y - K_{Y_1} - K_{Y_2} = K_{Y_1,Y_2} = K_{\tilde{Y}} - K_{Y_1}$, so adding $\frac{1}{2}K_{Y_1,Y_2}$ to \eqref{eq: prop self gluing I} we obtain \eqref{eq: prop self gluing II}.
\end{proof}
\begin{corollary}
We have the following trace formula
\begin{equation}
    Z_{\tilde{X}} = \int_{F_{Y_1}}[D\phi]\langle \phi | Z_{X,Y_1,Y_2}|\phi \rangle .
\end{equation}
\begin{proof}
    We have 
    \begin{align*}
    \langle \phi |Z_{X,Y_1,Y_2}|\phi\rangle) &= \det K_{X,Y}^{-\frac12}e^{-\frac{1}{2\hbar}((\phi,\phi),(\DN_{Y,X} -\frac12 K_{Y_1} - \frac12 K_{Y_2})(\phi,\phi)} \\ 
    &=\det K_{\tilde{X},\tilde{Y}}^{-\frac12} e^{-\frac{1}{2\hbar}((\phi,\DN_{\tilde{Y},\tilde{X}})\phi) }.
    \end{align*}
    Integrating over $\phi$, we obtain the result. 
\end{proof}
\end{corollary}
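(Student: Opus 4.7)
The plan is to unpack the Gaussian form of the integrand, invoke Proposition \ref{prop: self gluing} to collapse the quadratic form on the diagonal, and then match the result to $Z_{\tilde X}=(\det K_{\tilde X})^{-1/2}$ via the gluing formula for determinants from Theorem \ref{thm: gluing prop and det}.

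First, I would write out $\langle \phi|Z_{X,Y_1,Y_2}|\phi\rangle$ using the explicit Gaussian expression \eqref{Gaussian Z rel to Y} for the relative partition function, being careful about what is being subtracted in the exponent. Since $Y=Y_1\sqcup Y_2$ enters as an in/out boundary pair (cf.\ \eqref{Z int kernel}), the boundary action appearing in the exponent is $\frac12 S_{Y_1}(\phi)+\frac12 S_{Y_2}(\phi)$ rather than $\frac12 S_Y((\phi,\phi))$; this converts the quadratic form to
\[
((\phi,\phi),(\DN_{Y,X}-\tfrac12 K_{Y_1}-\tfrac12 K_{Y_2})(\phi,\phi)).
\]
Equation \eqref{eq: prop self gluing II} of Proposition \ref{prop: self gluing} then rewrites this expression as $(\phi,\DN_{\tilde Y,\tilde X}\phi)$.

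Second, I would observe the identification $K_{X,Y}=K_{\tilde X,\tilde Y}$: both act on $F_{X\setminus Y}=F_{\tilde X\setminus \tilde Y}$ and are built from exactly the same bulk vertices and the same edges incident to the bulk (self-gluing $Y_1$ to $Y_2$ only modifies what happens strictly inside $Y$). Hence
\[
\langle \phi|Z_{X,Y_1,Y_2}|\phi\rangle=(\det K_{\tilde X,\tilde Y})^{-1/2}\,e^{-\frac{1}{2\hbar}(\phi,\DN_{\tilde Y,\tilde X}\phi)}.
\]
Performing the Gaussian integral over $\phi\in F_{Y_1}\cong F_{\tilde Y}$ gives
$(\det K_{\tilde X,\tilde Y})^{-1/2}(\det \DN_{\tilde Y,\tilde X})^{-1/2}$.

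Third, to recognize this as $Z_{\tilde X}$, I would apply the determinant identity
$\det K_{\tilde X}=\det K_{\tilde X,\tilde Y}\cdot\det \DN_{\tilde Y,\tilde X}$,
which is the Schur-complement output used inside Proof 2 of Theorem \ref{thm: gluing prop and det} (a single-graph specialization, not needing a decomposition of $\tilde X$ into two pieces). This collapses the Gaussian integral to $(\det K_{\tilde X})^{-1/2}=Z_{\tilde X}$, completing the proof.

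The main obstacle is the bookkeeping around the boundary action: one has to be careful that the kernel $\langle\phi|Z_{X,Y_1,Y_2}|\phi\rangle$ subtracts $\frac12 S_{Y_1}+\frac12 S_{Y_2}$ rather than $\frac12 S_Y$ (these differ by $\frac12 S_{Y_1,Y_2}$ when there are edges between $Y_1$ and $Y_2$), so that the version \eqref{eq: prop self gluing II} of the self-gluing proposition, rather than \eqref{eq: prop self gluing I}, is the one that applies. Once this is straightened out, the rest is a routine Gaussian integration combined with the determinant gluing relation.
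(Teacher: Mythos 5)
Your proposal is correct and follows essentially the same route as the paper: write the kernel as a Gaussian in $\phi$ with quadratic form $\DN_{Y,X}-\tfrac12 K_{Y_1}-\tfrac12 K_{Y_2}$ restricted to the diagonal, apply \eqref{eq: prop self gluing II} together with $K_{X,Y}=K_{\tilde X,\tilde Y}$, and integrate. The only difference is that you spell out the final step (the Gaussian integral plus the Schur identity $\det K_{\tilde X}=\det K_{\tilde X,\tilde Y}\cdot\det\DN_{\tilde Y,\tilde X}$), which the paper compresses into ``integrating over $\phi$, we obtain the result.''
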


\begin{example}[Gluing a circle graph from a line graph]
     For the line graph $L_3$ relative to both endpoints, $$DN_{Y,X} = \frac{m^2+3}{m^2+2}\begin{pmatrix} m^2 +1  & - 1 \\ 
    -1 & m^2 +1         
    \end{pmatrix}. $$
    In this case we have $K_{Y_1} = K_{Y_2} = m^2$ and 
    $$\begin{pmatrix}
        1 & 1 
    \end{pmatrix} \frac{m^2+3}{m^2+2}\begin{pmatrix} m^2 +1  & - 1 \\ 
    -1 & m^2 +1         
    \end{pmatrix} \begin{pmatrix}
        1 \\ 1 
    \end{pmatrix} = \frac{2m^2(m^2+3)}{m^2+2} ,$$
    which implies 
    $$\frac{2m^2(m^2+3)}{m^2 +2} - \underbrace{m^2}_{-\frac12 K_{Y_1} -\frac12 K_{Y_2}} = \frac{m^2(m^2+4)}{m^2 +2}.$$
    Here $m^2 = K_{Y_1}$.
    On the other hand, $\tilde{X} = C_2$ is a circle graph with $\tilde{Y}$ a point, and we have 
    $$K_X = \begin{pmatrix}
        m^2 + 2 & -2 \\ 
        -2 & m^2 +2 
    \end{pmatrix}, \qquad G_X =  \frac{1}{m^2(m^2+4)} \begin{pmatrix}
      m^2 + 2 &  2\\ 
      2 & m^2 +2
    \end{pmatrix},$$
    therefore the corresponding Dirichlet-to-Neumann operator is 
    $$ \DN_{\tilde{Y},\tilde{X}} = \frac{m^2(m^2+4)}{m^2+2} ,$$
    as predicted by Proposition \ref{prop: self gluing}. The relative determinant $K_{Y,X}$ is $m^2 + 2$ so that the trace formula becomes
    $$ \int_{F_{Y_1}}[D\phi]\langle \phi | Z_{X,Y_1,Y_2}|\phi \rangle = (m^2+2)^{-\frac12}\left(\frac{m^2(m^2+4)}{m^2+2}\right)^{-\frac12} = (m^2(m^2+4))^{-\frac12} = Z_{C_2}.$$
     Similarly, for the line graph of length $N$ relative to both endpoints, the Dirichlet-to-Neumann operator is given by \eqref{DN for path graph rel both ends} and we have  
     $$\begin{pmatrix}
        1 & 1 
    \end{pmatrix} \DN_{Y,X} \begin{pmatrix}
        1 \\ 1 
    \end{pmatrix} = \frac{4\sinh\frac{\beta}{2}(\cosh\beta (N-\frac12) 
    - \cosh \frac{\beta}{2})}{\sinh \beta(N-1)} .$$
    On the other hand, the Dirichlet-to-Neumann operator of $\tilde{X}=C_{N-1}$ relative to a single vertex is 
    $$\DN_{\tilde{Y},\tilde{X}} = 2\sinh\beta \tanh\beta\frac{N-1}{2}.$$
    Then one 
    can check that 
    $$  \frac{4\sinh\frac{\beta}{2}(\cosh\beta (N-\frac12) 
    - \cosh \frac{\beta}{2})}{\sinh \beta(N-1)} - m^2 = 2\sinh\beta  \tanh\beta\frac{N-1}{2}.$$
\end{example}
\begin{remark} 
    There is of course also common generalization of Theorem \ref{thm: gluing formulae for composition of graph cobordisms} and Proposition \ref{prop: self gluing}, where we have several boundary components and are allowed sew any two isomorphic components together, we leave this statement to the imagination of the reader. 
\end{remark}

\subsection{Comparison to continuum formulation} \label{ss: gluing in Gaussian theory: comparison to continuum}
In this subsection, we compare of results of subsections \ref{subsec:Gaussian-theory-relative-boundary} and \ref{subsec:Gluing-in-Gaussian-theory} to the continuum counterparts for a free scalar theory on a Riemannian manifold. For details on the latter, we refer to \cite{KMW}. 

Consider the free scalar theory on a closed Riemannian manifold $M$ defined by the action
$$ S(\phi)=\int_M \frac12 d\phi \wedge *d\phi +\frac{m^2}{2}\phi^2 \dvol = \int_M \frac12 \phi (\Delta+m^2)\phi\, \dvol ,$$
where $\phi\in C^\infty(M)$ is the scalar field, $m>0$ is the mass, $*$ is the Hodge star associated with the metric, $\dvol$ is the metric volume form and $\Delta$ is the metric Laplacian.

The partition function is defined to be
$$ Z=``{\int D\phi \,e^{-\frac{1}{\hbar} S(\phi)} \;}"\colon= \left({\det}^\zeta (\Delta+m^2)\right)^{-\frac12} ,$$
where ${\det}^\zeta$ stands for the functional determinant  understood in the sense of zeta-regularization. Correlators are given by Wick's lemma in terms of the Green's function $G(x,y)\in C^\infty(M\times M \setminus \mr{Diag})$ of the operator $\Delta+m^2$.

Next, if $M$ is a compact Riemannian manifold with boundary $\dd M$, one can impose Dirichlet boundary condition $\phi|_{\dd M}= \phi_\dd$ -- a fixed function on $\dd M$ (thus, fluctuations of fields are zero on the boundary). 
The unique solution of the Dirichlet boundary value problem on $M$,
$$ (\Delta+m^2)\phi=0,\quad \phi|_{\dd}=\phi_\dd, $$
can be written as
\begin{equation}
    \phi(x)=\int_{\dd M} \dd^n_y G_D(x,y) \,\phi_\dd(y) \dvol^\dd_y.
\end{equation}
Here:
\begin{itemize}
    \item $\dvol^{\dd}$ is the Riemannian volume form on $\dd M$ (w.r.t. the induced metric from the bulk).
    \item $G_D \in C^\infty(M\times M\setminus \mr{Diag})$ is the Green's function for the operator $\Delta+m^2$ with Dirichlet boundary condition.
    \item $\dd^n$ stands for the normal derivative at the boundary. In particular, for $x\in M$, $y\in \dd M$, 
    \begin{equation}\label{dn dn G}
    \dd^n_y G_D(x,y)  =  \left.\frac{\dd}{\dd t}\right|_{t=0} G_D(x,\til{y}_t), 
    \end{equation}
where $\til{y}_t$, $t\geq 0$ is a curve in $M$ starting at $\til{y}_0=y$ with initial velocity being the inward unit normal to the boundary.  
\end{itemize}

Then on a manifold with boundary one has the partition function
\begin{multline}\label{Z continuum with bdry}
Z(\phi_\dd) =``{\int_{\phi|_\dd=\phi_\dd} D\phi \,e^{-\frac{1}{\hbar} S(\phi)} \;}"\\= \left({\det}^{\zeta}_D(\Delta+m^2)\right)^{-\frac12} e^{-\frac{1}{2\hbar}\int_{\dd M} \phi_\dd \DN(\phi_\dd) \dvol^\dd } \\
= \left({\det}^{\zeta}_D(\Delta+m^2)\right)^{-\frac12} e^{\frac{1}{2\hbar} \int_{\dd M\times \dd M} \phi_\dd(x) \,\dd^n_x \dd^n_y G_D(x,y)\, \phi_\dd(y) d\mr{vol}_x^\dd d\mr{vol}_y^\dd }.
\end{multline}
Here in the determinant in the r.h.s., $\Delta+m^2$ is understood as acting on smooth functions on $M$ vanishing on $\dd M$ (which we indicate by the subscript $D$ for ``Dirichlet boundary condition''); $\DN\colon C^\infty(\dd M)\ra C^\infty(\dd M)$ is the Dirichlet-to-Neumann operator (see footnote \ref{footnote: DN operator in continuum}).
The integral kernel of the DN operator is 
$ - \dd^n_x \dd^n_y G_D(x,y)$.

The integral in the exponential in the last line of (\ref{Z continuum with bdry}) contains a non-integrable singularity on the diagonal and has to be appropriately regularized, cf. Remark 3.4 in \cite{KMW}.

Correlators on a manifold with boundary are:
\begin{itemize}
    \item One-point correlator: 
    $$\langle \phi(x) \rangle_{\phi_\dd} = \int_{\dd M}\dd^n_y G_D(x,y) \phi_\dd(y) \dvol^\dd_y .$$
    \item Centered two-point correlator:
    $$ \langle \delta\phi (x) \delta \phi(y) \rangle_{\phi_\dd} =\hbar\, G_D(x,y) ,$$
    where $\delta \phi(x)\colon= \phi(x)-\langle \phi(x) \rangle_{\phi_\dd}$.
    \item $k$-point centered correlators are given by Wick's lemma.
\end{itemize}

When more detailed notations of the manifolds involved is needed, instead of $G_D$ we will write $G_{M, \dd M}$ (and similarly for ${\det}^\zeta_D$) and instead of $\DN$ we will write $\DN_{\dd M,M}$.

Continuing the dictionary of Remark \ref{rem: graph-continuum QFT dictionary} to free scalar theory on graphs vs. Riemannian manifolds, we have the following.

\begin{table}[H]
\bgroup\def\arraystretch{1.4}
\begin{tabular}{c|c}
  Scalar theory on a graph $X$  &  Scalar theory on a Riemannian manifold $M$  \\
   relative to subgraph $Y$   &  with boundary $\dd M$  \\ \hline 
  propagator $G_{X,Y}$ &  $G_D(x,y)$ with $x,y\in M$ \\
  extension operator $E_{Y,X}$ &  $\dd^n_y G_D(x,y)$ with $x\in M,\; y\in\dd M$ \\
  DN operator $\DN_{Y,X}$ & $\DN$ or $-\dd^n_x \dd^n_y G_D(x,y)$ with $x,y\in \dd M$ \\ 
  $\det K_{X,Y}$ &  ${\det}^\zeta_D (\Delta+m^2)$
\end{tabular}
\egroup
\caption{Comparing the toy model and continuum theory, continued.}
\end{table}
\vspace{0.3cm}
Here in the right column we 
are writing the integral kernels of operators instead of operators themselves.

\subsubsection{Gluing formulae for Green's functions and determinants}\label{subsubsec: gluing continuum}
Assume that $M$ is a closed Riemannian manifold cut be a closed codimension 1 submanifold $\gamma$ into two pieces, $M'$ and $M''$. Then one can recover the Green's function for the operator $\Delta+m^2$ on $M$ from Green's functions on $M'$ and $M''$, both with Dirichlet condition on $\gamma$, as follows.\footnote{
This result is originally due to \cite{Carron}, see also \cite{KMW}.
}
\begin{itemize}
    \item For $x,y\in M'$, one has
    \begin{equation}\label{G gluing continuum eq1}
    G_M(x,y) = G_{M',\gamma}(x,y) + \int_{\gamma} \dvol^\gamma_{w} \int_\gamma \dvol^\gamma_z\, \dd^n_w G_{M',\gamma}(x,w)  \varkappa(w,z) \dd^n_z G_{M',\gamma}(z,y) .
    \end{equation}
    Here $\varkappa(w,z)$ is the integral kernel of the inverse of the total (or ``interface'') DN operator 
    \begin{equation} \label{DN interface continuum}
    \DNI\colon=\DN_{\gamma,M'}+\DN_{\gamma,M''}.
    \end{equation}
    For $x,y\in M''$, one has a similar formula to (\ref{G gluing continuum eq1}).
    \item For $x\in M'$, $y\in M''$, one has
    \begin{equation}\label{G gluing continuum eq2}
        G_M(x,y)= \int_{\gamma} \dvol^\gamma_{w} \int_\gamma \dvol^\gamma_z\, \dd^n_w G_{M',\gamma}(x,w)  \varkappa(w,z) \dd^n_z G_{M'',\gamma}(z,y) ,
    \end{equation}
    and similarly if $x\in M''$, $y\in M'$.
\end{itemize}

In the case $\dim M=2$, the zeta-regularized determinants satisfy a remarkable Mayer-Vietoris type gluing formula due to Burghelea-Friedlander-Kappeler \cite{BFK},
\begin{equation}\label{BFK continuum}
    {\det}^\zeta_{M}(\Delta+m^2)= {\det}^\zeta_{M',\gamma}(\Delta+m^2)\,{\det}^\zeta_{M'',\gamma}(\Delta+m^2)\, {\det}^\zeta_\gamma (\DNI).
\end{equation}
This formula also holds for higher even dimensions provided that the metric near the cut $\gamma$ is of warped product type (this is a result of Lee \cite{Lee}). In odd dimensions, under a similar assumption, the formula is known to hold up to a multiplicative constant known explicitly in terms of the metric on the cut.

Note that formulae (\ref{G gluing continuum eq1}), (\ref{G gluing continuum eq2}) have the exact same structure as formulae (\ref{propagator gluing eq1}), (\ref{propagator gluing eq2}) for gluing of graph propagators.\footnote{
One small remark is that the continuum formula for the interface DN operator (\ref{DN interface continuum}) is similar to  (\ref{DN total}), except for the $-K_Y$ term in the l.h.s. which is specific to the graph setting and disappears in the continuum limit.
} Likewise, the gluing formulae for determinants in the continuum setting (\ref{BFK continuum}) and in graph setting (\ref{determinant gluing}) have the same structure.

One can also allow the manifold $M$ to have extra boundary components disjoint from the cut, i.e., to consider $M$ as a composition of two cobordisms $\gamma_1\xra{M'}\gamma_2,\; \gamma_2\xra{M''}\gamma_3$. One then has the corresponding gluing formulae which have the same structure as the formulae of Theorem \ref{thm: gluing formulae for composition of graph cobordisms}. In particular, one has a gluing formula for continuum DN operators (see \cite{Pickrell}) similar to the formula (\ref{gluing of DN operators}) in the graph setting  .

\subsubsection{Example: continuum limit of line and circle graphs}\label{subsubsec: continuum example}
The action of the continuum theory on an interval $[0,L]$ evaluated on a smooth field $\phi\in C^\infty([0,L])$ can be seen as a limit of Riemann sums
$$ S(\phi)=\lim_{N\ra \infty} \sum_{i=2}^{N} \frac{\epsilon_N}{2}   \left(\frac{\phi( i\epsilon_N )-\phi((i-1)\epsilon_N)}{\epsilon_N}\right)^2 + \sum_{i=1}^N \epsilon_N \frac{m^2}{2}\phi(i\epsilon_N)^2 , $$
where in the r.h.s. we denoted $\epsilon_N = L/N$. The r.h.s. can be seen as the action of the graph theory on a line graph with $N=L/\epsilon$ vertices, where the mass is scaled as $m\mapsto \epsilon m $ and then the kinetic operator is scaled as $K\mapsto \epsilon^{-1}K$ (and thus the propagator scales as $G\ra \epsilon G$), where we consider the limit $\epsilon\ra 0$ (we are approximating the interval by a portion of a 1d lattice and taking the lattice spacing to zero).

Applying the scaling above to the formulae of Example \ref{ex: path graph rel both ends}, we obtain the following for the propagator (\ref{G for path graph rel both ends}):
$$ G_\mr{graph}(x,y) \underset{\epsilon\ra 0}{\sim}  \frac{\cosh m(L-|x-y|)-\cosh m(L-x-y)}{\sinh mL} , $$
where we denoted $x=i\epsilon, y=j\epsilon$ -- we think of $i,j$ as scaling with $\epsilon$ so that $x,y$ remain fixed. The r.h.s. above is the Green's function for the operator $\Delta+m^2$ on an interval $[0,L]$ with Dirichlet boundary conditions at the endpoints.\footnote{
For the formulae pertaining to the continuum theory on an interval, see e.g. \cite[Appendix A.1]{KMW}.
} For the DN operator (\ref{DN for path graph rel both ends}), we obtain 
$$ 
\DN_\mr{graph} \underset{\epsilon\ra 0}{\sim}  \frac{m}{\sinh mL} \left( 
\begin{array}{cc}
    \cosh mL  & -1 \\
    -1 & \cosh mL
\end{array}
\right) .
$$
The r.h.s. is the correct DN operator of the continuum theory on the interval. 

For the determinant (\ref{det for path graph rel both ends}), we have
$$ \det K_\mr{graph}  \underset{\epsilon\ra 0}{\sim}  \epsilon^{-N}\frac{\sinh mL}{m} .$$
For comparison, the zeta-regularized determinant on the interval is
$$ {\det}^\zeta_D (\Delta+m^2)=\frac{2\sinh mL}{m} .$$
It differs from the graph result by a scaling factor $\epsilon^{N}$ and an extra factor $2$ which exhibits a discrepancy between the two regularizations of the functional determinant -- lattice vs. zeta regularization.

\begin{remark}
    One can similarly consider the continuum limit for the line graph of Example \ref{ex: path graph}, without Dirichlet condition at the endpoints. Its continuum counterpart is the theory on an interval $[0,L]$ with \emph{Neumann} boundary conditions at the endpoints, cf. footnote \ref{footnote: path graph -- Neumann bc}. Likewise, in the continuum limit for line graphs relative to one endpoint (Example \ref{ex: path graph rel one end}), one recovers the continuum theory with Dirichlet condition at one endpoint and Neumann condition at the other. 
    
    For example, the zeta-determinant for Neumann condition at both ends is ${\det}_{N-N}^\zeta(\Delta+m^2)=2m\sinh mL$. For Dirichlet condition at one end and Neumann at the other, one has ${\det}^\zeta_{D-N}(\Delta+m^2)=2\cosh mL$. These formulae are related to the continuum limit of the discrete counterparts (\eqref{N-N graph det} for Neumann-Neumann and \eqref{N-D graph determinant} for Neumann-Dirichlet boundary conditions) in the same way as in the Dirichlet-Dirichlet case (by scaling with $\epsilon^N$ and an extra factor of 2). 
\end{remark}

In the same vein, we can consider a circle of length $L$ as a limit of circle graphs (Example \ref{ex: circle graph}) with spacing $\epsilon$. Then in the scaling limit, from (\ref{G circle graph}) we have
$$ G_\mr{graph}(x,y)  \underset{\epsilon\ra 0}{\sim} \frac{\cosh m (\frac{L}{2}-|x-y|)}{2m \sinh \frac{mL}{2}},$$
where the r.h.s. coincides with the continuum Green's function on a circle. For the determinant (\ref{det circle graph}), we have
\begin{equation}
\det K_\mr{graph} \underset{\epsilon\ra 0}{\sim} \epsilon^{-N} 4\sinh^2\frac{mL}{2} . \label{det circle continuum limit}
\end{equation} 
For comparison, the corresponding zeta-regularized functional determinant is
$$ {\det}^\zeta (\Delta+m^2) = 4\sinh^2\frac{mL}{2},$$
which coincides with the r.h.s. of (\ref{det circle continuum limit}) up to the scaling factor $\epsilon^{N}$.

\section{Interacting theory via Feynman diagrams}\label{s: Interacting theory via Feynman diagrams}


Consider scalar field theory on a closed graph $X$ defined by the action (\ref{S}) -- the perturbation of the Gaussian theory by an interaction potential $p(\phi)$:
$$ S_X(\phi)=\underbrace{\frac12 (\phi,K_X\phi)}_{S^{0}_X(\phi)} + \underbrace{\langle \mu, p(\phi) \rangle}_{S^\mr{int}_X(\phi)} . $$
The partition function (\ref{Z}) can be computed by perturbation theory -- the Laplace method for the $\hbar\ra 0$ asymptotics of the integral, with corrections given by Feynman diagrams (see e.g. \cite{Etingof}):
\begin{multline}\label{Z perturbative expansion}
    Z_X=\int_{F_X} D\phi\, e^{-\frac{1}{\hbar} (S^0_X(\phi)+S^\mr{int}_X(\phi)
    )} = \ll e^{-\frac{1}{\hbar} S_X^\mr{int}(\phi)} \gg^0=
    \\
    =\ll \sum_{n\geq 0} \frac{(-1)^n}{\hbar^{n} n!}\sum_{v_1,\ldots,v_n\in V_X}p(\phi(v_1))\cdots p(\phi(v_n))  \gg^0
    \underset{\hbar\ra 0}{\sim} \\
     \underset{\hbar\ra 0}{\sim} 
     \det (K_X)^{-\frac12}\cdot \sum_{\Gamma} \frac{\hbar^{-\chi(\Gamma)}}{|\mr{Aut}(\Gamma)|} \Phi_{\Gamma,X}.
\end{multline}
Here: 
\begin{itemize}
    \item  $\ll\cdots \gg^0$ stands for the non-normalized correlator in the Gaussian theory.
    \item The sum in the r.h.s. is over finite (not necessarily connected) graphs $\Gamma$ (Feynman graphs) with vertices of valence $\geq 3$; $\chi(\Gamma)\leq 0$ is the Euler characteristic of a graph; $|\mr{Aut}(\Gamma)|$ -- the order of the automorphism group of the graph.
    \item The weight of the Feynman graph is\footnote{
    We are using sans-serif font to distinguish vertices of the Feynman graph $\s{u},\s{v}$ from the vertices of the spacetime graph $u,v$.
    }
    \begin{equation}\label{Feynman weight, X closed}
        \Phi_{\Gamma,X}= \sum_{f\colon V_\Gamma\ra V_X} \prod_{\s{v}\in V_\Gamma} (-p_{\mr{val}(\s{v})})\cdot \prod_{(\s{u},\s{v})\in E_\Gamma}G_X(f(\s{u}),f(\s{v})),
    \end{equation}
    where $p_k$ are the coefficients of the potential (\ref{p(phi)}); $G_X$ is the Green's function of the Gaussian theory. The sum over $f$ here -- the sum over $|V_\Gamma|$-tuples of vertices of the spacetime graph $X$ -- is a graph QFT analog of the configuration space integral formula for the weight of a Feynman graph (cf. e.g. \cite{KMW}). 
\end{itemize}

The expression in the r.h.s. of (\ref{Z perturbative expansion}) is a power series in nonnegative powers in $\hbar$, with finitely many graphs contributing at each order.\footnote{This is due to the restriction that valencies in $\Gamma$ are $\geq 3$, which is in turn due to the assumption that
$p(\phi)$ contains only terms of degree $\geq 3$.} We denote the r.h.s. of (\ref{Z perturbative expansion}) by $Z_X^\mr{pert}$ -- the perturbative partition function. It is an asymptotic expansion of the measure-theoretic integral in the l.h.s. of (\ref{Z perturbative expansion}) -- the nonperturbative partition function. 

\subsection{Version relative to a boundary subgraph}
Let $X$ be graph and $Y\subset X$ a ``boundary'' subgraph. We have the following relative version of the perturbative expansion (\ref{Z perturbative expansion}):
\begin{multline}\label{Z relative, perturbative expansion}
    Z_{X,Y}(\phi_Y)=\int_{F_X^{\phi_Y}} [D \phi]^{\phi_Y} e^{-\frac{1}{\hbar}(S_X(\phi)-\frac12 S_Y(\phi_Y))} =\\
   = e^{-\frac{1}{2\hbar}S_Y^\mr{int}(\phi_Y)} \ll e^{-\frac{1}{\hbar}S^\mr{int}_{X\setminus Y}(\phi)} \gg^0_{\phi_Y}
    \\
    \underset{\hbar\ra 0}{\sim}
    \det(K_{X,Y})^{-\frac12} \cdot e^{-\frac{1}{2\hbar} ( (\phi_Y,(\DN_{Y,X}-\frac12 K_Y)\phi_Y) +S_Y^\mr{int}(\phi_Y))} \cdot \sum_\Gamma \frac{\hbar^{-\chi(\Gamma)}}{|\mr{Aut}(\Gamma)|} \Phi_{\Gamma,(X,Y)}(\phi_Y).
\end{multline}
Here the sum is over Feynman graphs $\Gamma$ with vertices split into two subsets -- ``bulk'' vertices $V^\mr{bulk}_\Gamma$ and ``boundary'' vertices $V^\dd_\Gamma$ -- with bulk vertices of valence $\geq 3$ and univalent boundary vertices. In graphs $\Gamma$ we are not allowing edges connecting two boundary vertices (while bulk-bulk and bulk-boundary edges are allowed). The weight of a Feynman graph is a polynomial in the boundary field $\phi_Y$:
\begin{multline}\label{Feynman weight, relative}
    \Phi_{\Gamma,(X,Y)}(\phi_Y)=\\
    = \sum_{f\colon 
    {\tiny
    \begin{array}{c}
         V_\Gamma^\mr{bulk} \ra V_X\setminus V_Y  \\
         V_\Gamma^\dd \ra V_Y
    \end{array}
    }}
    \prod_{\s{v}\in V_\Gamma^\mr{bulk}} (-p_{\mr{val}(\s{v})})\cdot 
    \prod_{\s{u}^\dd\in V_\Gamma^\dd} \phi(f(\s{u}^\dd))
    \cdot 
    \\
    \cdot 
   \prod_{(\s{u},\s{v})\in E_\Gamma^\mr{bulk-bulk}} G_{X,Y}(f(\s{u}),f(\s{v}))\cdot
    \prod_{(\s{u}^\dd,\s{v})\in E_\Gamma^{\mr{bdry-bulk}}} E_{Y,X}(f(\s{u}^\dd),f(\s{v})).
\end{multline}
The sum over $f$ here can be seen as a sum over tuples of bulk and boundary vertices in $X$. Similarly to (\ref{Feynman weight, X closed}), it is a graph QFT analog of a configuration space integral formula for the Feynman diagrams in the interacting scalar field theory on manifolds with boundary (cf. \cite{KMW}), where one is integrating over configurations of $n$ bulk points and $m$ boundary points on the spacetime manifold.

We will denote the r.h.s. of (\ref{Z relative, perturbative expansion}) by $Z^\mr{pert}_{X,Y}(\phi_Y)$.

\begin{example}
    Figure \ref{fig: feynman graph with boundary} is an example of a map $f$ contributing to the Feynman weight (\ref{Feynman weight, relative}):
    \begin{figure}[H]
    \includegraphics[scale=0.7]{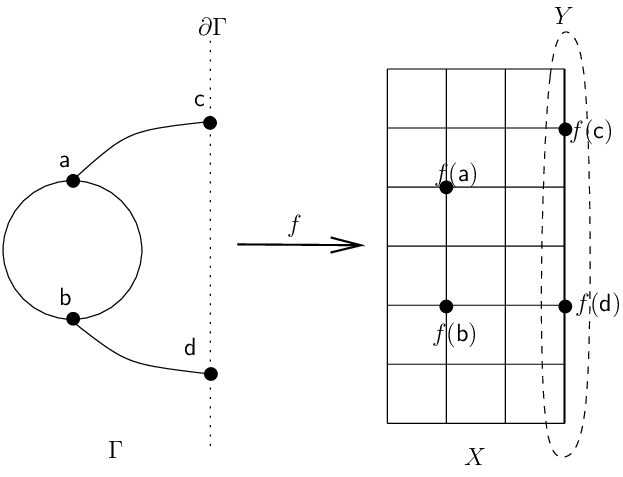}
    \caption{A Feynman graph with boundary vertices and a map contributing to its Feynman weight.}
    \label{fig: feynman graph with boundary}
    \end{figure}
    The full Feynman weight of the graph on the left is:
    $$  
    \Phi_{\Gamma,(X,Y)}(\phi_Y)=\sum_{a,b\in V_X\setminus V_Y,\; c,d \in V_Y}
    (p_3)^2  G_{X,Y}(a,b)^2 E_{Y,X}(c,a) E_{Y,X}(d,b)\, \phi(c) \phi(d),
    $$
    where we denoted $a=f(\s{a}),\, b=f(\s{b}),\, c=f(\s{c}),\, d=f(\s{d})$.
\end{example}

\begin{remark}\label{rem: rems on Z pert rel}
    \begin{enumerate}[(i)]
        \item By the standard argument, due to multiplicativity of Feynman weights w.r.t. disjoint unions of Feynman graphs,  the sum over graphs $\Gamma$ in (\ref{Z perturbative expansion}), (\ref{Z relative, perturbative expansion}) can be written as the exponential of the sum over connected Feynman graphs, $\sum_\Gamma \cdots = e^{\sum_{\Gamma\,\mr{connected}}\cdots}$.
        \item One can rewrite the r.h.s. of (\ref{Z relative, perturbative expansion}) without the DN operator in the exponent in the prefactor, but instead allowing graphs $\Gamma$ with boundary-boundary edges. The latter  contribute extra factors $-\DN_{Y,X}(\s{u}^\dd,\s{v}^\dd)$ in the Feynman weight (\ref{Feynman weight, relative}).
        \item Unlike the case of closed $X$, the sum over $\Gamma$ in the r.h.s. of (\ref{Z relative, perturbative expansion}) generally contributes infinitely many terms to each nonnegative order in $\hbar$ (for instance, in the order $O(\hbar^0)$, one has 1-loop graphs formed by trees connected to a cycle). However, there are finitely many graphs contributing to a given order in $\hbar$, in any fixed polynomial degree in $\phi_Y$. Moreover, one can introduce a rescaled boundary field $\eta_Y$ so that 
        \begin{equation}\label{phi_Y rescaling}
        \phi_Y=\sqrt\hbar\, \eta_Y.
        \end{equation} 
        Then (\ref{Z relative, perturbative expansion}) expressed as a function of $\eta_Y$ is a power series in nonnegative half-integer powers of $\hbar$, with finitely many graphs contributing at each order.\footnote{The power of $\hbar$ accompanying a graph is $\hbar^{|E_\Gamma|-|V_\Gamma^\mr{bulk}|-\frac12 |V_\Gamma^\dd|}$, i.e., one can think that with this normalization of the boundary field, boundary vertices contribute $1/2$ instead of $1$ to the Euler characteristic of a Feynman graph.
        
        We also note that the rescaling (\ref{phi_Y rescaling}) is rather natural, as the expected magnitude of fluctuations of $\phi_Y$ around zero is $O(\sqrt\hbar)$.}
    \end{enumerate}
\end{remark}

\subsection{Cutting/gluing of perturbative partition functions via cutting/gluing of Feynman diagrams}\label{sec: cutting pert Z}
As in Section \ref{sss Gaussian theory, cutting  closed graph}, consider a closed graph $X=X'\cup_Y X''$ obtained from graphs $X',X''$ by gluing along a common subgraph $X'\supset Y \subset X''$ (but now we consider the interacting scalar QFT). 

As we know from Proposition \ref{prop: functoriality}, the nonperturbative partition functions satisfy the gluing formula
$$ Z_{X}= \int_{F_Y} D\phi_Y\, Z_{X',Y}(\phi_Y) Z_{X'',Y}(\phi_Y). $$
Replacing both sides with their expansions (asymptotic series) in $\hbar$, we have the gluing formula for the perturbative partition functions
\begin{equation}\label{gluing of Z^pert}
Z^\mr{pert}_{X}= \int_{F_Y} D\phi_Y\, Z^\mr{pert}_{X',Y}(\phi_Y) Z^\mr{pert}_{X'',Y}(\phi_Y).
\end{equation} 
This latter formula admits an independent proof in the language of Feynman graphs which we will sketch here (adapting the argument of \cite{KMW}).

Consider ``decorations'' of Feynman graphs $\Gamma$ for the theory on $X$ by the following data:
\begin{itemize}
    \item Each vertex $\s{v}$ of $\Gamma$ is decorated by one of three symbols $\{X',X'',Y\}$, meaning that in the Feynman weight $f(\s{v})$ is restricted to be in $V_{X'}\setminus V_Y$, in $V_{X''}\setminus V_Y$, or in $V_Y$, respectively.
    \item Each edge $e=(\s{u},\s{v})$ of $\Gamma$ is decorated by either $u$ or $c$ (``uncut'' or ``cut''), corresponding to the splitting of the Green's function on $X$ in Theorem \ref{thm: gluing prop and det}: $$G_X(f(\s{u}),f(\s{v}))=G_X^u(f(\s{u}),f(\s{v}))+G_X^c(f(\s{u}),f(\s{v})).$$
    Here: 
    \begin{itemize}
        \item If $\s{u}, \s{v}$ are both decorated with $X'$, the ``uncut'' term is  $G_X^u\colon = G_{X',Y}$. Similarly, if $\s{u}, \s{v}$ are both decorated with $X''$, $G_X^u\colon = G_{X'',Y}$. For all other decorations of $\s{u},\s{v}$, $G_X^u\colon=0$. Because of this, we will impose a selection rule: $u$-decoration is only allowed for $X'-X'$ or $X''-X''$ edges.
        \item The ``cut'' term is
        $$G_X^c(f(\s{u}),f(\s{v}))\colon= \sum_{w_1,w_2\in Y} E_{Y,\alpha}(f(\s{u}),w_1) \DN_{Y,X}^{-1}(w_1,w_2) E_{Y,\beta}(f(\s{v}),w_2) ,$$ 
        where $\alpha,\beta$ are the decorations of $\s{u},\s{v}$ (and we understand $E_{Y,Y}$ as identity operator).
    \end{itemize}
\end{itemize}

Let $\mr{Dec}(\Gamma)$ denote the set of all possible decorations of a Feynman graph $\Gamma$.
Theorem \ref{thm: gluing prop and det} implies that  for any Feynman graph $\Gamma$ its weight splits into the contributions of its possible decorations:
$$ 
\Phi_{\Gamma,X}= \sum_{\Gamma^\mr{dec}\in \mr{Dec}(\Gamma)} \Phi_{\Gamma^\mr{dec},X} ,
$$
where in the summand on the r.h.s., we have restrictions on images of vertices of $\Gamma$ as prescribed by the decoration, and we only select either cut or uncut piece of each Green's function. 

Thus, the l.h.s. of (\ref{gluing of Z^pert}) can be written as
\begin{equation}\label{Z_X^pert via decorated graphs}
    Z_X^\mr{pert}= \det (K_X)^{-\frac12}\cdot \sum_{\Gamma^\mr{dec}} \frac{\hbar^{-\chi(\Gamma)}}{|\mr{Aut}(\Gamma^\mr{dec})|}\Phi_{\Gamma^\mr{dec},X}.
\end{equation}
where on the right we are summing over all Feynman graphs with all possible decorations.

The r.h.s. of (\ref{gluing of Z^pert}) is:
\begin{multline}\label{gluing of Z^pert eq2}
    \ll \det(K_{X',Y})^{-\frac12}\det(K_{X'',Y})^{-\frac12} e^{-\frac{1}{\hbar}S_Y^\mr{int}(\phi_Y)} \cdot \\
    \cdot \sum_{\Gamma',\Gamma''} \frac{\hbar^{-\chi(\Gamma'\cup \Gamma'')}}{|\mr{Aut}(\Gamma'\cup \Gamma'')|} \Phi_{\Gamma', (X',Y)}(\phi_Y) \Phi_{\Gamma'', (X'',Y)}(\phi_Y) \gg_Y
    = \\
    \det (K_X)^{-\frac12} 
    \cdot \left\langle 
    e^{-\frac{1}{\hbar}S_Y^\mr{int}(\phi_Y)} 
    \cdot \sum_{\Gamma',\Gamma''} \frac{\hbar^{-\chi(\Gamma'\cup \Gamma'')}}{|\mr{Aut}(\Gamma'\cup \Gamma'')|} \Phi_{\Gamma', (X',Y)}(\phi_Y) \Phi_{\Gamma'', (X'',Y)}(\phi_Y)
    \right\rangle_Y,
\end{multline}
where $\ll \cdots \gg_Y\colon = \int_{F^Y} D\phi_Y\, e^{-\frac{1}{2\hbar}(\phi_Y,\DN_{Y,X}\phi_Y)} \cdots $ is the non-normalized Gaussian average w.r.t. the total DN operator; $\langle\cdots \rangle_Y$ is the corresponding normalized average.

The correspondence between (\ref{Z_X^pert via decorated graphs}) and (\ref{gluing of Z^pert eq2}) is as follows. Consider a decorated graph $\Gamma^\mr{dec}$ and form out of it subgraphs $\Gamma', \Gamma''$ in the following way.  Let us 
cut every cut edge in $\Gamma$ (except $Y-Y$ edges) into two, introducing two new boundary vertices. Then we collapse every edge between a newly formed vertex and a $Y$-vertex.
$\Gamma'$ is the subgraph of $\Gamma$ formed by vertices decorated by $X'$ and uncut edges between them, and those among the newly formed boundary vertices which are connected to an $X'$-vertex by an edge; $\Gamma''$ is formed similarly. 

Then the contribution of $\Gamma^\mr{dec}$ to (\ref{Z_X^pert via decorated graphs}) is equal to the contribution of a particular Wick pairing for the term in (\ref{gluing of Z^pert eq2}) corresponding to the induced pair of graphs $\Gamma',\Gamma''$, and picking a term in the Taylor expansion of $e^{-\frac{1}{\hbar} S^\mr{int}_Y(\phi_Y)}$ corresponding to $Y$-vertices in $\Gamma^\mr{dec}$. 
The sum over all decorated Feynman graphs in (\ref{Z_X^pert via decorated graphs}) recovers the sum over all pairs $\Gamma',\Gamma''$ and all Wick contractions in (\ref{gluing of Z^pert eq2}).
This shows Feynman-graph-wise the equality of (\ref{Z_X^pert via decorated graphs}) and (\ref{gluing of Z^pert eq2}). One can also check that the combinatorial factors work out similarly to the argument in \cite[Lemma 6.10]{KMW}.

\begin{example}
Figure \ref{fig: feynman graph cut} is an example of a decorated Feynman graph on $X$ (on the left; vertex decorations $X',Y,X''$ are according to the labels in the bottom) and the corresponding contribution to  (\ref{gluing of Z^pert eq2}) on the right. 

     \begin{figure}[H]
     \includegraphics[scale=0.75]{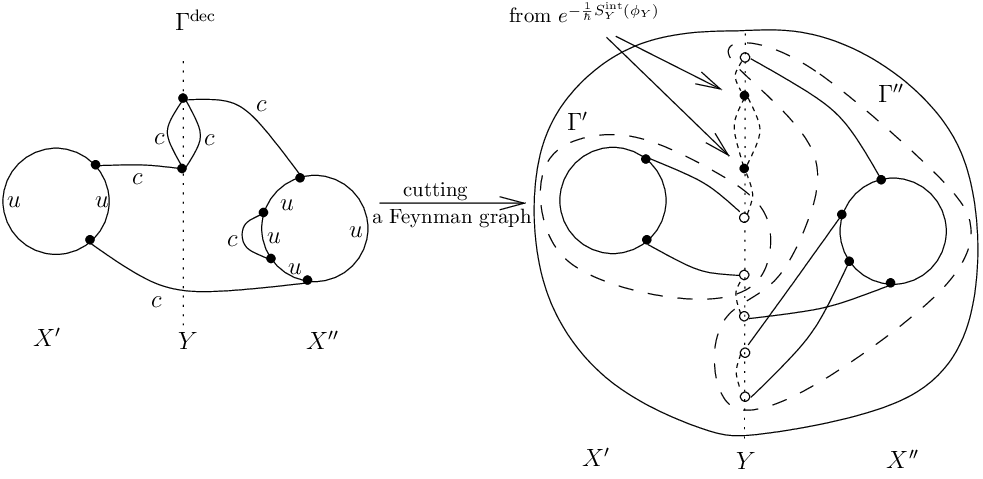}
     \caption{An example of cutting a Feynman graph.}
     \label{fig: feynman graph cut}
     \end{figure}
    Dashed edges on the right denote the Wick pairing for $\langle\cdots \rangle_Y$ and are decorated with $\DN_{Y,X}^{-1}$. Circle vertices are the boundary vertices of graphs $\Gamma',\Gamma''$ or equivalently the vertices formed by cutting the $c$-edges of $\Gamma^\mr{dec}$. 
\end{example}

\section{Path sum formulae for the propagator and determinant (Gaussian theory in the first quantization formalism)}
\label{s: path sum formulae for the propagator and determinant}
\subsection{Quantum mechanics on a graph}\label{ss: QM on a graph}
 Following the logic of Section \ref{sec: first quantization}, we now want to understand the kinetic operator $\Delta_X + m^2$ of the second quantized theory as the Hamiltonian of an auxiliary quantum mechanical system -- a quantum particle on the graph $X$.\footnote{
 This model of quantum mechanics on a graph -- as a model for the interplay between the operator and path integral formalisms -- was considered in \cite{trace,graph_QM},  see also \cite{Graph_Dirac}.
 }  
The space of states $\mathcal H_X$ for graph quantum mechanics on $X$  is $\mathbb C^V$, i.e. the space of $\mathbb C$-valued functions on $V$.
The graph Schr\"odinger equation\footnote{
Here we are talking about the Wick-rotated Schr\"odinger equation (i.e. describing quantum evolution in imaginary time), or equivalently the heat equation.
} on $X$
is
\begin{equation}\label{eq:GraphSchro}
\frac{\partial}{\partial t}\ket{\psi(t,v)}(t,v)= -\left (   \Delta_X+ m^2 \right )\ket{\psi (t,v)},
\end{equation}
where $\ket{\psi(t,v)}$ is a (time-dependent) state, i.e. a vector in $\mathbb C^V$.
The explicit solutions to \eqref{eq:GraphSchro} 
are given by
\begin{equation}
\ket{\psi(t_f)}= e^{-(t_f-t_0)(\Delta_X+m^2)} \ket{\psi(t_0)}. \label{eq:heat QM}
\end{equation}
One can explicitly solve equation \eqref{eq:heat QM} by summing over certain paths on $X$, see equations \eqref{eq: heat kernel hesitant paths}, (\ref{heat kernel path sum for regular graph}), (\ref{heat kernel path sum for irregular graph}) below, in a way reminiscent of Feynman's path integral.\footnote{This analogy is discussed in more detail in \cite{trace, graph_QM}.} 
This graph quantum mechanics is the first step of our
first quantization approach to QFT on a graph. 

\subsection{Path sum formulae on closed graphs} \label{subsec: Path Sum}
\subsubsection{Paths and h-paths in graphs}
 We start with some terminology. A \emph{path} $\gamma$ from a vertex $u$ to a vertex $v$ of a graph $X$ is a sequence $$\gamma=(u=v_0,e_0,v_1,\ldots, e_{k-1},v_k=v)$$ where $v_i$ are vertices of $X$ and $e_i$ is an edge between $v_{i}$ and $v_{i+1}$.\footnote{ 
 For simplicity of the exposition, we assume that the graph $X$ has no short loops. The generalization allowing short loops is straightforward: in the definition of a path and h-path, the edges traversed $e_i$ are not allowed to be short loops (and in the formulae involving the valence of a vertex, it should be replaced with valence excluding the contribution of short loops). This is ultimately due to the fact that short loops do not contribute to the graph Laplacian $\Delta_X$.}
 We denote $V(\gamma)$ the ordered collection $(v_0,v_1, \ldots, v_k)$ of vertices of $\gamma$.  
 We call $l(\gamma) = k$ the length of the path, and denote $P_X^k(u,v)$ the set of paths in $X$ of length $k$ from a vertex $u$ to a vertex $v$. We denote by $P_X(u,v) = \cup_{k=0}^\infty P_X^k(u,v)$ the set of paths of any length from $u$ to $v$. We also denote by $P_X^k =\cup_{u,v \in X} P_X^k(u,v)$, and $P_X = \cup_{k=0}^\infty P_X^k$ the sets of paths between any two vertices of $X$. 
 
 Below we will also need a  variant 
 of this notion that we call \emph{hesitant paths}. Namely, a \emph{hesitant path} 
 (or ``h-path'')
 from a vertex $u$ to a vertex $v$ is a sequence 
 $$\tilde\gamma = (u=v_0,e_0,v_1,\ldots,e_{k-1},v_k=v),$$
 but now we allow the possibility that $v_{i+1} = v_i$, in which case $e_i$ is allowed to be any edge starting at $v_i = v_{i+1}$. In this case we say that $\tilde\gamma$ \emph{hesitates} at step $i$. If $v_{i+1} \neq v_i$, then we say that $\tilde\gamma$ jumps at step $i$.  As before, we say that such a path has length $l(\tilde\gamma) = k$, and we introduce the notion of the \emph{degree} of a h-path as\footnote{Diverting slightly from the notation of \cite{Simone}.} 
\begin{equation}
    \deg(\tilde\gamma) = |\{i|v_i \neq v_{i+1}, 0 \leq i \leq l(\gamma) -1\}|,
\end{equation}
i.e. the degree is the number of jumps of a h-path. We denote by $$h(\tilde\gamma)  = |\{i|v_i = v_{i+1}, 0 \leq i \leq l(\gamma) -1\}|$$
the number of hesitations of $\tilde\gamma$. Obviously $l(\tilde\gamma) = \deg(\tilde\gamma) + h(\tilde\gamma)$. 
We denote the set of h-paths from $u$ to $v$ by $\Pi_X(u,v)$, and the set of length $k$ hesitant paths by $\Pi^k_X(u,v)$. There is an obvious concatenation operation 
\begin{equation}
    \begin{array}{ccc}
    \Pi^k_X(u,v) \times \Pi^l_X(v,w) &\to & \Pi^{k+l}_X(u,w)\\ 
    (\tilde\gamma_1, \tilde\gamma_2) &\mapsto& \tilde\gamma_1 * \tilde\gamma_2
    \end{array}
\end{equation}
Observe that for every h-path $\tilde\gamma$ there is a usual (``non-hesitant'') path $\gamma$ of length $l(\gamma) = \deg(\tilde\gamma)$ given by simply forgetting repeated vertices, giving a map $P\colon\Pi(u,v) \twoheadrightarrow P(u,v)$. See Figure \ref{fig:paths}.

\begin{figure}[H]
    \centering
    \begin{subfigure}[t]{0.45\textwidth}
    \vskip 0pt
    \begin{tikzpicture}[scale=1.5, every edge quotes/.style = {auto}]
        \draw (0,0) -- (1,0) -- (2,0);
        \draw (0,1) -- (1,1) -- (2,1);
        \draw (0,2) -- (1,2) -- (2,2);
        \draw (0,0) -- (0,1) -- (0,2);
        \draw (1,0) -- (1,1) -- (1,2);
        \draw (2,0) -- (2,1) -- (2,2);
        \foreach \i in {0,...,2}
        \foreach \j in {0,...,2}
        \draw[fill=black] (\i,\j) circle (2pt); 
        \node[coordinate, label=above:{$A$}] (A) at (0,2) {$v_0$};
        \node[coordinate, label=above:{$B$}] (B) at (1,2) {$v_1$};
        \node[coordinate, label=above left:{$C$}] (C) at (1,1) {$v_2$};
        \node[coordinate, label=above left:{$D$}] (D) at (1,0) {$v_3$};
        \node[coordinate, label=above right:{$E$}] (E) at (2,0) {$v_4$};
        \draw[ultra thick, red] (A) edge["$e_0$"] (B) (B) edge["$e_1$"] (C) (C) edge["$e_2$"] (D) (D) edge["$e_3$"] (E);
    \end{tikzpicture}
      \end{subfigure}
\begin{subfigure}[t]{0.45\textwidth}
\vskip 0pt
    \begin{tikzpicture}[scale=1.5, every edge quotes/.style = {auto}]
        \draw (0,0) -- (1,0) -- (2,0);
        \draw (0,1) -- (1,1) -- (2,1);
        \draw (0,2) -- (1,2) -- (2,2);
        \draw (0,0) -- (0,1) -- (0,2);
        \draw (1,0) -- (1,1) -- (1,2);
        \draw (2,0) -- (2,1) -- (2,2);
        \foreach \i in {0,...,2}
        \foreach \j in {0,...,2}
        \draw[fill=black] (\i,\j) circle (2pt); 
        \node[coordinate, label=above:{$A$}] (A) at (0,2) {$v_0$};
        \node[coordinate, label=above:{$B$}] (B) at (1,2) {$v_1$};
        \node[coordinate, label=above left:{$C$}] (C) at (1,1) {$v_2$};
        \node[coordinate, label=above left:{$D$}] (D) at (1,0) {$v_3$};
        \node[coordinate, label=above right:{$E$}] (E) at (2,0) {$v_4$};
        \node[coordinate] (F) at (0,1.3) {};
        \node[coordinate] (G) at (1.7,2) {};
        \node[coordinate] (H) at (1.7,1) {};
        \node[coordinate] (I) at (0.3,1) {};
        \node[coordinate] (J) at (0.3,0) {};
        \draw[ultra thick, red] (A) edge["$e_0$"] (B) edge["$f_0$"] (F)
        (B) edge["$e_1$"] (C) edge["$f_1$"] (G)  (C) edge["$e_2$"] (D) edge["$f_2$"] (H) edge["$f_3$"] (I) (D) edge["$e_3$"] (E) edge["$f_4$"] (J);
    \end{tikzpicture}
\end{subfigure}
  \caption{Left: The path $\gamma = (A,e_0,B,e_1,C,e_2,D,e_3,E)$ from $A$ to $E$ with length $l(\gamma) = 4$. Right: the h-path $\tilde\gamma = (A,f_0,A,e_0,B,f_1,B,e_1,C,f_2,C,f_3,C,e_2,D,f_4,D,e_3,E)$ with $l(\tilde\gamma) = 9, \deg(\tilde\gamma) = 4, h(\tilde\gamma) = 5$ and $P(\tilde\gamma) = \gamma$. }    \label{fig:paths}
\end{figure}
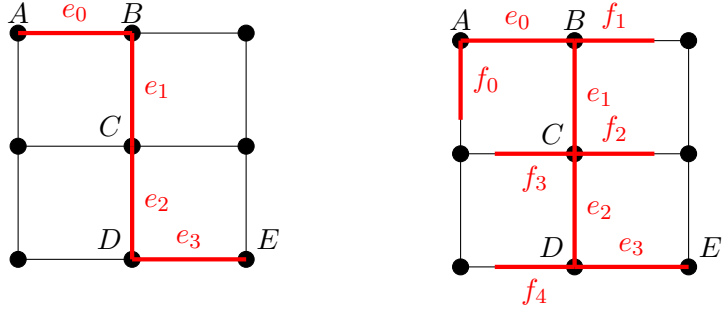
A (hesitant) path is called \emph{closed} if $v_k = v_0$, i.e. the first and last vertex agree. The cyclic group $C_k$ acts on closed paths of length $k$ by shifting the vertices and edges. We call the orbits of this group action \emph{cycles} (i.e. closed paths without a preferred start- or end-point), and denote them by $\Gamma_X$ for equivalence classes of h-paths, and $C_X$ for equivalence classes of regular paths. A cycle $[\tilde\gamma]$ is called \emph{primitive} if its representatives have trivial stabilizer under this group action. Equivalently, this means that there is no $k > 1$ and $\tilde\gamma'$ such that $$\tilde\gamma = \underbrace{\tilde\gamma' * \tilde\gamma' * \ldots * \tilde\gamma'}_{k \text{ times}}, $$
i.e. the cycle is traversed exactly once. In general, the order of the stabilizer of $\tilde\gamma$ is precisely the number of traverses. We will denote this number by $t(\tilde\gamma)$. Obviously, it is well-defined on cycles. 
\begin{example}
    In the $N=3$ circle graph, the two closed paths $\tilde\gamma_1 = (1,(12),2,(23),3,(31),1)$ and $\tilde\gamma_2 = (2,(23),3,(31),1,(12))$ define the same primitive cycle, while the closed path 
    $\tilde\gamma_3 = (1,(31),3,(23),2,(12),1)$ defines a different cycle (since the graph is traversed in a different order). The closed (hesitant) path $\tilde\gamma_4 = (1,(12),1,(12),1)$ is not primitive, since $\gamma_4 = (1,(12),1) * (1,(12),1)$. 
\end{example}
\subsubsection{h-path formulae for heat kernel, propagator and determinant} It is a simple observation that
\begin{equation}
    p_X^k(u,v):=|P_X^k(u,v)| = \langle u | (A_X)^k |v  \rangle,  \label{eq: length k paths}
\end{equation}
where $A_X$ denotes the adjacency matrix of the graph $X$, $| v\rangle$ denotes the state which is 1 at $v$ and vanishes elsewhere, and $$\langle u | A | v \rangle = A_{uv}$$ denotes the $(u,v)$-matrix element of the operator $A$ (in the bra-ket notation for the quantum mechanics on $X$).  
We consider the heat operator 
\begin{equation}
    e^{-t\Delta_X}\colon C^0(X) \to C^0(X),
\end{equation}
which is the propagator of the quantum mechanics on the graph $X$ \eqref{eq:heat QM}.

Suppose that $X$ is regular, i.e. all vertices have  the same valence $n$. Then $\Delta_X = n\cdot I - A_X$ and \eqref{eq: length k paths} implies that the heat kernel $  \langle u | e^{-t\Delta_X} | v \rangle $ is given by 
\begin{equation}\label{heat kernel path sum for regular graph}
    \langle u | e^{-t\Delta_X} | v \rangle = e^{-tn}\sum_{k=0}^\infty\frac{t^k}{k!}p_X^k(u,v) = e^{-tn} \sum_{\gamma \in P_X(u,v)}\frac{t^{l(\gamma)}}{l(\gamma)!} .
\end{equation}
One can think of the r.h.s. as a discrete analog of the Feynman path integral formula where one is integrating over all paths (see \cite{trace}).

For a general graph, one can derive a formula for the heat kernel in terms of h-paths, by using the formula $\Delta = d^T d$. 
Namely, one has (see \cite{Simone})
\begin{equation}
    \langle u | \Delta_X^k | v \rangle = \sum_{\tilde\gamma\in\Pi_X^k(u,v)}(-1)^{\deg(\tilde\gamma)}.
\end{equation}
This implies the following formula for the heat kernel: 
\begin{equation}
    \langle u | e^{-t\Delta_X} | v \rangle = \sum_{k=0}^\infty \frac{t^k}{k!}(-1)^k \sum_{\tilde\gamma \in \Pi^k_X(u,v)}(-1)^{\deg(\tilde\gamma)} = \sum_{\tilde\gamma \in \Pi_X(u,v)}\frac{t^{l(\tilde\gamma)}}{l(\tilde\gamma)!}(-1)^{h(\tilde\gamma)}. \label{eq: heat kernel hesitant paths}
\end{equation}
Here we have used that $l(\tilde\gamma) + \deg(\tilde\gamma) = h(\tilde\gamma) \text{ mod } 2$. 

Then we have the following h-path sum formula for the Green's function: 
\begin{lemma}
The Green's function $G_X$ is given by 
    \begin{equation}
    \left \langle u | G_X |v \right\rangle = m^{-2}\sum_{k=0}^\infty (m^{-2})^k\sum_{\tilde\gamma\in\Pi^k_{X}}(-1)^{h(\tilde\gamma)} = m^{-2}\sum_{\tilde\gamma\in\Pi_X(u,v)}(m^{-2})^{l(\tilde\gamma)}(-1)^{h(\tilde\gamma)}.\label{eq: hesitant path sum G}
\end{equation}
\end{lemma}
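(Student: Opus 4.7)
The plan is to derive the h-path formula for $G_X$ by combining the Neumann series expansion of $(\Delta_X+m^2)^{-1}$ with the h-path formula for matrix elements of $\Delta_X^k$ that is stated just before the lemma.

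First, I would write
\begin{equation*}
G_X = (\Delta_X+m^2)^{-1} = m^{-2}\bigl(I+m^{-2}\Delta_X\bigr)^{-1} = m^{-2}\sum_{k=0}^{\infty}(-m^{-2})^{k}\Delta_X^{k},
\end{equation*}
assuming for the moment that $m$ is large enough that the Neumann series converges in operator norm (I'll address this at the end). Taking matrix elements and using $\langle u|\Delta_X^{k}|v\rangle = \sum_{\tilde\gamma\in\Pi_X^{k}(u,v)}(-1)^{\deg(\tilde\gamma)}$ gives
\begin{equation*}
\langle u|G_X|v\rangle = m^{-2}\sum_{k=0}^{\infty}(-m^{-2})^{k}\sum_{\tilde\gamma\in\Pi_X^{k}(u,v)}(-1)^{\deg(\tilde\gamma)} = m^{-2}\sum_{\tilde\gamma\in\Pi_X(u,v)}(m^{-2})^{l(\tilde\gamma)}(-1)^{l(\tilde\gamma)+\deg(\tilde\gamma)}.
\end{equation*}
The identity $l(\tilde\gamma)=\deg(\tilde\gamma)+h(\tilde\gamma)$ gives $l(\tilde\gamma)+\deg(\tilde\gamma)\equiv h(\tilde\gamma)\pmod 2$, so $(-1)^{l(\tilde\gamma)+\deg(\tilde\gamma)}=(-1)^{h(\tilde\gamma)}$, yielding the claimed formula.

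The only real issue is convergence: the number of h-paths of length $k$ from $u$ to $v$ is bounded by $(2d_{\max})^{k}$, where $d_{\max}$ is the maximum valence in $X$, since at each step one either hesitates along one of the $\val(w)$ edges at the current vertex $w$ or jumps along one of them. Therefore the series on the right converges absolutely (and the rearrangement between summing by $k$ and summing over all h-paths is justified) whenever $m^{2}>2d_{\max}$, and under this assumption the Neumann series for $G_X$ also converges in operator norm (since $\|\Delta_X\|\le 2d_{\max}$). This is the main ``obstacle'', and it is mild: both sides of the identity are rational functions of $m^{2}$ with poles only at the eigenvalues of $-\Delta_X$ (which are nonpositive), so once the identity is established for $m$ large it extends to all $m>0$ by analytic continuation in $m^{2}$. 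Equivalently, one may interpret the h-path sum as a formal power series in $m^{-2}$ and check the identity order by order using $\langle u|\Delta_X^{k}|v\rangle=\sum_{\tilde\gamma\in\Pi_X^{k}(u,v)}(-1)^{\deg(\tilde\gamma)}$; then the statement for $m^{2}>2d_{\max}$ is the analytic incarnation of this formal identity.
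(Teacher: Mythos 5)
Your proof is correct and follows essentially the same route as the paper's: expand $(1+m^{-2}\Delta_X)^{-1}$ as a geometric series, insert the h-path formula for $\langle u|\Delta_X^k|v\rangle$, and use $l(\tilde\gamma)=\deg(\tilde\gamma)+h(\tilde\gamma)$ to convert the sign. Your convergence discussion is slightly more detailed than the paper's (which simply notes absolute convergence for $m^2>\lambda_{\max}(\Delta_X)$ in a footnote), but the argument is the same.
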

\begin{proof}
By expanding $m^2G_X = (m^{-2}K_X)^{-1} = (1 + m^{-2}\Delta_X)^{-1}$ in powers of $m^{-2}$ using the geometric series,\footnote{This series converges absolutely if the operator norm of $m^{-2}\Delta_X$ is less that one, or equivalently $m^2 > \lambda_{max}(\Delta_X)$, i.e. $m^2$ is larger than the largest eigenvalues of $\Delta_X$.} we obtain 
\begin{equation}
    \left\langle u \left| (1+m^{-2}\Delta_X)^{-1} \right|v \right\rangle = \sum_{\tilde\gamma \in \Pi_X(u,v)}(-m^{-2})^{l(\tilde\gamma)}(-1)^{\deg (\tilde\gamma)}, \label{eq: greens hesitant paths}
\end{equation}
which proves \eqref{eq: hesitant path sum G}. Alternatively, one can prove \eqref{eq: hesitant path sum G} by integrating the heat kernel $e^{-tK_X} = e^{-tm^2}e^{-t\Delta_X}$ for $K_X$ over the time parameter $t$ and 
 using the Gamma function identity
$$ \int_0^\infty dt\ e^{-tm^2} \frac{t^k}{k!} = (m^{-2})^{k+1}.$$
\end{proof}
In equation \eqref{eq: hesitant path sum G}, we see two slightly different ways of interpreting the path sum formula. In the middle we see that when expanding in powers of $m^2$, the coefficient of $m^{-2(k+1)}$ is given by a signed count of h-path of length $k$, and that the sign is determined by the number of hesitations. On the right hand side we interpret the propagator as a weighted sum over all h-paths, in accordance with the first quantization picture. 

We have the following formula for the  determinant of the kinetic operator (normalized by $1/m^2$) in terms of closed h-paths or h-cycles:
\begin{lemma}
The determinant of $K_X/m^2$ is given by
    \begin{equation}
    \begin{aligned}
\log\det\left(\frac{K_X}{m^{2}}\right) & = -\sum_{v\in X}\sum_{ \tilde\gamma \in \Pi^{\geq 1}_X(v,v)}\frac{(m^{-2})^{l(\tilde\gamma)}}{l(\tilde\gamma)}(-1)^{h(\tilde\gamma)} \\
&= - \sum_{[\tilde\gamma] \in \Gamma^{\geq 1}_X} \frac{(m^{-2})^{l(\tilde\gamma)}}{ {t(\tilde\gamma) }}(-1)^{h(\tilde\gamma)}. 
\end{aligned}
\label{eq: det hesitant paths}
\end{equation}
\end{lemma}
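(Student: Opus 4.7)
The plan is to mimic the derivation of \eqref{eq: hesitant path sum G}: start from $\log\det = \operatorname{tr}\log$, expand the logarithm as a formal series in $m^{-2}$, and then apply the h-path interpretation of $\Delta_X^k$ that was already used in the derivation of \eqref{eq: heat kernel hesitant paths}.

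More precisely, write $K_X/m^2 = I + m^{-2}\Delta_X$ and apply the Mercator series
\[
\log\det(I + m^{-2}\Delta_X) = \operatorname{tr}\log(I+m^{-2}\Delta_X) = \sum_{k \geq 1} \frac{(-1)^{k+1}}{k}(m^{-2})^k \operatorname{tr}(\Delta_X^k).
\]
This is convergent when $m^2 > \lambda_{\max}(\Delta_X)$; for the general statement we interpret both sides as formal power series in $m^{-2}$ (alternatively one may analytically continue in $m^2$). Next, from the identity
\[
\langle v|\Delta_X^k|v\rangle = \sum_{\tilde\gamma \in \Pi_X^k(v,v)}(-1)^{\deg(\tilde\gamma)}
\]
recalled just before \eqref{eq: heat kernel hesitant paths}, taking the trace gives
\[
\operatorname{tr}(\Delta_X^k) = \sum_{v\in X}\sum_{\tilde\gamma\in\Pi^k_X(v,v)}(-1)^{\deg(\tilde\gamma)}.
\]

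Now I would combine signs using the identity $\deg(\tilde\gamma) = l(\tilde\gamma) - h(\tilde\gamma) = k - h(\tilde\gamma)$ for $\tilde\gamma$ of length $k$. Therefore $(-1)^{\deg(\tilde\gamma)} = (-1)^k(-1)^{h(\tilde\gamma)}$, and the prefactor becomes $(-1)^{k+1}\cdot(-1)^k = -1$. Plugging in,
\[
\log\det\!\left(\frac{K_X}{m^2}\right) = -\sum_{k\geq 1}\frac{(m^{-2})^k}{k}\sum_{v\in X}\sum_{\tilde\gamma\in\Pi^k_X(v,v)}(-1)^{h(\tilde\gamma)},
\]
which, upon absorbing $k = l(\tilde\gamma)$ into the summand, is exactly the first equality of \eqref{eq: det hesitant paths}.

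For the second equality I would invoke orbit-stabilizer. The cyclic group $C_k$ acts on closed h-paths of length $k$ by shifting the basepoint, the orbits are by definition cycles in $\Gamma_X$, and the stabilizer of $\tilde\gamma$ has order $t(\tilde\gamma)$, so its orbit has size $k/t(\tilde\gamma)$. The quantities $l(\tilde\gamma)$ and $h(\tilde\gamma)$ are constant on orbits, hence
\[
\sum_{v\in X}\sum_{\tilde\gamma\in\Pi^k_X(v,v)}\frac{(-1)^{h(\tilde\gamma)}}{l(\tilde\gamma)} = \sum_{[\tilde\gamma]\in\Gamma^{\,k}_X}\frac{k}{t(\tilde\gamma)}\cdot\frac{(-1)^{h(\tilde\gamma)}}{k} = \sum_{[\tilde\gamma]\in\Gamma^{\,k}_X}\frac{(-1)^{h(\tilde\gamma)}}{t(\tilde\gamma)},
\]
and summing over $k\geq 1$ yields the second equality.

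The only delicate point is the logarithmic series: for large $\Delta_X$ the series $\log(I+m^{-2}\Delta_X)$ diverges in operator norm, so the cleanest interpretation is that both sides of \eqref{eq: det hesitant paths} are equal as formal Laurent series in $m^{-2}$ (equivalently, as meromorphic functions of $m$ agreeing on the region $m^2 > \lambda_{\max}(\Delta_X)$ and extended by analytic continuation). The bookkeeping of signs and the orbit-stabilizer count are routine once this is set up.
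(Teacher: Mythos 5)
Your proof is correct and follows essentially the same route as the paper: expand $\log\det(K_X/m^2)=\operatorname{tr}\log(1+m^{-2}\Delta_X)$ as a power series in $m^{-2}$ (convergent for $m^2>\lambda_{\max}(\Delta_X)$, exactly as the paper notes), insert the h-path formula for $\langle v|\Delta_X^k|v\rangle$ with the sign identity $(-1)^{\deg(\tilde\gamma)}=(-1)^{l(\tilde\gamma)}(-1)^{h(\tilde\gamma)}$, and pass to cycles via the orbit-stabilizer count $k/t(\tilde\gamma)$. Your write-up is just a more explicit version of the paper's argument, with the sign bookkeeping and the $C_k$-action spelled out.
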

\begin{proof}
Expand\footnote{Again, this power series converges absolutely for $m^2 > \lambda_{max}(\Delta_X)$. } \begin{align*}\log\det\left(\frac{K_X}{m^2}\right) &= \mathrm{tr}\log (1 + m^{-2}\Delta_X) = - \sum_{v\in X}\sum_{k=0}^\infty \frac{(-m^{-2})^k}{k} \langle v | \Delta_X^k | v\rangle, 
\end{align*}

which implies \eqref{eq: det hesitant paths}.
\end{proof}

 Note that in the expression in the middle of \eqref{eq: det hesitant paths}, we are summing over h-paths of length at least 1 with a fixed starting point. To obtain the right hand side, we sum over orbits of the group action of $C_k$ on closed paths of length $k$, the size of the orbit of $\tilde\gamma$ is exactly $l(\tilde\gamma)/t(\tilde\gamma)$.

\begin{remark}
Both h-paths and paths form monoids  
w.r.t. concatenation,
with $P$ a monoid homomorphism. A map $s$ from a monoid to $\RR$ or $\CC$ is called \emph{multiplicative} if it is a homomorphism of monoids, i.e. 
\begin{equation}
        s(\tilde\gamma_1*\tilde\gamma_2) = s(\tilde\gamma_1)s(\tilde\gamma_2).
    \end{equation}
    Notice that in the path sum expression for the propagator \eqref{eq: hesitant path sum G}, we are summing over h-paths $\tilde\gamma$ with the weight \begin{equation}
        s(\tilde\gamma)  := (m^{-2})^{l(\tilde\gamma)}(-1)^{h(\tilde\gamma)}.
    \end{equation}
    Below it will be important that this weight is in fact multiplicative, which is obvious from the definition. 
\end{remark}
\begin{remark}
Using multiplicativity of $s$, we can resum over iterates of primitive cycles to rewrite the right hand side of \eqref{eq: det hesitant paths}: 
\begin{align*}
    \log\det\left(\frac{K_X}{m^{2}}\right) = -  \sum_{\substack{[\tilde\gamma] \in \Gamma^{\geq 1}_X \\ \tilde\gamma \text { primitive }}} \sum_{k\geq 1} \frac{s(\tilde\gamma)^k}{k} = \sum_{\substack{[\tilde\gamma] \in \Gamma^{\geq 1}_X \\ \tilde\gamma \text { primitive }}} \log\left(1 - m^{-2l(\tilde\gamma)}(-1)^{h(\tilde\gamma)}\right).
\end{align*}
\end{remark}
\subsubsection{Resumming h-paths. Path sum formulae for propagator and determinant}
Summing over the fibers of the map $P \colon \Pi_X(u,v) \twoheadrightarrow P_X(u,v)$, we can rewrite \eqref{eq: greens hesitant paths} as a path sum formula as follows: 

\begin{lemma}
If $m^2 > \val(v)$ for all $v \in X$, we have 
    \begin{equation}
        \left\langle u \left| (1+m^{-2}\Delta_X)^{-1} \right|v \right\rangle = m^2\sum_{\gamma \in P_X(u,v)}\prod_{v_i\in V(\gamma)} \frac{1}{m^2+ \val(v_i)}. \label{eq: prop inverse laplace}
    \end{equation}
\end{lemma}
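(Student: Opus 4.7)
The plan is to start from the h-path formula \eqref{eq: greens hesitant paths} (equivalently \eqref{eq: hesitant path sum G} multiplied by $m^2$) and regroup the sum over $\Pi_X(u,v)$ according to the projection $P\colon \Pi_X(u,v)\twoheadrightarrow P_X(u,v)$ that collapses hesitations. In other words, I would prove that the h-path weight summed over each fiber $P^{-1}(\gamma)$ equals $m^2\prod_{v_i\in V(\gamma)}(m^2+\val(v_i))^{-1}$, and then sum over $\gamma$.

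First I would parameterize $P^{-1}(\gamma)$ explicitly: an h-path lifting $\gamma=(v_0,e_0,v_1,\ldots,e_{k-1},v_k)$ is uniquely obtained by inserting $h_i\ge 0$ hesitations at each visited vertex $v_i$ ($i=0,\ldots,k$), with each hesitation being an independent choice among the $\val(v_i)$ edges incident to $v_i$. Such an h-path satisfies $l(\tilde\gamma)=k+\sum_i h_i$ and $h(\tilde\gamma)=\sum_i h_i$, so its weight in the h-path sum is
\[
(m^{-2})^{k+\sum_i h_i}(-1)^{\sum_i h_i} \;=\; (m^{-2})^k\prod_{i=0}^{k}\Bigl(-\tfrac{1}{m^2}\Bigr)^{h_i}.
\]

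Next I would carry out the fiber sum. Since the hesitation counts at distinct vertex visits are independent, summing the $\val(v_i)^{h_i}$ edge choices against the geometric weight gives
\[
\sum_{\tilde\gamma\in P^{-1}(\gamma)}(m^{-2})^{l(\tilde\gamma)}(-1)^{h(\tilde\gamma)} \;=\; (m^{-2})^k\prod_{i=0}^{k}\sum_{h_i\ge 0}\Bigl(-\tfrac{\val(v_i)}{m^2}\Bigr)^{h_i} \;=\; (m^{-2})^k\prod_{i=0}^{k}\frac{m^2}{m^2+\val(v_i)},
\]
and the prefactor $(m^{-2})^k(m^2)^{k+1}$ collapses to $m^2$. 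Summing over $\gamma\in P_X(u,v)$ and invoking $(1+m^{-2}\Delta_X)^{-1}=m^2 G_X$ then yields \eqref{eq: prop inverse laplace}.

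The main (and only nontrivial) issue is justifying the rearrangement of a doubly-infinite series. The hypothesis $m^2>\val(v)$ for every $v$ makes each geometric series in $h_i$ converge absolutely; moreover by the Gershgorin bound $\lambda_{\max}(A_X)\le\max_v\val(v)$ it also ensures $m^2>\lambda_{\max}(A_X)$, which dominates the outer path sum by the convergent series $\sum_k m^{-2(k+1)}(A_X^k)_{uv}$. Fubini then legitimizes both the swap of summations and the factorization over the $h_i$'s, completing the proof.
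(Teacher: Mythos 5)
Your proposal is correct and follows essentially the same route as the paper: fiber the h-path sum over the projection $P\colon\Pi_X(u,v)\twoheadrightarrow P_X(u,v)$, parameterize each fiber by the hesitation counts $j_i$ at the $k+1$ visited vertices with $\val(v_i)^{j_i}$ edge choices, and resum the resulting geometric series. Your explicit Fubini/absolute-convergence justification of the rearrangement under the hypothesis $m^2>\val(v)$ is a welcome addition that the paper leaves implicit.
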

\begin{proof}
    For a path $\gamma \in P^k_X(u,v)$, the fiber $P^{-1}(\gamma)$ consists of h-paths $\tilde\gamma$ which hesitate an arbitrary number $j_i$ of times at every vertex $v_i$ in $V(\gamma)$. For each vertex $v_i$, there are $\val(v_i)^{j_i}$ possibilities for a path to hesitate $j_i$ times at $v_i$. 
    The length of such a h-path is $l(\tilde\gamma) = k + j_0 + \ldots + j_k$ and its degree is $\deg(\tilde\gamma) = k$, hence we can rewrite equation \eqref{eq: greens hesitant paths} as 
    \begin{align*}
         &\sum_{\tilde\gamma \in \Pi_X(u,v)}(-m^{-2})^{l(\tilde\gamma)}(-1)^{\deg (\tilde\gamma)} \\
         &= \sum_{k=0}^\infty\sum_{\gamma \in P_X^k(u,v)}\sum_{j_0,\ldots,j_{k}=0}^\infty \val(v_0)^{j_0} \cdot \ldots \cdot \val(v_k)^{j_k} (-m^{-2})^{k+j_0 + \ldots + j_k}(-1)^k \\
         &=  \sum_{k=0}^\infty\sum_{\gamma \in P_X^k(u,v)} (m^{-2})^k \sum_{j_0,\ldots,j_k}^\infty \val(v_0)^{j_0} \cdot \ldots \cdot \val(v_k)^{j_k} (-m^{-2})^{j_0 + \ldots + j_k} \\
         &=  \sum_{\gamma \in P_X(u,v)}m^2\prod_{v_i\in V(\gamma)} \frac{m^{-2}}{1+m^{-2}\cdot\val(v_i)}. 
    \end{align*}
\end{proof}
\begin{corollary}
The Green's function of the kinetic operator has the expression 
\begin{equation}
    \langle u | G_X | v\rangle = \sum_{\gamma \in P_X(u,v)}\prod_{v_i\in V(\gamma)} \frac{1}{m^2 + \val(v_i)}. \label{eq: greens corollary}
\end{equation}    
In particular, if $X$ is regular of degree $n$, then 
\begin{equation}
    \langle u | G_X | v\rangle = \sum_{\gamma \in P_X(u,v)} \left(\frac{1}{m^2 + n}\right)^{l(\gamma)+1} = \frac{1}{m^2 + n}\sum_{k=0}^\infty p^k_X(u,v) (m^2 + n)^{-k}. \label{eq: greens corollary 2}
\end{equation}
\end{corollary}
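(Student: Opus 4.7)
The plan is to deduce the corollary directly from the preceding lemma, which establishes the path-sum expression for $(1+m^{-2}\Delta_X)^{-1}$. The first step is to factor the kinetic operator as $K_X = \Delta_X + m^2 = m^2(1+m^{-2}\Delta_X)$, so that $G_X = K_X^{-1} = m^{-2}(1+m^{-2}\Delta_X)^{-1}$. Substituting the path-sum formula \eqref{eq: prop inverse laplace} yields
\[
\langle u | G_X | v\rangle = m^{-2} \cdot m^2\sum_{\gamma \in P_X(u,v)}\prod_{v_i\in V(\gamma)} \frac{1}{m^2+ \val(v_i)},
\]
and the factors of $m^{\pm 2}$ cancel to give the desired expression \eqref{eq: greens corollary}. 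The convergence hypothesis $m^2 > \val(v)$ for all $v \in X$ from the previous lemma is inherited automatically.

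For the regular case, specialize by setting $\val(v_i) = n$ for every vertex in the path. Since a path of length $k$ visits $k+1$ vertices (counted with multiplicity), the product over $v_i \in V(\gamma)$ becomes $(m^2+n)^{-(l(\gamma)+1)}$, giving the first equality in \eqref{eq: greens corollary 2}. Grouping paths by length using the definition $p^k_X(u,v) = |P^k_X(u,v)|$ and pulling out the overall factor $(m^2+n)^{-1}$ then produces the second equality.

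There is essentially no obstacle here: the statement is a direct algebraic consequence of the previous lemma combined with the identity $G_X = m^{-2}(1+m^{-2}\Delta_X)^{-1}$ and a counting of vertices along a path. The only minor point to note is that regularity allows absolute convergence in a sharper form (the series is a geometric-like series in $(m^2+n)^{-1}$ whose terms are bounded by $n^k(m^2+n)^{-k-1}$, convergent since $m^2 > 0$ implies $n/(m^2+n) < 1$).
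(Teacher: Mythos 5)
Your proof is correct and follows exactly the route the paper intends: the corollary is the immediate consequence of the preceding lemma via $G_X = m^{-2}(1+m^{-2}\Delta_X)^{-1}$, and the regular case follows by counting the $l(\gamma)+1$ vertices along a path. The paper leaves this deduction implicit, and your added remark on convergence in the regular case is consistent with the paper's later observation that the resummed path sum converges absolutely for all $m^2>0$.
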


To derive a path sum formula for the determinant, we use a slightly different idea, that also provides an alternative proof of the resummed formula for the propagator. Consider the operator $\Lambda$ which acts on $C^0(X)$ diagonally in the vertex basis and sends $|v\rangle \mapsto (m^2 + \val(v)) |v\rangle$, that is, 
$$\Lambda = \mathrm{diag}(m^2 + \val(v_1),\ldots, m^2 + \val(v_N))$$
in the basis of $C^0(X)$ corresponding to an enumeration $v_1,\ldots,v_N$ of the vertices of $X$. Then, consider the ``normalized'' kinetic operator 
\begin{equation}
    \tilde{K}_X = \Lambda^{-1}K_X = I - \Lambda^{-1}A_X,
\end{equation}
with $A_X$ the adjacency matrix of the graph. Then, we have the simple generalization of the observation that matrix elements of the $k$-th power of the adjacency matrix $A_X$ count paths of length $k$ (see \eqref{eq: length k paths}), namely, matrix elements of $ (\Lambda^{-1}A_X)^k \Lambda^{-1}$ count paths weighted with 
\begin{equation}\label{w}
    w(\gamma):= \prod_{v\in V(\gamma)} \frac{1}{m^2 + \val(v)}. 
\end{equation}
Then, we immediately obtain 
\begin{multline}\label{G path-sum via Lambda}
    \langle u | G_X |v \rangle = \langle u | \tilde{K}_X^{-1}\Lambda^{-1} |v \rangle = \sum_{k=0}^\infty \left\langle u \left| (\Lambda^{-1}A_X)^k\Lambda^{-1} \right| v \right\rangle \\
    = \sum_{k=0}^\infty\sum_{\gamma \in P^k_X(u,v)}w(\gamma) ,
\end{multline}
which is \eqref{eq: greens corollary}. 

For the determinant, we have the following statement: 
\begin{proposition}\label{prop: determinant path sums}
    The determinant of the normalized kinetic 
    operator has the expansions 
    \begin{align}
        \log \det \tilde K_X &= -\sum_{v \in X} \sum_{k=0}^\infty \sum_{\gamma \in P^k_X(v,v)} \frac{w'(\gamma)}{l(\gamma)} \label{eq: det regular paths} \\
        &= -\sum_{[\gamma] \in C^{\geq 1}_X} \frac{w'(\gamma)}{t(\gamma)}, \label{eq: det regular cycles}
    \end{align}
    where for a closed path $\gamma \in P^k_X(v,v)$, $w'(\gamma) = w(\gamma)\cdot (m^2 + \val(v)$.\footnote{Note that this is well-defined on a cycle, we are simply taking the product over all vertices in the path but without repeating the one corresponding to start-and endpoint.}
\end{proposition}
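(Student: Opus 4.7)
My plan is to mirror the derivation of \eqref{eq: det hesitant paths}, but in the ``resummed'' setting where the factor $\Lambda^{-1}$ packages the geometric series over hesitations. Using the factorization $\tilde{K}_X = I - \Lambda^{-1}A_X$, I would apply $\log\det = \mathrm{tr}\log$ together with the Mercator expansion $\log(I - B) = -\sum_{k\geq 1} B^k/k$ to $B = \Lambda^{-1}A_X$:
\[
\log\det \tilde{K}_X \;=\; -\sum_{k=1}^\infty \frac{1}{k}\,\mathrm{tr}\bigl((\Lambda^{-1}A_X)^k\bigr).
\]
This reduces the proposition to computing the diagonal matrix elements of $(\Lambda^{-1}A_X)^k$ in the vertex basis and identifying them as weighted closed-path counts.

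For the matrix element, the same computation already used in \eqref{G path-sum via Lambda} gives, for each $v \in V_X$,
\[
\bigl\langle v\bigl|(\Lambda^{-1}A_X)^k\bigr|v\bigr\rangle \;=\; \sum_{\gamma \in P_X^k(v,v)}\,\prod_{i=0}^{k-1}\frac{1}{m^2 + \val(v_i)}.
\]
The point is that the product on the right contains one factor for every vertex $v_0,\dots,v_{k-1}$ but omits the endpoint $v_k = v$; comparing with \eqref{w}, which double-counts $v$ (as both $v_0$ and $v_k$ appear in $V(\gamma)$), this product is exactly $w'(\gamma) = w(\gamma)(m^2 + \val(v))$. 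Summing over $v \in V_X$ and over $k\geq 1$ and using $l(\gamma) = k$ immediately yields \eqref{eq: det regular paths}. The term $k=0$ indexed on the right-hand side of \eqref{eq: det regular paths} is vacuous because it never arises in the Mercator expansion, consistent with the restriction $C_X^{\geq 1}$ in \eqref{eq: det regular cycles}.

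The cycle formulation \eqref{eq: det regular cycles} follows by regrouping the $(v,\gamma)$-sum into $C_k$-orbits on the set of closed paths of length $k$ with a marked start-point. The weight $w'(\gamma)$ is $C_k$-invariant (it is determined by the multiset of traversed vertices together with the base vertex, but shifting the base vertex within a closed path leaves $w'$ unchanged since $w'$ is just the product over edges-with-source-vertex weights around the loop). A cycle $[\gamma]\in C_X^k$ of length $k$ has orbit size $k/t(\gamma)$, so the combined prefactor $\frac{1}{k}\cdot\frac{k}{t(\gamma)} = \frac{1}{t(\gamma)}$ produces exactly the claimed sum over cycles.

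The main technical point I expect to address is convergence of the Mercator expansion, i.e.\ that the spectral radius satisfies $\rho(\Lambda^{-1}A_X) < 1$. This follows from the row-sum estimate
\[
\|\Lambda^{-1}A_X\|_\infty \;=\; \max_{v \in V_X}\,\frac{\val(v)}{m^2+\val(v)} \;<\; 1,
\]
which is valid for any $m > 0$ and also guarantees absolute convergence of the path sums on the right-hand sides of \eqref{eq: det regular paths}--\eqref{eq: det regular cycles}. Notably, this is a much milder condition than the $m^2 > \lambda_{\max}(\Delta_X)$ needed for the h-path version \eqref{eq: det hesitant paths}, paralleling the enlargement of convergence domain already observed in passing from \eqref{eq: hesitant path sum G} to \eqref{eq: greens corollary}.
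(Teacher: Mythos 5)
Your proposal is correct and follows essentially the same route as the paper: factor $\tilde K_X = I - \Lambda^{-1}A_X$, apply $\log\det = \operatorname{tr}\log$ with the Mercator series, identify $\langle v|(\Lambda^{-1}A_X)^k|v\rangle$ as the sum of $w'(\gamma)$ over closed length-$k$ paths, and pass to cycles via the $C_k$-orbit count $k/t(\gamma)$. The convergence discussion via the row-sum bound on $\Lambda^{-1}A_X$ matches the remark the paper records separately after the corollary.
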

\begin{proof}
    To see \eqref{eq: det regular paths}, we simply observe 
    $$\log\det\tilde{K}_X = \mathrm{tr}\log (1 - \Lambda^{-1}A_X) = - \mathrm{tr} \sum_{k=0}^\infty \frac{(\Lambda^{-1}A_X)^k}{k} = - \sum_{v \in V} \sum_{k=0}^\infty \sum_{\gamma \in P^k_X(v,v)}\frac{w'(\gamma)}{k}.$$
    To see the second formula \eqref{eq: det regular cycles}, one sums over orbits of the cyclic group action on closed paths.
\end{proof}
In particular, for regular graphs we obtain a formula also derived in \cite{Simone}: 
\begin{corollary}
    If $X$ is a regular graph, then 
\begin{equation}\label{det K path-sum for X regular}
    \log\det\tilde K_X = -\sum_{k=1}^\infty\sum_{[\gamma]\in C^k_X}\frac{(m^2 + n)^{-k}}{t(\gamma)}.
\end{equation}
\end{corollary}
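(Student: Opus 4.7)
The plan is to derive this as an immediate specialization of Proposition~\ref{prop: determinant path sums}, equation~\eqref{eq: det regular cycles}, to the case where $X$ is $n$-regular. The only thing to check is that the weight $w'(\gamma)$ takes the expected form $(m^2+n)^{-k}$ on closed paths of length $k$.

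Concretely, recall from \eqref{w} that for a path $\gamma = (v_0, e_0, v_1, \ldots, e_{k-1}, v_k)$, one has $w(\gamma) = \prod_{v_i \in V(\gamma)} \frac{1}{m^2 + \val(v_i)}$, with the product ranging over the ordered $(k{+}1)$-tuple of vertices. When $X$ is $n$-regular, each factor equals $(m^2+n)^{-1}$, so $w(\gamma) = (m^2+n)^{-(k+1)}$. For a closed path $\gamma \in P^k_X(v,v)$, the definition $w'(\gamma) = w(\gamma) \cdot (m^2 + \val(v))$ then gives
\begin{equation*}
    w'(\gamma) = (m^2+n)^{-(k+1)} \cdot (m^2 + n) = (m^2+n)^{-k},
\end{equation*}
which depends only on the length $k = l(\gamma)$, confirming in particular that $w'$ is well-defined on cycles (it is already manifestly well-defined from its interpretation as a product over vertices of a cycle without repeating the basepoint, but in the regular case it reduces to a pure function of length).

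Substituting this into \eqref{eq: det regular cycles} and grouping cycles by length yields
\begin{equation*}
    \log \det \tilde K_X = -\sum_{[\gamma]\in C^{\geq 1}_X}\frac{w'(\gamma)}{t(\gamma)} = -\sum_{k=1}^\infty (m^2+n)^{-k}\sum_{[\gamma]\in C^k_X}\frac{1}{t(\gamma)},
\end{equation*}
which is exactly the claimed formula. There is no real obstacle: the whole content is the observation that regularity makes $w'(\gamma)$ depend only on the length of the closed path, so the weighted cycle sum of Proposition~\ref{prop: determinant path sums} collapses into a sum of the single scalar $(m^2+n)^{-k}$ against the counting measure on length-$k$ cycles (weighted by $1/t(\gamma)$).
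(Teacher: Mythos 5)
Your proof is correct and is exactly the specialization the paper intends: the corollary follows from \eqref{eq: det regular cycles} by observing that regularity makes $w'(\gamma)=(m^2+n)^{-l(\gamma)}$, and the paper itself states the corollary without further argument. Your careful accounting of the $(k{+}1)$ vertex factors in $w(\gamma)$ versus the $k$ factors in $w'(\gamma)$ matches the paper's footnoted convention, so nothing is missing.
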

Another corollary is the following first quantization formula for the partition function: 
\begin{thm}[First quantization formula for Gaussian theory on closed graphs]
    The partition function of the Gaussian theory on a closed graph can be expressed by 
    \begin{equation}
        \log Z_X = \frac12\left(\sum_{[\gamma]\in C^{\geq 1}_X}\frac{w'(\gamma)}{t(\gamma)} - \sum_{v\in X}\log (m^2 + \val(v))\right).
    \end{equation}
\end{thm}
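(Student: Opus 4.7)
My plan is to combine the Gaussian partition function formula \eqref{Gaussian Z} with the cycle-sum formula for the normalized determinant from Proposition \ref{prop: determinant path sums}.

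First, I would recall that for the Gaussian theory on a closed graph, \eqref{Gaussian Z} gives $Z_X = (\det K_X)^{-1/2}$, so $\log Z_X = -\tfrac{1}{2}\log\det K_X$. The task is therefore to rewrite $\log\det K_X$ in terms of cycle data. The key observation is the factorization $K_X = \Lambda \widetilde K_X$, where $\Lambda$ is the diagonal operator with entries $m^2+\val(v)$ (introduced just before equation \eqref{G path-sum via Lambda}). Since $\Lambda$ and $\widetilde K_X$ are both invertible (the former manifestly, the latter because $K_X$ is positive definite), multiplicativity of the determinant gives
\begin{equation*}
\log\det K_X = \log\det \Lambda + \log\det \widetilde K_X.
\end{equation*}

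Next, the diagonal piece is immediate: $\log\det\Lambda = \sum_{v\in X}\log(m^2+\val(v))$. For the nontrivial piece, I would apply Proposition \ref{prop: determinant path sums}, specifically the cycle-orbit form \eqref{eq: det regular cycles}, which gives
\begin{equation*}
\log\det\widetilde K_X \;=\; -\sum_{[\gamma]\in C^{\geq 1}_X}\frac{w'(\gamma)}{t(\gamma)}.
\end{equation*}
Assembling these two pieces and multiplying by $-\tfrac{1}{2}$ yields exactly the claimed formula.

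There is essentially no obstacle here beyond being careful about convergence: the expansion of $\log\det\widetilde K_X$ as a cycle sum in Proposition \ref{prop: determinant path sums} was derived from the series $\log(1-\Lambda^{-1}A_X) = -\sum_{k\geq 1}(\Lambda^{-1}A_X)^k/k$, whose trace converges absolutely only when the operator norm $\|\Lambda^{-1}A_X\|<1$. For regular graphs of degree $n$ this amounts to $m^2>0$ since the spectral radius of $A_X$ is $n$; in the general case the condition is satisfied whenever $m$ is large enough (cf.\ the footnotes after \eqref{eq: hesitant path sum G} and \eqref{eq: det hesitant paths}). Since $\log\det K_X$ and $\sum_v\log(m^2+\val(v))$ are analytic in $m^2>0$ and the cycle sum, once it converges, defines an analytic function of $m^{-2}$, the identity then extends to the full range of validity by analytic continuation. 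Thus the only thing to verify carefully is the regime of convergence; the algebraic identity itself is a two-line consequence of \eqref{Gaussian Z}, the factorization $K_X=\Lambda\widetilde K_X$, and \eqref{eq: det regular cycles}.
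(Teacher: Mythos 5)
Your proposal is correct and follows exactly the paper's own argument: write $\log Z_X = -\tfrac12\log\det K_X = -\tfrac12(\log\det\tilde K_X + \log\det\Lambda)$ and apply Proposition \ref{prop: determinant path sums} to the first summand. The extra remarks on convergence and analytic continuation are sound but not part of the paper's (very short) proof; the convergence of the cycle sum for all $m^2>0$ is addressed separately in the paper via a Perron--Frobenius bound on the spectral radius of $\Lambda^{-1}A_X$.
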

I.e., the logarithm of the partition function is given, up to the ``normalization'' term $- \sum_{v\in X}\log (m^2 + \val(v))$, by summing over all cycles of length at least 1, dividing by automorphisms coming from orientation reversing and multiple traversals. 
\begin{proof}
    We have $$\log Z_X = -\frac12\log\det K_X = -\frac{1}{2}(\log\det \tilde{K}_X + \log \det \Lambda),$$
    from where the theorem follows by Proposition \ref{prop: determinant path sums}.
\end{proof}
\begin{remark}
    Notice that the weight $w(\gamma)$ of the resummed formula \eqref{eq: greens corollary} is not multiplicative: if $\gamma_1 \in P_X(u,v)$ and $\gamma_2\in P_X(v,w)$ then 
    $$\prod_{v_i\in \gamma_1} \frac{1}{m^2 + \val(v_i)}\prod_{v_i\in \gamma_2} \frac{1}{m^2 + \val(v_i)} = \frac{1}{m^2+\val(v)}\prod_{v_i\in \gamma_1*\gamma_2} \frac{1}{m^2 + \val(v_i)} ,$$
    since on the left hand side the vertex $v$ appears twice. 
\end{remark}
\begin{table}[H]
    \centering
    \bgroup
    \def\arraystretch{1.5}
    \begin{adjustbox}{max width=1.1\textwidth,center}
    \begin{tabular}{c|c|c}
        Object & h-path sum & path sum \\
        \hline $\langle u|G_X | v \rangle $ & $m^{-2}\sum_{\tilde\gamma\in\Pi_X(u,v)}s(\tilde\gamma)$ (Eq. \eqref{eq: greens hesitant paths}) & $\sum_{\gamma \in P_X(u,v)}w(\gamma)$ (Eq. \eqref{eq: greens corollary}\\ 
        $\log \det m^{-2}K_X$ & $- \sum_{[\tilde\gamma] \in \Gamma^{\geq 1}_X} \frac{s(\tilde\gamma)}{ {t(\tilde\gamma) }}$ (Eq. \eqref{eq: det hesitant paths}) & \\
        $\log\det \tilde{K}_X$ &  &  $-\sum_{[\gamma] \in C^{\geq 1}_X} \frac{w'(\gamma)}{t(\gamma)} $ (Eq. \eqref{eq: det regular cycles}) \\ 
    \end{tabular}
    \end{adjustbox}
    \egroup
    \caption{Summary of path sum formulae, closed case.}
    \label{tab: closed path formulas}
\end{table}

\begin{remark}
    The sum over $k$ in (\ref{G path-sum via Lambda}), (\ref{eq: det regular paths}) is absolutely convergent for any $m^2>0$. The reason is that the matrix $a=\Lambda^{-1}A_X$ has spectral radius smaller than $1$ for $m^2>0$. This in turn follows from Perron-Frobenius theorem: Since $a$ is a nonnegative matrix, its spectral radius $\rho(a)$ is equal to its largest eigenvalue (also known as Perron-Frobenius eigenvalue), which in turn is bounded by the maximum of the row sums of $a$.\footnote{For any matrix $A$ with entries $a_{ij}$, $\lambda$ an eigenvalue of $A$ and $x$ an eigenvector for $\lambda$, we have 
    $$ |\lambda| = \frac{||\lambda x||_\infty}{||x||_\infty} \leq \sup_{||y||_\infty = 1} ||Ay||_\infty = \max_{i}\sum_j |a_{ij}|.$$
    Here $||x||_\infty = \max_i |x_i|$ denotes the maximum norm of a vector $x$.}  The sum of entries on the $v$-th row of $a$ is  $\frac{\val(v)}{m^2+\val(v)}<1$, which implies $\rho(a) < \max_v \frac{\val(v)}{m^2 + \val(v)}< 1$. 
    
    In particular, resummation from h-path-sum formula to a path-sum formula extends the absolute convergence region from $m^2>\lambda_{max}(\Delta_X)$ to $m^2>0$. 
\end{remark}

\subsubsection{Aside: path sum formulae for the heat kernel and the propagator -- ``1d gravity'' version.}
There is the following generalization of the path sum formula (\ref{heat kernel path sum for regular graph}) for the heat kernel for a not necessarily regular
graph $X$.
\begin{proposition}
\begin{equation}\label{heat kernel path sum for irregular graph}
\langle u| e^{-t\Delta_X} |v\rangle =\sum_{\gamma\in P_X(u,v)} W(\gamma;t),
\end{equation}
where the $t$-dependent weight for a path  $\gamma$ of length $k$ is given by an integral over a standard $k$-simplex of size $t$:
\begin{equation}\label{heat kernel path sum for irregular graph: weight}
W(\gamma;t)= 
\int_{\scriptsize
\begin{array}{c}
       t_0,\ldots,t_{k}> 0\\
     t_0+\cdots+t_{k}=t
\end{array}
} dt_1\cdots dt_{k}\, e^{-\sum_{i=0}^{k} t_i \val(v_i)},
\end{equation} 
where we denoted $v_0,\ldots,v_{k}$ the vertices along the path.
\end{proposition}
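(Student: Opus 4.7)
The plan is to decompose the graph Laplacian as $\Delta_X = D - A$, where $D$ is the diagonal operator $D|v\rangle = \val(v)|v\rangle$ and $A$ is the adjacency matrix $\langle w|A|v\rangle = \#\{\text{edges } v\to w\}$, and then expand the heat semigroup $e^{-t(D-A)}$ by the Duhamel (Dyson) series, treating $D$ as the ``free'' part and $A$ as the perturbation. Iterating the identity
$$e^{-t(D-A)} = e^{-tD} + \int_0^t ds\, e^{-(t-s)D}\, A\, e^{-s(D-A)}$$
yields the norm-convergent series
$$e^{-t\Delta_X} = \sum_{k=0}^\infty \int_{0\le s_1\le\cdots\le s_k\le t} ds_1\cdots ds_k\; e^{-(t-s_k)D} A\, e^{-(s_k-s_{k-1})D} A \cdots A\, e^{-s_1 D}.$$

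Next I would compute the matrix element $\langle u|\cdot|v\rangle$. Because $D$ is diagonal, $e^{-sD}$ simply multiplies each basis vector $|w\rangle$ by the scalar $e^{-s\,\val(w)}$; inserting a resolution of the identity between successive $A$'s turns the product into a sum over vertex sequences $v=v_0,v_1,\ldots,v_k=u$ with consecutive pairs joined by an edge, weighted by the number of such edges, i.e.\ genuinely a sum over paths $\gamma\in P_X^k(u,v)$ (using symmetry of the heat kernel). The contribution of a fixed $\gamma$ of length $k$ is then
$$\int_{0\le s_1\le\cdots\le s_k\le t} ds_1\cdots ds_k\; \exp\!\Big(-(t-s_k)\val(v_k)-\sum_{i=1}^{k-1}(s_{i+1}-s_i)\val(v_i)-s_1\val(v_0)\Big).$$

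Finally, the change of variables $t_0=s_1$, $t_i=s_{i+1}-s_i$ for $1\le i\le k-1$, $t_k=t-s_k$ has Jacobian $1$ and maps the ordered simplex $\{0\le s_1\le\cdots\le s_k\le t\}$ onto the standard simplex $\{t_i\ge 0,\;\sum_{i=0}^k t_i=t\}$, the exponent becoming exactly $-\sum_{i=0}^k t_i\,\val(v_i)$. This identifies the $\gamma$-contribution with $W(\gamma;t)$ as in \eqref{heat kernel path sum for irregular graph: weight}; summing over $k$ and over all $\gamma\in P_X^k(u,v)$ yields \eqref{heat kernel path sum for irregular graph}. There is no real obstacle beyond bookkeeping: the series converges because we are in finite dimensions, and short loops (if any) are excluded from $A$ in accordance with the footnote convention, since they do not contribute to $\Delta_X$. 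As a sanity check, when $X$ is $n$-regular one has $\sum_i t_i\,\val(v_i)=tn$ independently of $\gamma$, and the integral reduces to $e^{-tn}\,t^k/k!$ times the volume of the standard simplex, recovering \eqref{heat kernel path sum for regular graph}.
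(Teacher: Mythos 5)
Your proof is correct, but it takes a genuinely different route from the paper's. The paper treats the path sum formula for the Green's function, Eq.~\eqref{eq: greens corollary}, as already established, observes that $G_X$ as a function of $m^2$ is the Laplace transform in $t$ of the heat kernel $e^{-t\Delta_X}$, and inverts the Laplace transform termwise: the product $\prod_i(m^2+\val(v_i))^{-1}$ is the Laplace transform of the iterated convolution of the exponentials $e^{-t\val(v_i)}$, which is precisely the simplex integral $W(\gamma;t)$. You instead expand $e^{-t(D-A)}$ directly by the Duhamel/Dyson series around the diagonal part $D$, so the ordered time-simplex appears from the outset and the change of variables $t_0=s_1$, $t_i=s_{i+1}-s_i$, $t_k=t-s_k$ produces the stated weight; your identification of the vertex sums with genuine paths (edge multiplicities from $A$, plus path reversal or symmetry of $\Delta_X$ to pass from $P_X(v,u)$ to $P_X(u,v)$) is the only bookkeeping needed. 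What each buys: your argument is self-contained and logically prior --- it does not presuppose the resummed Green's-function path sum, and indeed Eq.~\eqref{eq: greens corollary} can then be recovered by integrating $e^{-m^2t}W(\gamma;t)$ over $t$, exactly as the paper remarks after the proposition --- and it sidesteps any justification of termwise inverse Laplace transforms; the paper's argument is shorter given what precedes it and makes the Laplace-transform relation between propagator and heat kernel, which organizes the whole section, explicit. Your sanity check in the regular case and your handling of short loops are both consistent with the paper's conventions.
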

\begin{proof}
To prove this result, note that the Green's function $G_X$ as a function of $m^2$ is the Laplace transform $L$ of the heat kernel $e^{-t\Delta_X}$ as a function of $t$. Thus, one can recover the heat kernel as the inverse Laplace transform $L^{-1}$ of $G_X$.  Applying $L^{-1}$ to (\ref{eq: greens corollary}) termwise, we obtain  (\ref{heat kernel path sum for irregular graph}), (\ref{heat kernel path sum for irregular graph: weight}) (note that the product of functions $\frac{1}{m^2+\val(v)}$ is mapped by $L^{-1}$ to the convolution of functions $L^{-1}(\frac{1}{m^2+\val(v)})=e^{-t\val(v)}$).
\end{proof}
As a function of $t$, the weight (\ref{heat kernel path sum for irregular graph: weight}) is a certain polynomial in $t$ and $e^{-t}$ with rational coefficients (depending on the sequence of valences $\val(v_i)$). If all valences along $\gamma$ are the same (e.g. if $X$ is regular), then the integral over the simplex evaluates  to $W(\gamma;t)=\frac{t^k}{k!}e^{-t\cdot\mr{val}}$ -- same as the weight of a path in (\ref{heat kernel path sum for regular graph}).

Note also that integrating (\ref{heat kernel path sum for irregular graph: weight}) (multiplied by $e^{-m^2 t}$)  in $t$, we obtain an integral expression  for the weight (\ref{w}) of a path in the path sum formula for the Green's function:
\begin{equation}
w(\gamma)=\int_{t_0,\ldots,t_{k}> 0} dt_0\cdots dt_{k}\, e^{-\sum_{i=0}^{k} t_i (\val(v_i)+m^2)}.
\end{equation}
Here unlike (\ref{heat kernel path sum for irregular graph: weight}) the integral is over $\RR_+^{k+1}$, not over a $k$-simplex. 

Observe that the resulting formula for the Green's function
\begin{equation}
\langle u| G_X |v\rangle = \sum_{\gamma\in P_X(u,v)} \int_{t_0,\ldots,t_{k}> 0} dt_0\cdots dt_{k}\, e^{-\sum_{i=0}^{k} t_i (\val(v_i)+m^2)}
\end{equation}
bears close resemblance to the first quantization formula (\ref{G 1q with 1d gravity}), where the  proper times $t_0,\ldots,t_k$ should be though of as parametrizing the worldline metric field $\xi$ (and the path $\gamma$ is the field of the ``1d sigma model'').\footnote{ 
More explicitly, one can think of the worldline as a standard interval $[0,1]$ subdivided into $k$ sub-intervals by points $p_0=0<p_1<\cdots<p_{k-1}<p_k=1$ (we think of $p_i$ as moments when the particle jumps to the next vertex). Then one can think of $\{t_i\}$ as moduli of metrics $\xi$ on $[0,1]$ modulo diffeomorphisms of $[0,1]$ relative to (fixed at) the points $p_0,p_1,\ldots,p_k$.
}
We imagine the particle moving on $X$ along $\gamma$, spending time $t_i$ at the $i$-th vertex and making instantaneous jumps between the vertices, with the ``action functional'' 
\begin{equation}
\overline{S}^{1q}(\gamma,\{t_i\})=\sum_{i=0}^k t_i (\val(v_i)+m^2).
\end{equation}

\subsection{Examples}
\subsubsection{Circle graph, $N=3$.}
Consider again the 
circle graph of Example \ref{ex: circle graph} for $N = 3$ (Figure \ref{fig:N=3circle path examples }). 
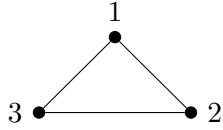
\begin{figure}[H]
    \centering
\begin{tikzpicture}
\node[draw, shape=circle, fill=black,label=above:{1},inner sep = 1.5pt] (1) at (0,1) {};
\node[draw, shape=circle, fill=black,label=left:{3},inner sep = 1.5pt] (3) at (-1,0) {} 
edge (1); 
\node[draw, shape=circle, fill=black,label=right:{2},inner sep = 1.5pt] (2) at (1,0) {} 
edge (3) edge (1); 
\end{tikzpicture}
\caption{The $N=3$ circle graph.}
 \label{fig:N=3circle path examples }
\end{figure}
Counting h-paths from $1$ to $2$, we see that there are no paths of length 0, a unique path $(1,(12),2)$ of length 1, and 5 paths of length 2: 
\begin{multline*}
    (1,(13),3,(23),2),\quad (1,(13),1,(12),2), \quad  (1,(12),1,(12),2), \\ (1,(12),2,(12),2), \quad (1, (12),2,(23),2).
\end{multline*} 
The first one comes with a + sign, since it has no hesitations, the other 4 paths hesitate once either at 1 or 2 and come with a minus sign, the overall count is therefore $-3$. Counting paths beyond that is already quite hard. Looking at the Greens' function, we have 
\begin{align*}
   \langle 1 | G_X | 2 \rangle  &= \frac{1}{m^2(m^2+3)} = m^{-4}\left(\frac{1}{1+3m^{-2}}\right) = m^{-4}\sum_{k\geq 0} (-3m^{-2})^k \\ 
    &= m^{-2}\left(0 \cdot m^0 + 1 \cdot m^{-2} - 3 m^{-4} + 9 m^{-6} + \ldots \right).
\end{align*}
Since the circle graph is regular, we can count paths from $u$ to $v$ by expanding in the parameter $\alpha^{-1} = \frac{1}{m^2+2}$. Here we observe that 
\begin{align*}
   \langle 1 |  G_X | 2 \rangle &= \frac{1}{(\alpha - 2)(\alpha + 1)} = \frac{1}{3\alpha}\left(\frac{1}{1 -2\alpha^{-1}} - \frac{1}{1 + \alpha^{-1}}\right) 
   \\ &= \frac{1}{3\alpha} \sum_{k\geq 0} (2^k - (-1)^k)\alpha^{-k} \\ 
    &= \alpha^{-1}\left(0 \cdot \alpha^0 + 1 \cdot \alpha^{-1} + 1 \alpha^{-2} + 3\alpha^{-3} + 5\alpha^{-4} + 11\alpha^{-5} + \ldots \right),
\end{align*}
and one can count explicitly that there is no path of length zero, a unique path (12) of length 1, a unique path (132) of length 2, 3 paths (1212),(1312),(1232) of length 3, 5 paths (12312), (13212), (13132), (12132), (13232) of length 4, and so on.\footnote{For brevity, here we just denote a path by its ordered collection of vertices, which determines the edges that are traversed.}
Similarly, we could expand 
\begin{align*}
    \langle 1 | G_X | 1 \rangle  &= \frac{m^2+1}{m^2(m^2+3)} = m^{-2}\left(1 + m^{-2}\right)\left(\frac{1}{1+3m^{-2}}\right) 
    \\ &= m^{-2}\left(1 + (-2)\sum_{k\geq 1}(-3)^{k-1}m^{-2k}\right) \\ 
    &= m^{-2} \left(1 \cdot m^{0} + (-2) \cdot m^{-2} + 6 \cdot m^{-4}  + (-18) \cdot m^{-6} + \ldots\right),
\end{align*}
which counts h-paths from vertex 1 to itself: a single paths of length 0, 2 length 1 paths which hesitate once at 1, two length 2 paths with 0 hesitations and 4 length 2 paths with 2 hesitations, and so on.  In terms of $\alpha = m^2 +2$, we get 
\begin{align*}
    \langle 1 | G_X | 1 \rangle  &= \frac{\alpha -1 }{(\alpha -2) (\alpha +1 )} \\ 
    &= \alpha^{-1}\sum_{k \geq 0}\frac{2^k + 2(-1)^k}{3}\alpha^{-k}  \\
    =& \alpha^{-1}\left(1\cdot \alpha^0 + 0 \cdot \alpha^{-1} + 2\cdot \alpha^{-2} + 2\cdot \alpha^{-3} + 6\cdot \alpha^{-4} + \ldots \right)  ,
\end{align*}
where we recognize the path counts from 1 to itself: A unique path (1) of length 0, no paths of length 1, two paths (121),(131) of length 2, 2 paths (1231),(1321) of length 3, and so on.   \\ 
The determinant is $\det K_X = m^2(m^2 + 3)^2$, so we have 
\begin{align*}
    \log\det m^{-2}K_X &= 2\log (1 + 3m^{-2}) = -2 \sum_{k\geq 1} \frac{(-3m^{-2})^k}{k} = \\
    & =
    6m^{-2} - 9m^{-4} + 18m^{-6} - \frac{81}{2}m^{-8} + \ldots 
\end{align*}
and we can see that rational numbers appear, because we are either counting paths with $\frac{1}{l(\tilde\gamma)}$, or cycles with $\frac{1}{t(\tilde\gamma)}$. Let us verify the cycle count for the first two powers of $m^2$. Indeed, there is a total of 6 cycles of length 1 that hesitate once, of the form $(1,(12),1)$, and similar. At length 2, there are 3 closed cycles that do no hesitate, of the form $(1,(12),2,(12),1)$. Then, there are three cycles that hesitate twice and are of the form $(1,(12),1,(31),1)$ (they visit both edges starting at a vertex). Moreover, at every vertex we have the cycles of the form $(1,(12),1,(12),1)$. There are a total of 6 such cycles, however, they come with a factor of 1/2 because those are traversed twice!  Overall we obtain $3 + 3 + \frac12 \cdot 6 = 9$ cycles (they all come with the same $+$ sign). \\ 
Finally, we can count cycles in $X$ by expanding the logarithm of the determinant in powers of $\alpha$: 
\begin{equation}
\begin{aligned}
        \log\det \tilde{K}_X &= \log \frac{(\alpha -2)(\alpha +1)^2}{\alpha^3} = \log 1 - 2 \alpha^{-1} + 2 \log 1+\alpha^{-1} \\
        &= -\sum_{k \geq 1} \frac{(2\alpha^{-1})^k}{k} - 2\sum_{k \geq 1}\frac{(-\alpha)^{-k}}{k} \\
    &= - \sum_{k \geq 1}\frac{2^k + 2(-1)^k}{k}\alpha^{-k} \\
   & =-\left( 0 \cdot \alpha^{-1} + 3\cdot \alpha^{-2} + 2\alpha^{-3} + \frac{9}{2}\alpha^{-4} + \ldots \right) .
\end{aligned}
\end{equation}
Counting cycles we see there are 0 cycles of length 1, 3 cycles of length 2, namely (121),(131),(232), 2 cycles of length 3, namely (1231), (1321). There are 3 primitive cycles of length 4 (those of the form (12131) and similar), and 3 cycles which are traversed twice ((12121) and similar), which gives $3 +\frac32 = \frac{9}{2}$. 
\subsubsection{Line graph, $N=3$}
Consider again the line graph of example \ref{ex: path graph N=3}. For instance, we have 
\begin{align*}
    \langle 1 | G_X | 3 \rangle &= \frac{1}{m^2(1+m^2)(3+m^2)} = m^{-2}\frac{1}{2}\left(\frac{1}{1+m^2}-\frac{1}{3+m^2}\right) \\
    &=\frac{m^{-4}}{2}\sum_{k=0}^\infty (-m^{-2})^k - (-3m^{-2})^{-k} = \frac{m^{-4}}{2}\sum_{k=0}^\infty((-1)^k - (-3)^k)m^{-2k} \\ 
    &= m^{-2}\left(0\cdot m^0 + 0 \cdot m^{-2} + 1 \cdot m^{-4} - 4 \cdot m^{-6} + 13 m^{-8} + \ldots\right)
\end{align*}
and indeed we can observe there are no h-paths from 1 to 3 of length 0 and 1, and there is a unique path $\gamma$ of length 2. At length 3, there are 4 different h-paths whose underlying path is $\gamma$ and who hesitate exactly once, there are a total of 1 + 2 + 1 possibilities to do so. At the next order, there are a total of 11 possibilities for $\gamma$ to hesitate twice, and two new paths of length 4 appear, explaining the coefficient 13.

The path sum (\ref{eq: greens corollary}) becomes
$$
\langle 1| G_X |3\rangle = \frac{1}{(1+m^2)^2 (2+m^2)}+ \frac{2}{(1+m^2)^3 (2+m^2)^2}+\cdots
$$
Here the numerator $1$ corresponds to the single path of length $2$, $(123)$; the numerator $2$ corresponds to the two paths of length $4$, $(12123), (12323)$. In fact, there are exactly $2^{l-1}$ paths $1\ra 3$ of length $2l$ for each $l\geq 1$, and along these paths  the 1-valent vertices (endpoints) alternate with the 2-valent (middle) vertex, resulting in
$$
\langle 1| G_X |3\rangle = \sum_{l\geq 1}\frac{2^{l-1}}{(1+m^2)^{l+1} (2+m^2)^l}.
$$

For the determinant $\det K_X = {m^2(m^2+1)(m^2+3)}$, we can give the hesitant cycles expansion 
\begin{align*} \log\det m^{-2}K_X &= \log (1 + m^{-2}) + \log (1 + 3m^{-2}) = - \sum_{k \geq 1} \frac{(-1)^k + (-3)^k}{k}m^{-2k}  \\
&= -\left(-4 m^{-2} + 5m^{-4} + \frac{28}{3}m^{-6} + \ldots \right).
\end{align*}
Here the first 4 is given by the four hesitant cycles of length 1. At length 2, we have the 4 iterates of length 1 hesitant cycles, contributing 2, a new hesitant cycle that hesitates twice at 2 (in different directions), and 2 regular cycles of length 2, for a total of $4\cdot\frac12 + 1 + 2 = 5.$ 
For the path sum we have 
\begin{multline*}
    -\log \det \tilde{K}_X = -\log \frac{m^2(m^2+3)}{(m^2+1)(m^2+2)} \\
    = -\log \left(1 - \frac{2}{(m^2+1)(m^2+2)}\right) = \sum_{k \geq 1} \frac{2^k}{k}(m^2+1)^{-k}(m^2+2)^{-k} ,
\end{multline*}
which means there are $2^k/k$ cycles (counted with $1/t(\gamma)$) of length $2k$. For instance, there are 2 cycles of length 2, namely (121) and (232). There is a unique primitive length 4 cycle, namely (12321), and the two non-primitive cycles (12121),(23232), which contribute $k = \frac12$, so we obtain $1 + 2\frac12 = 2$. There are 2 primitive length 6 cycles, namely (1232321) and (1212321), and the two non-primitive cycles (1212121), (2323232), contributing $\frac13$ each, for a total of $2 + \frac{2}{3}= \frac83$. At length 8 there are 3 new primitive cycles, the iterate of the length 4 cycle and the iterates of the 2 length 2 cycles for a total of $3 + \frac12 + 2 \cdot \frac 14 = 4 = 2^4/4$.
\subsection{Relative versions} \label{sec: path sums relative}
In this section we will study path-sum formulae for a graph $X$ relative to a boundary subgraph $Y$.
We will then give a path-sum proof of the gluing formula (Theorem \ref{thm: gluing prop and det}) in the case of a closed graph presented as a gluing  of subgraphs over $Y$. The extension to gluing of cobordisms is straightforward but notationally tedious. 

\subsubsection{h-path formulae for Dirichlet propagator, extension operator, Dirichlet-to-Neumann operator}
In this section we consider the path sum versions of the objects introduced in Section \ref{subsec:Gaussian-theory-relative-boundary}. 
Remember that, for a graph $X$ and a subgraph $Y$, we have the notations \eqref{K_X inverse as a block matrix}:
\begin{equation*}
    (K_X)^{-1} = \left( 
    \begin{array}{c|c}
         A& B \\ \hline
         C& D
    \end{array}
    \right)
\end{equation*}
and \eqref{K_X  as a block matrix}:
\begin{equation*}
    K_X = \left( 
    \begin{array}{c|c}
         \wh{A}=K_{X,Y}& \wh{B} \\ \hline
         \wh{C}& \wh{D}
    \end{array}
    \right).
\end{equation*}
We are interested in the following objects: 
\begin{itemize} 
\item The propagator with Dirichlet boundary conditions on $Y$, $G_{X,Y} = K_{X,Y}^{-1}$(cf. Section \ref{sec:rel part funct}).
\item The determinant of the kinetic operator $K_{X,Y}$ with Dirichlet boundary on $Y$ (cf. Section  \ref{sec:rel part funct}).
\item The combinatorial Dirichlet-to-Neumann operator $\DN_{Y,X} = D^{-1} \colon F_Y \to F_Y$ (cf. Section \ref{sec:DtoN}).
\item The extension operator $E_{Y,X} = BD^{-1} \colon F_Y \to F_X$ (cf. equation \eqref{E}).
\end{itemize}

\textbf{Propagator with Dirichlet boundary conditions.} 
For $u,v$ two vertices of $X\setminus Y$,
let us denote by $\Pi_{X,Y}(u,v)$ 
the set of h-paths from $u$ to $v$ that contain no vertices in $Y$ (but they may contain edges between $X\setminus Y$ and $Y$), and $\Pi^k_{X,Y}(u,v)$ the subset of such paths that have length $k$. Then we have the formula (\cite{Simone})
\begin{equation}
    \langle u | \Delta^k_{X,Y} | v \rangle = \sum_{\tilde\gamma \in \Pi_{X,Y}^k(u,v) }(-1)^{\deg(\tilde\gamma)}.
\end{equation}
In exactly the same manner as in the previous subsection, we can then prove 
\begin{equation}
\langle u | (1+m^{-2}\Delta_{X,Y})^{-1}| v \rangle = \sum_{\tilde\gamma \in \Pi_{X,Y}(u,v)} s(\tilde{\gamma}) 
\end{equation}
and therefore 
\begin{equation}
    \langle u | G_{X,Y} |v \rangle = m^{-2} \sum_{\tilde\gamma\in\Pi_{X,Y}(u,v)}s(\tilde{\gamma}). 
    \label{eq: G rel hesitant paths}
\end{equation}

\textbf{Determinant of relative kinetic operator.}
In the same fashion, we obtain the formula 
\begin{equation}
    \log\det\left(\frac{K_{X,Y}}{m^{2}}\right) = -\sum_{\tilde\gamma\in C^{\geq 1}_{X,Y} } \frac{s(\tilde{\gamma})}{t(\tilde\gamma)},
    \label{eq: det K rel hesitant paths}
\end{equation}
where we have introduced the notation $C^{\geq 1}_{X,Y}$ for cycles corresponding to closed h-paths in $X\setminus Y$ that may use edges between $X\setminus Y$ and $Y$.

\textbf{Dirichlet-to-Neumann operator.}
Notice  also that as a submatrix of $K_X^{-1}$, we have the following path sums for $D$ (here $u,v\in Y$): 
\begin{equation}
   \langle u | D |v \rangle = m^{-2}\sum_{\tilde\gamma \in \Pi_X(u,v)} s(\tilde{\gamma}). 
    \label{eq: D path sums}
\end{equation}

For $u,v \in Y$, we introduce the notation $\Pi''_{X,Y}(u,v)$ to be those h-paths from $u$ to $v$ containing exactly two vertices in $Y$, i.e. the start- and end-points. We define the operator $D'\colon C^0(Y) \to C^0(Y)$ given by summing over such paths (see Figure \ref{fig: path Y to Y})
\begin{equation}
    \langle u | D' | v \rangle \colon = \sum_{\tilde\gamma \in \Pi''_{X,Y}(u,v)} s(\tilde{\gamma}). 
    \label{eq: D prime hesitant paths }
\end{equation}
\begin{figure}[H]
    \centering
    \begin{subfigure}{0.45\textwidth}
            \includegraphics[scale=0.6]{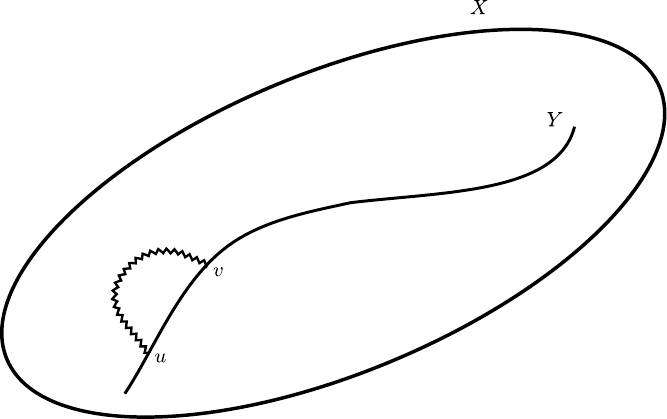}
\caption{h-paths in $\Pi''_X(u,v)$, contributing to $ \langle u | D' | v\rangle$. }
    \label{fig: path Y to Y}
    \end{subfigure}
       \begin{subfigure}{0.45\textwidth}
            \includegraphics[scale=0.6]{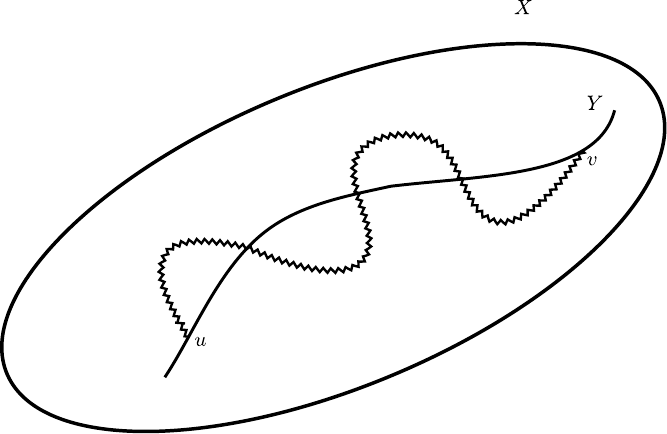}
\caption{h-paths contributing to $\langle u | (D')^k| v\rangle$. }
    \label{fig: path Y to Y several}
    \end{subfigure}
\end{figure}
Notice that $\langle u | (D')^k |v \rangle $ is given by summing over paths which cross the interface $Y$ exactly $k-1$ times between the start- and the end-point (see Figure \ref{fig: path Y to Y several}). 
Since the summand is multiplicative, we can therefore rewrite $D$ as 
\begin{equation}
    D = m^{-2}\sum_{k\geq 0} (D')^k = m^{-2}(I - D')^{-1}.
\end{equation}
Therefore the Dirichlet-to-Neumann operator is given by the formula 
\begin{equation}
    \DN_{Y,X} = D^{-1} = m^{2}(I - D'). 
    \label{eq: D to N rel h paths}
\end{equation}

\textbf{Extension operator.} 
Finally, we give a path sum formula for the extension operator. To do so we introduce the notation $\Pi'_{X,Y}(u,v)$ for h-paths that start at a vertex $u \in X \setminus Y$, end at a vertex $v \in Y$, and contain only a single vertex on $Y$, i.e. the end-point. 

 \begin{lemma}
     The extension operator can be expressed as 
  \begin{eqnarray}
       BD^{-1}(u,v) = E_{Y,X}(u,v) = \sum_{\tilde\gamma \in \Pi'_{X,Y}(u,v)}s(\tilde{\gamma}) .\label{eq: h path sum extension operator}
     \end{eqnarray}
 \end{lemma}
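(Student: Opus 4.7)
The plan is to exploit the h-path formula \eqref{eq: hesitant path sum G} for the full propagator $G_X$, read off its $(X\setminus Y)\times Y$ block $B$, and decompose every contributing h-path at its first moment of entry into $Y$. Together with multiplicativity of the weight $s$, this will express $B$ as a product $\widetilde{E}\cdot D$ for the candidate operator $\widetilde{E}$, which by invertibility of $D$ yields $\widetilde{E}=BD^{-1}=E_{Y,X}$.

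First I would apply \eqref{eq: hesitant path sum G} for $u\in V_X\setminus V_Y$ and $v\in V_Y$ to write
\begin{equation*}
    B(u,v)=\langle u|G_X|v\rangle = m^{-2}\sum_{\tilde\gamma\in \Pi_X(u,v)} s(\tilde\gamma).
\end{equation*}
Since $u\notin V_Y$ but $v\in V_Y$, every $\tilde\gamma\in\Pi_X(u,v)$ has a well-defined smallest index $i_\ast$ with $v_{i_\ast}\in V_Y$; let $w:=v_{i_\ast}$. Cutting $\tilde\gamma$ at position $i_\ast$ gives a pair $(\tilde\gamma_1,\tilde\gamma_2)$ with $\tilde\gamma_1\in \Pi'_{X,Y}(u,w)$ (bulk h-path from $u$ to $w$ meeting $Y$ only at its endpoint) and $\tilde\gamma_2\in \Pi_X(w,v)$ (an unrestricted h-path from $w$ to $v$). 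Conversely, concatenation $\tilde\gamma_1\ast \tilde\gamma_2$ recovers an h-path in $\Pi_X(u,v)$ whose first visit to $Y$ is at $w$, so this is a bijection
\begin{equation*}
    \Pi_X(u,v)\;\xrightarrow{\sim}\; \bigsqcup_{w\in V_Y}\Pi'_{X,Y}(u,w)\times \Pi_X(w,v).
\end{equation*}

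Second, since lengths and numbers of hesitations are additive under concatenation, the weight $s(\tilde\gamma)=m^{-2l(\tilde\gamma)}(-1)^{h(\tilde\gamma)}$ is multiplicative: $s(\tilde\gamma)=s(\tilde\gamma_1)s(\tilde\gamma_2)$. The sum therefore factors as
\begin{equation*}
    B(u,v)=\sum_{w\in V_Y}\Bigl(\sum_{\tilde\gamma_1\in \Pi'_{X,Y}(u,w)}s(\tilde\gamma_1)\Bigr)\cdot \Bigl(m^{-2}\sum_{\tilde\gamma_2\in \Pi_X(w,v)}s(\tilde\gamma_2)\Bigr).
\end{equation*}
By \eqref{eq: D path sums}, the second parenthesized factor is exactly $D(w,v)$. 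Denoting $\widetilde{E}(u,w):=\sum_{\tilde\gamma_1\in \Pi'_{X,Y}(u,w)}s(\tilde\gamma_1)$, this reads as the matrix identity $B=\widetilde{E}\cdot D$, and since $D$ is invertible (its inverse is $\DN_{Y,X}$) we conclude $\widetilde{E}=BD^{-1}=E_{Y,X}$, which is \eqref{eq: h path sum extension operator}.

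The only genuine work in this argument is setting up the first-entry decomposition as an honest bijection and noting that $\Pi'_{X,Y}$ is defined precisely so that its h-paths are exactly those that can occur as $\tilde\gamma_1$; everything else is multiplicativity of $s$ and recognizing $D$ in the path sum. I do not expect a real obstacle, beyond being careful that hesitations taking place at vertices in $V_X\setminus V_Y$ (including hesitations along edges connecting bulk to $Y$) do not count as visits to $Y$, which is automatic since hesitation at $v_i$ does not change location.
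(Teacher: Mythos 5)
Your proposal is correct and is essentially the paper's own argument: both rest on the first-entry-into-$Y$ decomposition giving the bijection $\Pi_X(u,v)\simeq\bigsqcup_{w\in Y}\Pi'_{X,Y}(u,w)\times\Pi_X(w,v)$, multiplicativity of $s$, and recognizing the factor $D$ via \eqref{eq: D path sums}. The only cosmetic difference is direction: you factor $B$ as $\widetilde{E}\cdot D$ starting from \eqref{eq: hesitant path sum G}, while the paper computes $\widetilde{B}D$ and shows it equals $B$ — the same manipulations read in reverse.
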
 
 \begin{proof}
    We will prove that composing with $D$ we obtain $B$. Indeed, denote the right hand side of equation \eqref{eq: h path sum extension operator} by $\tilde{B}$. Then, using the h-path sum expression for $D$  \eqref{eq: D path sums} we obtain 
    \begin{equation*}
        \tilde{B}D = m^{-2}\sum_{v \in Y} \left(\sum_{\tilde\gamma \in \Pi'_{X,Y}(u,v)}s(\tilde{\gamma})\right)\left( \sum_{\tilde\gamma \in \Pi_X(v,w)}s(\tilde{\gamma})\right).
    \end{equation*}
    Using multiplicativity, we can rewrite this as 
    \begin{equation*}
        m^{-2}\sum_{(\tilde\gamma_1,\tilde\gamma_2) \in \sqcup_{v\in Y} \Pi'_{X,Y}(u,v) \times \Pi_X(v,w)}s(\tilde\gamma_1 * \tilde\gamma_2).
    \end{equation*}
    Now the argument finishes by observing that any h-path $\tilde\gamma$ from a vertex $u$ in $X\setminus Y$ to a vertex $w$ in $Y$ can be decomposed as follows. Let $v \in Y$ be the first vertex of $Y$ that appears in $\tilde\gamma$ and denote $\tilde\gamma_1$ the part of the path before $v$, and $\tilde\gamma_2$ the rest. Then $\tilde\gamma = \tilde\gamma_1 * \tilde\gamma_2$ and $\tilde\gamma_1 \in \Pi'_X(u,v)$. This decomposition is the inverse of the composition map
    \begin{equation*}
    \begin{array}{ccc}
       \sqcup_{v\in Y} \Pi'_{X,Y}(u,v) \times \Pi_X(v,w) &\to& \Pi_X(u,w) \\
       (\tilde\gamma_1,\tilde\gamma_2) &\mapsto& \tilde\gamma_1 * \tilde\gamma_2
       \end{array}
    \end{equation*}
    which is therefore a bijection. In particular, we can rewrite the expression above as 
    \begin{equation*}
        m^{-2}\sum_{\tilde\gamma\in\Pi_X(u,w)}s(\tilde{\gamma}) = B(u,w).
    \end{equation*}
    We conclude that $\tilde{B} = BD^{-1}$.
 \end{proof}
 \begin{figure}[H]
     \centering
     \begin{subfigure}{0.3\linewidth}
     \includegraphics[scale=0.4]{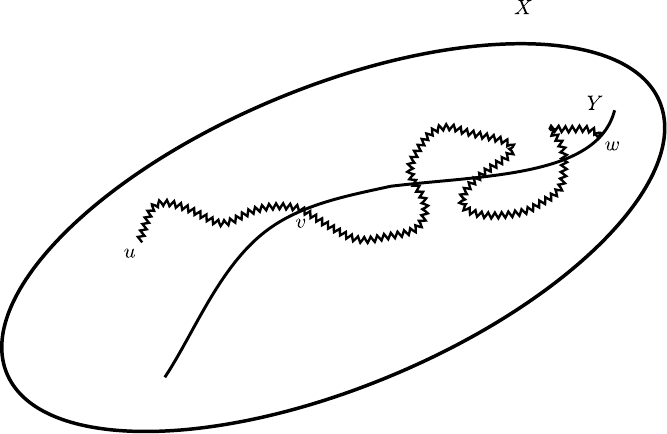}
         \caption{A h-path in $\Pi(u,w)$ contributing to $\langle u | B| w\rangle$.}
     \end{subfigure}
     \qquad
     \begin{subfigure}{0.3\linewidth}
     \includegraphics[scale=0.4]{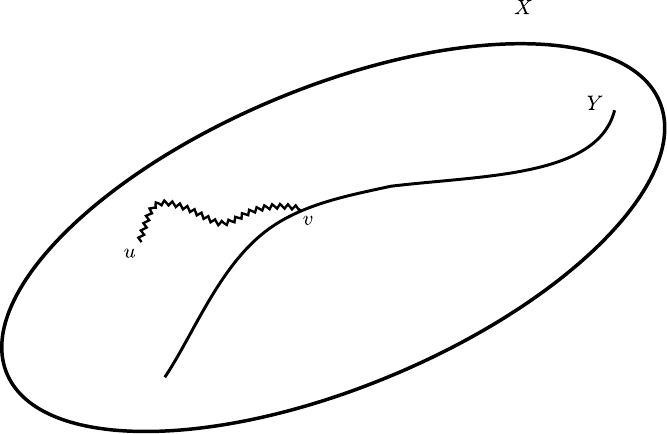}
     \caption{A h-path in $\Pi'(u,v)$ contributing to $\langle u | E_{Y,X} | v \rangle$.}
     \end{subfigure}
       \begin{subfigure}{0.3\linewidth}
     \includegraphics[scale=0.4]{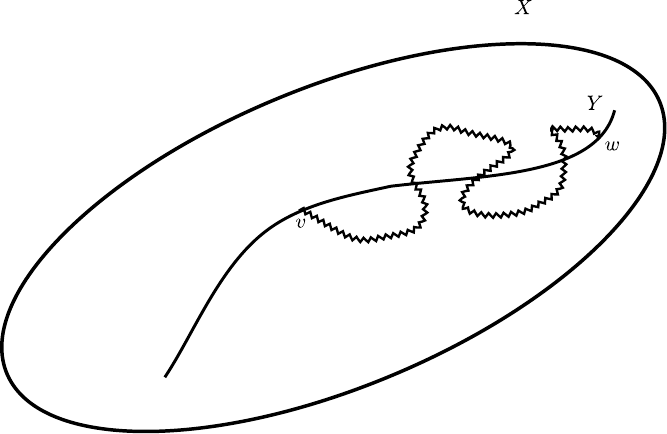}
     \caption{A h-path in $\Pi(v,w)$ contributing to $\langle v | D | w \rangle$.}
     \end{subfigure}
     \caption{Paths contributing to $B$ (left) can be decomposed into paths contributing to $E_{Y,X}$ (middle) and paths contributing to $D$ (right),
     proving that $B = E_{Y,X}D$. }
     \label{fig:enter-label}
 \end{figure}

 \subsubsection{Resumming h-paths}
 In the relative case, for any path $\gamma$ we use the notation 
 \begin{equation}
     w_{X,Y}(\gamma) = \prod_{v \in V(\gamma) \setminus V(Y)} \frac{1}{m^2 + \val_X(v)},
 \end{equation}
 where for a vertex $v\in X \setminus Y$, we put the subscript $X$ on $\val_X(v)$ to emphasize we are considering its valence in $X$, i.e. we are counting all edges in $X$ incident to $v$ regardless if they end on $Y$ or not. Then we have the following path sum formulae for the relative objects: 
\begin{proposition}\label{prop: rel gadgets path sums}
    The propagator with Dirichlet boundary condition can be expressed as 
    \begin{equation}
        \langle u | G_{X,Y} |v \rangle = \sum_{\gamma \in P_{X\setminus Y}(u,v)}w_{X,Y}(\gamma). \label{eq: path sum rel prop}
    \end{equation}
Here the sum is over paths involving only vertices in $X \setminus Y$.  \footnote{Notice that if instead we were using the path weight $w_{X\setminus Y}(\gamma)$, we would obtain the Green's function $G_{X \setminus Y}$ of the closed graph $X \setminus Y$, not the relative Green's function $G_{X,Y}$.}

Similarly, for the extension operator we have 
\begin{equation}
    \langle u | E_{Y,X} |v\rangle = \sum_{\gamma \in P'_{X,Y}(u,v)} w_{X,Y}(\gamma),  
    \label{eq: path sum extension}
\end{equation}
where $P'_{X,Y}(u,v)$ denotes paths in $X\setminus Y$ from $u$ to $v$ with exactly one vertex (i.e. the endpoint) in $Y$. 
Finally, the operator $D'$ appearing in the Dirichlet-to-Neumann operator can be written as 
\begin{equation}
    \langle u | D' |v \rangle  = -m^{-2}\val(v)\delta_{uv} + m^{-2}\sum_{\gamma \in P''_{X,Y}(u,v)} w_{X,Y}(\gamma), 
    \label{eq: path sum D prime}
\end{equation}
where $P''_{X,Y}(u,v)$ denotes paths in $X$ with exactly two (i.e. start- and endpoint) vertices in $Y$. 
In particular, the Dirichlet-to-Neumann operator is 
\begin{equation}
    \langle u | \DN_{Y,X} |v \rangle  = (m^2 + \val(v))\delta_{uv} - \sum_{P''_{X,Y}(u,v)}w_{X,Y}(\gamma). \label{eq: path sum D to N}
\end{equation}
\end{proposition}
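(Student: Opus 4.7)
The plan is to derive each of the four identities by resumming hesitations in the h-path formulae \eqref{eq: G rel hesitant paths}, \eqref{eq: h path sum extension operator} and \eqref{eq: D prime hesitant paths } in parallel with the closed-graph passage from \eqref{eq: greens hesitant paths} to \eqref{eq: greens corollary}. The projection $P$ sending an h-path to its underlying regular path is surjective onto the relevant path space in each case, and the sum over a fiber of $P$ reduces to a product of geometric series. The formula \eqref{eq: path sum D to N} for $\DN_{Y,X}$ then follows immediately from the one for $D'$ via \eqref{eq: D to N rel h paths}.

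The key resummation step goes as follows. Given a regular path $\gamma = (v_0,\ldots,v_k)$, I would parameterize h-paths with underlying path $\gamma$ by choosing, for each vertex $v_i$ at which hesitations are allowed, a finite sequence of hesitation edges; each such edge may be any of the $\val_X(v_i)$ edges of $X$ incident to $v_i$ (including edges going to $Y$, since hesitations preserve the current vertex). If $J \subseteq \{0,\ldots,k\}$ denotes the set of indices where hesitations are permitted, the sum of $s(\tilde\gamma) = (m^{-2})^{l(\tilde\gamma)}(-1)^{h(\tilde\gamma)}$ over the fiber of $\gamma$ becomes
\[
(m^{-2})^k \prod_{i \in J}\sum_{j_i \geq 0}\bigl(-m^{-2}\val_X(v_i)\bigr)^{j_i} = (m^{-2})^k \prod_{i \in J}\frac{m^2}{m^2 + \val_X(v_i)}.
\]
One assumes $m^2 > \val_X(v)$ for every $v$ to ensure absolute convergence; the resulting identity then extends by analytic continuation in $m^2$, as in the closed-graph argument.

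Case analysis determines $J$ in each formula. For $G_{X,Y}$, the h-paths in $\Pi_{X,Y}(u,v)$ have all vertices in $X\setminus Y$, so every vertex of $\gamma$ admits hesitations ($J = \{0,\ldots,k\}$); combined with the overall $m^{-2}$ prefactor in \eqref{eq: G rel hesitant paths}, the fiber sum yields $w_{X,Y}(\gamma)$ exactly. For $E_{Y,X}$, the end-vertex $v_k \in Y$ cannot hesitate, since a hesitation there would produce a second $Y$-entry in the vertex sequence and violate the ``exactly one vertex on $Y$'' constraint defining $\Pi'_{X,Y}(u,v)$; thus $J = \{0,\ldots,k-1\}$ and the fiber sum yields $w_{X,Y}(\gamma)$ without a prefactor. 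For $D'$, both $v_0$ and $v_k$ lie in $Y$ and are forbidden from hesitating by the same constraint, so $J = \{1,\ldots,k-1\}$ and the fiber sum produces $m^{-2} w_{X,Y}(\gamma)$, which is the second summand of \eqref{eq: path sum D prime}.

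The one extra subtlety, and what I expect to be the main obstacle, is the $-m^{-2}\val(v)\delta_{uv}$ correction in the $D'$ formula. It comes from the degenerate h-paths in $\Pi''_{X,Y}(u,u)$ of length one: a single hesitation at $u \in Y$ along any of the $\val_X(u)$ incident edges. Such an h-path has two $Y$-entries (at positions $0$ and $1$, both equal to $u$) and so belongs to $\Pi''_{X,Y}(u,u)$, but its image under $P$ is the trivial zero-length walk at $u$, which is \emph{not} an element of $P''_{X,Y}(u,u)$ under our convention that an excursion has at least one genuine step. Each such h-path carries weight $s = -m^{-2}$, contributing $-m^{-2}\val_X(u)$ in total and accounting for the $\delta_{uv}$ term. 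Once these exceptional h-paths are separated out and the genuine excursions through $X\setminus Y$ are handled by the generic resummation above, the proof reduces to routine geometric-series bookkeeping.
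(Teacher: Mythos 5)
Your proposal is correct and follows essentially the same route as the paper's own (much terser) proof: resum the hesitations fiberwise over the projection from h-paths to paths, observe that endpoints lying on $Y$ are barred from hesitating (which accounts for the extra $m^{-2}$ per jump onto $Y$), and treat the single-hesitation h-paths at $u=v\in Y$ separately to produce the $-m^{-2}\val(v)\delta_{uv}$ term, with \eqref{eq: path sum D to N} then following from $\DN_{Y,X}=m^2(I-D')$. Your write-up simply makes explicit the bookkeeping that the paper summarizes as ``a straightforward generalization of the arguments in the previous section.''
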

\begin{proof}
    Equation \eqref{eq: path sum rel prop} is proved with a straightforward generalization of the arguments in the previous section. For equation \eqref{eq: path sum extension}, notice that because of the final jump there is an additional factor of $m^{-2}$. For the Dirichlet-to-Neumann operator, we have the initial and final jumps contributing a factor of $-m^{-2}$. In the case $u = v$, the contribution of the h-paths which simply hesitate once at $v$ have to be taken into account separately and result in the first term in \eqref{eq: path sum D prime}. Finally, \eqref{eq: path sum D to N} follows from \eqref{eq: path sum D prime} and $\DN_{Y,X} = m^2 (I - D')$. 
\end{proof}
We also have a similar statement for the determinant. For this, we introduce the normalized relative kinetic operator $$\tilde{K}_{X,Y} = \Lambda_{X,Y}^{-1}K_{X,Y} = I - \Lambda_{X,Y}^{-1}A_{X\setminus Y} ,$$
where $\Lambda_{X,Y}$ is the diagonal matrix whose entries are $m^2 + \val_X(v)$. For a closed path $\gamma \in P_{X\setminus Y}(v,v)$, we introduce the notation 
$$w'_{X,Y}(\gamma) = (m^2+\val_X)(v)\prod_{w 
\in V(\gamma)}\frac{1}{m^2+\val_X(w)}. $$
\begin{proposition}
    The determinant of the normalized relative kinetic operator is 
    \begin{equation}\label{eq: det K rel paths}
        \log\det \tilde{K}_{X,Y} = - \sum_{v \in X \setminus Y} \sum_{k=1}^\infty \sum_{\gamma \in P^k_{X\setminus Y}(v,v)}\frac{w'_{X,Y}(\gamma)}{k} = - \sum_{[\gamma] \in C^{\geq 1}_{X
        \setminus Y}} \frac{w'_{X,Y}(\gamma)}{t(\gamma)}.
    \end{equation}
\end{proposition}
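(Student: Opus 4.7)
The proof is a direct adaptation of Proposition \ref{prop: determinant path sums} to the relative setting, so I will outline the plan and highlight only the points where the relative case requires care.

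First, I would unpack the structure of $K_{X,Y}$. From the block decomposition \eqref{K_X  as a block matrix} and the definition of the graph Laplacian, the diagonal entries of $K_{X,Y}$ are $m^2+\val_X(v)$ for $v\in X\setminus Y$ (the \emph{full} valence in $X$, since edges from $X\setminus Y$ to $Y$ still contribute to the diagonal of $K_X$), while the off-diagonal entries record $-1$ for each edge between two bulk vertices. Thus $K_{X,Y}=\Lambda_{X,Y}-A_{X\setminus Y}$, where $A_{X\setminus Y}$ is the adjacency matrix of the induced subgraph on $V_X\setminus V_Y$. Consequently $\tilde K_{X,Y}=I-\Lambda_{X,Y}^{-1}A_{X\setminus Y}$.

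Next, I would apply the power series expansion of the logarithm,
\[
\log\det \tilde K_{X,Y}=\operatorname{tr}\log\bigl(I-\Lambda_{X,Y}^{-1}A_{X\setminus Y}\bigr)=-\sum_{k\geq 1}\frac{1}{k}\operatorname{tr}\bigl((\Lambda_{X,Y}^{-1}A_{X\setminus Y})^k\bigr),
\]
with absolute convergence justified by the same Perron--Frobenius argument as in the closed case: each row sum of $\Lambda_{X,Y}^{-1}A_{X\setminus Y}$ is bounded by $\val_X(v)/(m^2+\val_X(v))<1$, so the spectral radius is $<1$ for every $m^2>0$.

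The third step is the combinatorial interpretation of $(\Lambda_{X,Y}^{-1}A_{X\setminus Y})^k$. Exactly as in the derivation of \eqref{G path-sum via Lambda}, one checks that
\[
\bigl\langle u\bigm|(\Lambda_{X,Y}^{-1}A_{X\setminus Y})^k\bigm|v\bigr\rangle \;=\; \sum_{\gamma\in P^k_{X\setminus Y}(u,v)}\,\prod_{i=0}^{k-1}\frac{1}{m^2+\val_X(v_i)},
\]
where $v_0=u,v_1,\dots,v_k=v$ are the vertices of $\gamma$. Taking $u=v$ and summing over $v\in X\setminus Y$ yields $\operatorname{tr}$. For a closed path $\gamma$ with $v_0=v_k$, the product over $i=0,\dots,k-1$ omits the repeated endpoint, which matches precisely the definition of $w'_{X,Y}(\gamma)$. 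Inserting this into the log expansion and interchanging sums gives the first equality of \eqref{eq: det K rel paths}.

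Finally, to pass to the second equality, I would reorganize the sum over closed paths of length $k$ into orbits of the cyclic group $C_k$ acting by rotation. Since $w'_{X,Y}$ is a cyclic invariant (it depends only on the underlying cycle), and since the orbit of a closed path $\gamma$ has size $k/t(\gamma)$, the double sum $\sum_{v}\sum_{\gamma\in P^k_{X\setminus Y}(v,v)}$ reorganizes as $\sum_{[\gamma]\in C^k_{X\setminus Y}}\tfrac{k}{t(\gamma)}\,w'_{X,Y}(\gamma)$; the factor $k$ cancels the $1/k$ from the log series, leaving the cycle-sum form. There is no serious obstacle here, the only subtle point being to remember that the vertex weights use $\val_X$ rather than $\val_{X\setminus Y}$, which is forced by the definition of $\Lambda_{X,Y}$.
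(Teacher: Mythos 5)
Your proof is correct and follows essentially the same route as the paper's: the paper simply notes that $\log\det \tilde{K}_{X,Y} = \operatorname{tr}\log(1-\Lambda_{X,Y}^{-1}A_{X\setminus Y})$ and refers back to the argument for Proposition \ref{prop: determinant path sums}, which is exactly the expansion and cyclic-orbit count you carry out. The extra details you supply (the identification $K_{X,Y}=\Lambda_{X,Y}-A_{X\setminus Y}$ with the full valence $\val_X$ on the diagonal, the Perron--Frobenius convergence bound, and the check that omitting the repeated endpoint reproduces $w'_{X,Y}$) are all accurate and consistent with what the paper leaves implicit.
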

\begin{proof}
    Again, simply notice that 
    $$\log\det \tilde{K}_{X,Y} = \mathrm{tr}\log (1 - \Lambda^{-1}_{X,Y}A_{X \setminus Y}).$$
    Then, the argument is the same as in the proof of Proposition \ref{prop: determinant path sums} above.
\end{proof}
In the relative case, we are counting paths in $X\setminus Y$, but weighted according to the valence of vertices in $X$. This motivates the following definition. 
\begin{definition}
    We say that pair $(X,Y)$ of a graph $X$ and a subgraph $Y$ is quasi-regular  of degree $n$ if all vertices $v \in X \setminus Y$ have the same valence $n$ in $X$, i.e. 
$$\val_X(v) =n, \forall v \in  V(X\setminus Y).$$ 
\end{definition}
If $X$ is regular, the pair $(X,Y)$ is quasi-regular for any subgraph $Y \subset X$. An important class of examples are the line graphs $X$ of example \ref{ex: path 2 graph rel bdry} with $Y$ both boundary vertices, or more generally rectangular graphs or their higher-dimensional counterparts with $Y$ given by the collection of boundary vertices. See Figure \ref{fig: quasi regular graph}. 

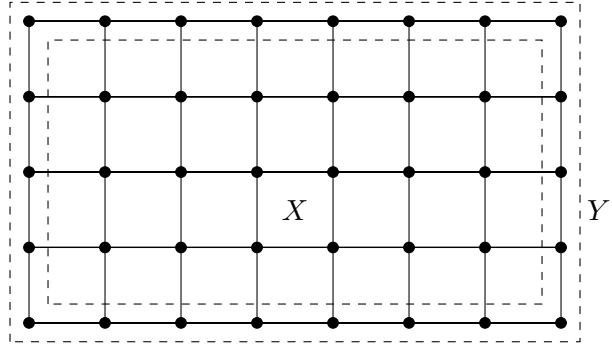
\begin{figure}[H]
    \centering
    \begin{tikzpicture}
    \foreach \x in {0,1,...,7}
    { 
    \draw (\x,0 ) -- (\x,4);
    \foreach \y in {0,1,...,4}
    {
    \draw (0,\y) -- (7,\y); 
    \draw[fill=black] (\x,\y) circle (2pt); 
    }
    }
    \draw[dashed] (-.25,-.25) -- (-.25,4.25) -- (7.25,4.25) -- (7.25,-.25) -- (-.25,-.25);
    \draw[dashed] (.25,.25) -- (.25,3.75) -- (6.75,3.75) -- (6.75,.25) -- (.25,.25);
    \node at (3.5,1.5) {$X$};
    \node at (7.5,1.5) {$Y$};
    \end{tikzpicture}
    \caption{A quasi-regular graph pair $(X,Y)$ with $n=4$.}
    \label{fig: quasi regular graph}
\end{figure}

For quasi-regular graphs, the path sums of Proposition \ref{prop: rel gadgets path sums} simplify to power series in $(m^2 + n)^{-1}$, with $n$ the degree of $(X,Y)$: 
\begin{corollary}
    Suppose $(X,Y)$ is quasi-regular, then we have the following power series expansions for the relative propagator, extension operator, Dirichlet-to-Neumann operator and determinant:
    \begin{align}
        \langle u | G_{X,Y} |v \rangle &= \frac{1}{m^2+n}\sum_{k=0}^\infty p^k_{X\setminus Y}(u,v)(m^2 + n)^{-k} , \label{eq: path sum rel prop quasi reg} \\
         \langle u | E_{Y,X} |v\rangle &= \sum_{k=1}^\infty  ({p'}^k_{X,Y})(u,v) (m^2 +n)^{-k} ,\label{eq: path sum extension quasi reg} \\
    \langle u | \DN_{Y,X} |v \rangle  &= (m^{2}+\val(v))\delta_{uv} - \sum_{k=2}^\infty ({p''}^k_{X,Y})(m^2 + n)^{-k+1} , \label{eq: path sum D to N quasi reg} \\
        \log\det \tilde{K}_{X,Y} &= - \sum_{k=1}^\infty \sum_{[\gamma]\in C^k_X}\frac{(m^2+n)^{-k}}{t(\gamma)} .
    \end{align} 
\end{corollary}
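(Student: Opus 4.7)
The plan is to derive each of the four formulas directly from Proposition \ref{prop: rel gadgets path sums} and equation \eqref{eq: det K rel paths} by observing that quasi-regularity renders the weight $w_{X,Y}(\gamma)$ dependent only on the number of vertices of $\gamma$ that lie in $X\setminus Y$. Explicitly, since $\val_X(v) = n$ for every $v \in V(X\setminus Y)$, one has
\[
w_{X,Y}(\gamma) \;=\; (m^2+n)^{-|V(\gamma)\setminus V(Y)|}
\]
whenever every non-$Y$ vertex of $\gamma$ is bulk, which is automatic in all the path classes appearing in the proposition.

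First, I would treat each object class by counting $|V(\gamma)\setminus V(Y)|$ in terms of the length $k$:
\begin{itemize}
\item For $\gamma \in P_{X\setminus Y}^k(u,v)$ all $k+1$ vertices are bulk, so $w_{X,Y}(\gamma) = (m^2+n)^{-(k+1)}$; substituting into \eqref{eq: path sum rel prop} and grouping by length gives the formula for $\langle u|G_{X,Y}|v\rangle$ after pulling out the prefactor $(m^2+n)^{-1}$ and using $p^k_{X\setminus Y}(u,v) = |P^k_{X\setminus Y}(u,v)|$.
\item For $\gamma \in {P'}^k_{X,Y}(u,v)$ only the endpoint $v\in Y$ is not bulk, so $w_{X,Y}(\gamma)=(m^2+n)^{-k}$; substituting into \eqref{eq: path sum extension} yields the formula for $E_{Y,X}$.
\item For $\gamma \in {P''}^k_{X,Y}(u,v)$ both endpoints lie in $Y$, so $w_{X,Y}(\gamma)=(m^2+n)^{-(k-1)}$; substituting into \eqref{eq: path sum D to N} produces the stated expression for $\DN_{Y,X}$.
\end{itemize}

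For the determinant, a closed path $\gamma \in P_{X\setminus Y}^k(v,v)$ has $V(\gamma) = (v_0,\ldots,v_k)$ entirely in $X\setminus Y$, so $w_{X,Y}(\gamma) = (m^2+n)^{-(k+1)}$ and thus $w'_{X,Y}(\gamma) = (m^2+n)\cdot w_{X,Y}(\gamma) = (m^2+n)^{-k}$; substituting this constant weight into \eqref{eq: det K rel paths} and grouping cycles by their length gives the desired formula.

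None of these steps presents a real obstacle: each reduction is a bookkeeping exercise, since the non-trivial analytic content (the resummation from h-paths to paths, convergence, and the identification of the relative operators) has already been done in Proposition \ref{prop: rel gadgets path sums}. The only point demanding a small amount of care is the diagonal term of $\DN_{Y,X}$: quasi-regularity constrains $\val_X(v)$ only for $v\in V(X\setminus Y)$, whereas the $\delta_{uv}$ term involves $\val(v)$ for $v\in V(Y)$, which is left unchanged. I would also note explicitly the range of $k$ in the $\DN_{Y,X}$ sum (reflecting the absence of length-zero or length-one bulk excursions between boundary vertices under the stated assumption) when writing the proof.
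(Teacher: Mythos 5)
Your derivation is correct and is precisely the argument the paper intends: the corollary is stated there without proof as an immediate specialization of Proposition \ref{prop: rel gadgets path sums} and of \eqref{eq: det K rel paths}, obtained by replacing every bulk-vertex factor in $w_{X,Y}$ by the constant $(m^2+n)^{-1}$ and counting how many entries of the vertex sequence lie outside $Y$ in each path class, exactly as you do. One caveat on your closing parenthetical: quasi-regularity constrains only the valences of vertices in $X\setminus Y$ and does not forbid edges of $X$ joining two vertices of $Y$, so a direct $Y$--$Y$ edge yields a length-$1$ path in $P''_{X,Y}(u,v)$ with weight $1$ (empty product), and the $\DN_{Y,X}$ sum genuinely starts at $k=1$ unless one additionally assumes $X$ has no such edges --- a point that concerns the statement as written as much as your proof.
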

\begin{figure}[H]
    \centering
    \includegraphics[scale=0.8]{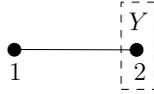}
    \caption{The 2--vertex line graph.}
    \label{fig:2 vertex examples paths}
\end{figure}

Again, we can collect our findings in the following first quantization formula for the partition function: 
\begin{thm}
    The logarithm of the partition function of the Gaussian theory relative to a subgraph $Y$ is 
    \begin{multline}
       \hbar \log Z_{X,Y}(\phi_Y) = 
       -\frac12\sum_{u,v\in Y}\phi_Y(u)\phi_Y(v) \cdot\\
       \cdot\bigg(\left(\frac{m^2}{2} + \val_X(v)-\frac12\val_Y(v)\right)\delta_{uv} + \frac12(A_Y)_{uv}\  - \sum_{\gamma \in P''_{X,Y}(u,v)}w_{X,Y}(\gamma) \bigg) 
       \\
       +\frac{\hbar}{2}\left(\sum_{[\gamma]\in C^{\geq 1}_{X\setminus Y}}\frac{w'_{X,Y}(\gamma)}{t(\gamma)} - \sum_{v\in X}\log (m^2 + \val(v))\right). \label{eq: 1q formula rel Z}
    \end{multline}
\end{thm}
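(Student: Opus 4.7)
The plan is to start from the closed-form Gaussian expression \eqref{Gaussian Z rel to Y}, namely
\begin{equation*}
Z_{X,Y}(\phi_Y) = \det(K_{X,Y})^{-1/2}\exp\!\left(-\tfrac{1}{2\hbar}\bigl(\phi_Y, (\DN_{Y,X}-\tfrac12 K_Y)\phi_Y\bigr)\right),
\end{equation*}
take $\hbar\log$ of both sides, and then substitute the path-sum formulae derived earlier in the section. This decomposes the target identity into a $\phi_Y$-dependent quadratic piece, matched using the path-sum expansion of $\DN_{Y,X}$, and a $\phi_Y$-independent piece $-\tfrac{\hbar}{2}\log\det K_{X,Y}$, matched using the normalized determinant expansion \eqref{eq: det K rel paths}.

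For the quadratic piece, I would write the matrix elements of $K_Y$ as $(K_Y)_{uv} = (m^2+\val_Y(v))\delta_{uv} - (A_Y)_{uv}$ (using $\Delta_Y = \val_Y - A_Y$), and then plug in \eqref{eq: path sum D to N}:
\begin{equation*}
\langle u|\DN_{Y,X}|v\rangle = (m^2+\val_X(v))\delta_{uv} - \sum_{\gamma\in P''_{X,Y}(u,v)} w_{X,Y}(\gamma).
\end{equation*}
Combining, the diagonal becomes $(m^2+\val_X(v))\delta_{uv} - \tfrac12(m^2+\val_Y(v))\delta_{uv} = (\tfrac{m^2}{2}+\val_X(v)-\tfrac12\val_Y(v))\delta_{uv}$, the $-(A_Y)_{uv}$ contribution from $K_Y$ contributes $+\tfrac12(A_Y)_{uv}$ after the $-\tfrac12$ prefactor, and the path sum from $\DN_{Y,X}$ is carried over verbatim. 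This exactly matches the first bracketed expression in the theorem.

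For the determinant piece, I would factor $K_{X,Y} = \Lambda_{X,Y}\tilde{K}_{X,Y}$ in the notation of Section \ref{sec: path sums relative}, so that
\begin{equation*}
\log\det K_{X,Y} = \sum_{v\in X\setminus Y}\log(m^2+\val_X(v)) + \log\det\tilde{K}_{X,Y},
\end{equation*}
and then invoke \eqref{eq: det K rel paths} to rewrite $\log\det\tilde{K}_{X,Y} = -\sum_{[\gamma]\in C^{\geq 1}_{X\setminus Y}} w'_{X,Y}(\gamma)/t(\gamma)$. Multiplying by $-\hbar/2$ produces the last line of \eqref{eq: 1q formula rel Z} (with the normalization sum naturally ranging over $v\in X\setminus Y$, which is where the $\Lambda_{X,Y}$ block lives; the summation range as written over $v\in X$ appears to just be a mild abuse of notation absorbing contributions from $Y$ into the quadratic part).

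I do not expect any real analytic obstruction: all the ingredients — the Gaussian formula, the expansions of $\DN_{Y,X}$, and the cycle expansion of $\log\det\tilde{K}_{X,Y}$ — have already been established in this section, and the result is essentially a bookkeeping assembly of those pieces. The only slightly subtle step is tracking signs and the $\tfrac12$ prefactors when combining the $(m^2+\val_X)$ diagonal from $\DN_{Y,X}$ with the $-\tfrac12 K_Y = -\tfrac12(m^2+\val_Y)+\tfrac12 A_Y$ correction, so that the diagonal coefficient $\tfrac{m^2}{2}+\val_X(v)-\tfrac12\val_Y(v)$ and the off-diagonal coefficient $\tfrac12 (A_Y)_{uv}$ emerge with the correct signs; this is the only part of the proof worth writing out in detail.
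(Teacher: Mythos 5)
Your proposal is correct and follows exactly the route the paper intends: take $\hbar\log$ of \eqref{Gaussian Z rel to Y}, assemble the quadratic form from \eqref{eq: path sum D to N} together with $-\tfrac12 K_Y = -\tfrac12(m^2+\val_Y)\delta + \tfrac12 A_Y$, and expand $-\tfrac{\hbar}{2}\log\det K_{X,Y}$ via $K_{X,Y}=\Lambda_{X,Y}\tilde K_{X,Y}$ and \eqref{eq: det K rel paths} (the paper gives only a remark in lieu of a proof, with precisely this content). One small correction: your derivation rightly produces $\sum_{v\in X\setminus Y}\log(m^2+\val_X(v))$, and the range $v\in X$ in the stated normalization term is simply an error carried over from the closed-graph theorem --- it cannot be ``absorbed into the quadratic part'' as you suggest, since the missing $Y$-contributions would be $\phi_Y$-independent and of order $\hbar$, whereas the quadratic part is of order $\hbar^0$; trust your computation here rather than the printed summation range.
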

In \eqref{eq: 1q formula rel Z} we are summing over all connected Feynman diagrams with no bulk vertices: boundary-boundary edges in the last term of the second line of the r.h.s. at order $\hbar^0$ (together with the diagonal terms  and $\frac{1}{2}(A_Y)_{uv}$, they sum up to $\DN_{Y,X} - \frac{1}{2}K_Y$) and ``1-loop graphs'' (cycles) on the third line at order $\hbar^1$.  

\subsubsection{Examples}
\begin{example}
Consider the graph $X$ in Figure \ref{fig:2 vertex examples paths}, with $Y$ the subgraph consisting of the single vertex on the right. Then, the set $\Pi_{X,Y}$ consists exclusively of iterates of the path which hesitates once along the single edge at 1, $\tilde\gamma = (1,(12),1)$. Therefore, we obtain 
$$\langle 1 | G_{X,Y} | 1 \rangle = m^{-2}\sum_{k=0}^\infty (-m^{-2})^k = \frac{m^{-2}}{1+m^{-2}} = \frac{1}{1+m^2}.$$ 
Alternatively, we can obtain this from the path sum formula \eqref{eq: path sum rel prop} by noticing there is a single (constant) path from 1 to 1 in $X \setminus Y$.
For the determinant, we obtain 
$$\log\det K_{X,Y}/m^2 = - \sum_{k\geq 1} \frac{(-m^{-2})^k}{k}  = \log (1+m^{-2}) = \log \frac{1+m^2}{m^2}. $$ 
h-paths  in $\Pi''_{X,Y}(2,2)$ are either $(2,(12),2)$ or of the form $(2,(12),1,(12),1,\ldots, 1,(12),2)$ -i.e. jump from 2 to 1, hesitate $k$ times and jump back - and therefore the operator $D'$ is given by 
$$ D'= -m^{-2} + \sum_{k \geq 0} (m^{-2})^{k+1}(-1)^{k} = -m^{-2} + \frac{m^{-2}}{1+m^{-2}} = \frac{-1}{m^2 +1}.$$ 
Alternatively, one can just notice there is a unique path in $P''_{X,Y}(2,2)$, namely (212), and use formula \eqref{eq: path sum D to N}. 
Therefore the Dirchlet-to-Neumann operator is 
$$\DN_{Y,X} = m^2\left(1 - \frac{-1}{m^2+1}\right) =\frac{m^2(2+m^2)}{1+m^2}.$$
Finally, h-paths in $\Pi'_{X,Y}(1,2)$ are only those that hesitate $k$ times at 1 before eventually jumping to 2, and therefore the extension operator is 
$$\langle 1 | E_{Y,X} | 2 \rangle = m^{-2}\sum_{k = 0}^\infty (-m)^{-2} = m^{-2} \frac{1}{1+m^{-2}} = \frac{1}{1+m^2},$$
alternatively, this follows directly from formula \eqref{eq: path sum extension}, because $P'_{X,Y}(1,2) = \{(12)\}$. 
\end{example}
\begin{example}
    Consider $X$ the $N=4$ line graph with $Y$ both endpoints (1 and 4). Then $(X,Y)$ is quasi-regular of degree 2 and we can count paths $X\setminus Y$ easily, namely, we have $$p^k_{X\setminus Y}(2,2) = p^k_{X \setminus Y}(3,3) = \begin{cases}
        1 & k \text{ even} \\
        0 & k \text{ odd}
    \end{cases}$$ and 
    $$ 
    p^k_{X\setminus Y}(2,3) = p^k_{X \setminus Y}(3,2) = \begin{cases}
        0 & k \text{ even} \\
        1 & k \text{ odd}
    \end{cases}$$
\end{example}
Therefore, the relative Green's function is 
$$ G_{X,Y}(2,2) = \frac{1}{m^2+2} \sum_{k=0}^\infty \frac{1}{(m^2+2)^{2k}} = \frac{1}{m^2+2}\cdot\frac{1}{1-\frac{1}{(m^2+2)^2}} = \frac{m^2+2}{(m^2+1)(m^2+3)}$$ 
and 
$$G_{X,Y}(2,3) = \frac{1}{m^2+2}\sum_{k=0}^\infty\frac{1}{(m^2+2)^{2k+1}} = \frac{1}{(m^2+1)(m^2+3)} ,$$
in agreement with \eqref{G for path graph rel both ends}. 

As for the determinant, notice there is a unique cycle of length 2, all other cycles are iterates of this one, therefore, the logarithm of the normalized determinant is given by 
$$\log\det \tilde{K}_{X,Y} = -\sum_{k=1}^\infty \frac{(m^2+2)^{-2k}}{k} = \log (1 - (m^2+2)^{-2}) $$ 
and the determinant is then $$\det K_{X,Y} = (m^2+1)(m^2+3) ,$$ 
in agreement with \eqref{det for path graph rel both ends}. 

For an example of the extension operator, notice that 
$(p')^k_{X, Y}(2,1)$ is 1 for odd $k$ and 0 for even $k$, and therefore 
$$ \langle 2 | E_{Y,X} | 1 \rangle = \sum_{k=0}^\infty (m^2+2)^{-(2k+1)} = \frac{m^2+2}{(m^2+1)(m^2+3)}$$ 
and similarly $(p')^k_{X,Y}(3,1) =1 $ for $k \geq 2$ even and 0 for odd $k$, and therefore 
$$\langle 3 |E_{Y,X} | 1\rangle = 
\sum_{k=0}^\infty (m^2+2)^{-(2k+2)} =  \frac{1}{(m^2+1)(m^2+3)},$$ 
in agreement with \eqref{E for path graph rel both ends}. Finally, we can compute the matrix elements of the Dirichlet-to-Neumann operator: we have $(p'')^k(1,1) = 1$ for even $k \geq 2$ and it vanishes for odd $k$, therefore 
$$\langle 1 | \DN_{Y,X} | 1 \rangle = m^2+1 - \sum_{k=1}^\infty (m^2 +2)^{-2k+1} = m^2 + 1 - \frac{m^2+2}{(m^2+1)(m^2+3)}.
$$
Similarly, $(p'')^k(1,4)$ vanishes for even $k$ and is 1 for odd $k \geq 3$, and therefore
$$ \langle 1 | \DN_{Y,X} | 4 \rangle = -\sum_{k=1}^\infty \frac{1}{(m^2+2)^{2k}} = -\frac{1}{(m^2+1)(m^2+3)}.$$ 
These formulae agree with \eqref{DN for path graph rel both ends}. 
\begin{table}[H]
    \bgroup
\def\arraystretch{1.5}
\begin{adjustbox}{max width=1.1\textwidth,center}
\begin{tabular}{c|c|c}
        Object & h-path sum & path sum \\
        \hline $\langle u | G_{X,Y} | v \rangle$ & $m^{-2} \sum_{\tilde\gamma\in\Pi_{X,Y}(u,v)}s(\tilde{\gamma})$ (Eq. \eqref{eq: G rel hesitant paths})& $\sum_{\gamma \in P_{X\setminus Y}(u,v)}w_{X,Y}(\gamma)$ (Eq. \eqref{eq: path sum rel prop}\\ 
        $\log\det \frac{K_{X,Y}}{m^2} $ & $ - \sum_{[\tilde\gamma] \in \Gamma^{\geq 1}_{X,Y}} \frac{s(\tilde\gamma)}{ {t(\tilde\gamma) }} $ (Eq. \eqref{eq: det K rel hesitant paths})& \\
        $\log\det \tilde{K}_{X,Y}$ &  &  $- \sum_{[\gamma] \in C^{\geq 1}_{X,Y}} \frac{w'_{X,Y}(\gamma)}{t(\gamma)}$ (Eq. \eqref{eq: det K rel paths})\\ 
        $\langle u| E_{Y,X} | v \rangle$ & $\sum_{\tilde\gamma\in\Pi'_{X,Y}(u,v)}s(\tilde{\gamma})$ (Eq. \eqref{eq: h path sum extension operator} & $\sum_{\gamma \in P'_{X,Y}(u,v)} w_{X,Y}(\gamma)$ \eqref{eq: path sum extension}\\
        $\langle u|\DN_{Y,X} | v \rangle$ & $m^2\delta_{uv}- $  & $ (m^2 + \val(v))\delta_{uv}-$ \\
        & $- m^2\sum_{\tilde\gamma \in \Pi''_{X,Y}(u,v)} s(\tilde{\gamma})$ (Eq. \eqref{eq: D to N rel h paths}) &  $ - \sum_{P''_{X,Y}(u,v)}w_{X,Y}(\gamma)$ (Eq. \eqref{eq: path sum D to N})
    \end{tabular}
    \end{adjustbox}
    \egroup
    \caption{Summary of path sum formulae, relative case.}
    \label{tab:my_label}
\end{table}
\subsection{Gluing formulae from path sums}\label{sec: gluing proof path sums}
In this section we prove Theorem \ref{thm: gluing prop and det} from the path sum formulae presented in this chapter. The main observation in this proof is a decomposition of h-paths in $X$ with respect to a subgraph $Y$. 
\begin{lemma}
    Let $u,v \in X$, then we have a bijection 
    \begin{multline}
\Pi_X(u,v) \leftrightarrow   \\
\Pi_{X,Y}(u,v) \bigsqcup \sqcup_{w_1,w_2\in Y} \Pi'_{X,Y}(u,w_1) \times \Pi_X(w_1,w_2) \times \Pi'_{X,Y}(w_2,v) ,      \label{eq: decomposition hesitant paths}
    \end{multline}
    where $\Pi_{X,Y}(u,v)$ denotes h-paths in $X$ that contain no vertices in $Y$ (but they may contain edges between $X\setminus Y$ and $Y$) and $\Pi'_{X,Y}(u,w)$, for either $u$ or $w$ in $Y$, denote h-paths containing exactly one vertex in $Y$, namely the initial or final one.\footnote{It is possible to have $u=w \in Y$, in which case there $\Pi'_X(w,w)$ contains only the 1-element path.}
    \end{lemma}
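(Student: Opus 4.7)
The plan is to construct explicit mutually inverse maps between the two sets, using the first and last visits to $Y$ as canonical splitting points. Given an h-path $\tilde\gamma = (v_0, e_0, v_1, \ldots, e_{k-1}, v_k) \in \Pi_X(u,v)$, let $I(\tilde\gamma) = \{i : v_i \in V_Y\}$. If $I(\tilde\gamma) = \varnothing$, then by definition $\tilde\gamma \in \Pi_{X,Y}(u,v)$, and we send it to the first summand. Otherwise, set $i_1 = \min I(\tilde\gamma)$, $i_2 = \max I(\tilde\gamma)$, $w_1 = v_{i_1}$, $w_2 = v_{i_2}$, and split $\tilde\gamma$ at positions $i_1$ and $i_2$ into three consecutive sub-h-paths $\tilde\gamma_1, \tilde\gamma_2, \tilde\gamma_3$ with $\tilde\gamma = \tilde\gamma_1 * \tilde\gamma_2 * \tilde\gamma_3$.

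By minimality of $i_1$, no vertex of $\tilde\gamma_1$ strictly before its endpoint lies in $V_Y$, so $\tilde\gamma_1 \in \Pi'_{X,Y}(u, w_1)$ (using the convention that this set equals the one-element set $\{(w_1)\}$ when $u = w_1 \in V_Y$, i.e.\ $i_1 = 0$). By maximality of $i_2$, the analogous statement holds for $\tilde\gamma_3 \in \Pi'_{X,Y}(w_2, v)$. The middle piece $\tilde\gamma_2$ is an arbitrary element of $\Pi_X(w_1, w_2)$ (it may visit further vertices of $Y$ in its interior, which is allowed on the right-hand side). This defines the forward map.

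The inverse map is defined by concatenation on the second summand and by the identity on the first summand: a triple $(\tilde\gamma_1, \tilde\gamma_2, \tilde\gamma_3)$ with $\tilde\gamma_1 \in \Pi'_{X,Y}(u,w_1)$, $\tilde\gamma_2 \in \Pi_X(w_1,w_2)$, $\tilde\gamma_3 \in \Pi'_{X,Y}(w_2,v)$ is sent to $\tilde\gamma_1 * \tilde\gamma_2 * \tilde\gamma_3 \in \Pi_X(u,v)$. The endpoints match by construction, so this is a well-defined h-path.

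That the two maps compose to identities is immediate. Starting from a triple and concatenating, the first vertex of $V_Y$ along $\tilde\gamma_1 * \tilde\gamma_2 * \tilde\gamma_3$ is precisely the endpoint of $\tilde\gamma_1$ (since $\tilde\gamma_1$ has no earlier $V_Y$-vertex while its endpoint $w_1$ lies in $V_Y$), and symmetrically the last is the starting point of $\tilde\gamma_3$; re-applying the forward map therefore recovers the original triple. Starting from $\tilde\gamma$ and splitting then concatenating obviously returns $\tilde\gamma$. The only point requiring care is the edge case $u \in V_Y$ or $v \in V_Y$, which is handled by the one-vertex-path convention recorded in the footnote, so there is no real obstacle.
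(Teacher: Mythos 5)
Your proof is correct and follows essentially the same route as the paper's: split off the $Y$-avoiding paths, and otherwise cut at the first and last visits to $Y$, with concatenation as the inverse. You supply somewhat more detail than the paper (explicit verification that the two maps are mutually inverse and the treatment of the edge case $u\in V_Y$ or $v\in V_Y$), but the underlying decomposition is identical.
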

    \begin{proof}
        One may decompose $\Pi_X(u,v)$ into paths containing no vertex in $Y$ and those containing at least one vertex in $Y$. The former are precisely $\Pi_{X,Y}(u,v)$. If $\tilde\gamma$ is an element of the latter, let $w_1$ be the first vertex in  $\tilde\gamma$ in $Y$ and $w_2$ the last vertex in $\tilde\gamma$ in $Y$. Splitting $\tilde\gamma$ at $w_1$ and $w_2$ gives the map from left to right.  The inverse map is given by composition of h-paths. See also Figure \ref{fig:pathdecomp}
    \end{proof}
    \begin{figure}[H]
        \centering
        \includegraphics{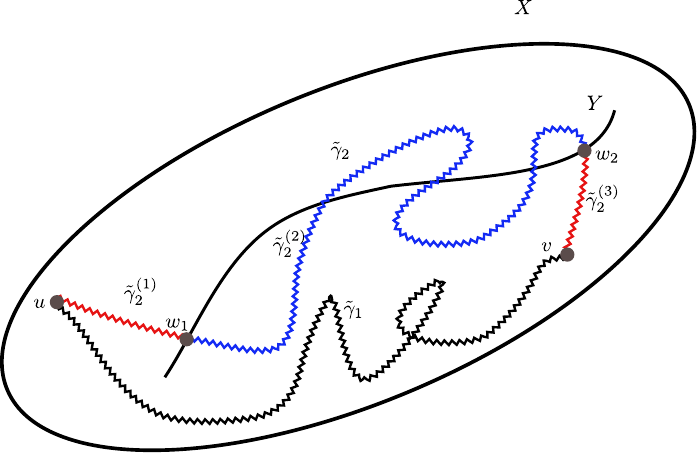}
        \caption{h-paths from $u$ to $v$ fall in two categories. Those not containing vertices in $Y$ (e.g. $\tilde\gamma_1$) are paths in $\Pi_{X,Y}(u,v)$, while those intersecting $Y$ (e.g. $\tilde\gamma_2$) can be decomposed in paths in $\tilde\gamma_2^{(1)}\in\Pi'_{X,Y}(u,w_1)$ and $\tilde\gamma_2^{(3)}\in\Pi'_{X,Y}(w_2,v)$ (red) and a path $\tilde\gamma_2^{(2)} \in\Pi_X(w_1,w_2)$, where $w_1,w_2$ are the first and last vertices in $\tilde\gamma_2$ contained in $Y$.}
        \label{fig:pathdecomp}
    \end{figure}

    For the gluing formula for the determinant, we will also require the following observation on counting of closed paths. 
    \begin{lemma}\label{lem: path decomp closed}
        Denote $\Gamma_{X,Y}^{\geq 1, (k)}$ the set of h-cycles $[\tilde\gamma]$ in $X$ of length $l(\tilde\gamma)\geq 1$ that intersect $Y$ exactly $k$ times, with $k \geq 1$. Then concatenation of paths 
        \begin{equation}
           \bigsqcup_{w_1,w_2,\ldots,w_k \in Y} \Pi_X''(w_1,w_2) \times \Pi_X''(w_2,w_3) \times \ldots \times \Pi_X''(w_k,w_1) \to \Gamma_{X,Y}^{\geq 1, (k)} \label{eq: lemma path decomp closed}
        \end{equation}
        is surjective, and a cycle $[\tilde\gamma]$ has precisely $k/t(\tilde\gamma)$ preimages. 
    \end{lemma}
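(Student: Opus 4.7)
The plan is to identify tuples $(\tilde\gamma_1, \ldots, \tilde\gamma_k)$ in the source with closed h-paths that start at a $Y$-vertex, and then push this forward to cycles by quotienting by cyclic rotation. In particular, the concatenation map factors as (tuples) $\leftrightarrow$ (closed h-paths starting at a $Y$-vertex) $\twoheadrightarrow \Gamma_{X,Y}^{\geq 1,(k)}$, and I would compute the preimage count on the middle set.

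First I would verify that the concatenation $\tilde\gamma = \tilde\gamma_1 * \cdots * \tilde\gamma_k$, with $\tilde\gamma_i \in \Pi_X''(w_i, w_{i+1})$ (indices taken mod $k$, so $w_{k+1} = w_1$), produces a closed h-path whose cycle class lies in $\Gamma_{X,Y}^{\geq 1,(k)}$. By the definition of $\Pi_X''$, each $\tilde\gamma_i$ contains exactly two vertices in $Y$, namely its endpoints; consecutive pieces share one endpoint, so the concatenated closed path has exactly $k$ positions (the junctions $w_1, \ldots, w_k$) whose vertex lies in $Y$. The length is at least $k \geq 1$, so the resulting cycle is non-trivial. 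Conversely, for surjectivity I would start from a cycle $[\tilde\gamma] \in \Gamma_{X,Y}^{\geq 1,(k)}$: since $k \geq 1$, I can pick a representative $\tilde\gamma$ whose first vertex lies in $Y$, and cutting $\tilde\gamma$ at each of its $k$ visits to $Y$ (in order along the path) produces a unique tuple in the source whose concatenation is $\tilde\gamma$.

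The main obstacle — and the step I would approach most carefully — is the fiber count. The map \emph{tuple} $\mapsto$ \emph{closed path starting at a $Y$-vertex} is a bijection by construction, so I need to count the representatives of $[\tilde\gamma]$ that start at a $Y$-vertex. Write $\tilde\gamma$ as $t := t(\tilde\gamma)$ concatenations of a primitive closed path $\tilde\gamma_{\mathrm{prim}}$ of length $l(\tilde\gamma)/t$. The cyclic rotation orbit of $\tilde\gamma$ has size $l(\tilde\gamma)/t$, and because $\tilde\gamma$ is literally the $t$-fold self-concatenation of $\tilde\gamma_{\mathrm{prim}}$, its sequence of vertices is periodic with period $l(\tilde\gamma)/t$. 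Hence the $k$ $Y$-positions of $\tilde\gamma$ distribute as $t$ copies of $k/t$ within the orbit, so exactly $k/t$ rotational representatives begin at a $Y$-vertex.

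This yields the claimed fiber count $k/t(\tilde\gamma)$. The key subtlety, which is really the only non-formal step in the argument, is the confirmation that $k$ is divisible by $t$ and that the $Y$-visits are evenly distributed among the $t$ traversals; both follow from the $t$-periodicity of the word $\tilde\gamma$, but must be invoked explicitly. Everything else reduces to unpacking the definitions of $\Pi_X''$, of a cycle as an orbit under cyclic rotation, and of the traversal number $t(\tilde\gamma)$ as the order of the stabilizer.
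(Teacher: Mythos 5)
Your proof is correct and follows essentially the same route as the paper: surjectivity by cutting a cycle at its $k$ visits to $Y$, and the fiber count $k/t(\tilde\gamma)$ by identifying preimages with cyclic shifts (equivalently, with rotational representatives starting at a $Y$-vertex). Your explicit verification that the $l(\tilde\gamma)/t$-periodicity of $\tilde\gamma$ forces $t \mid k$ and distributes the $Y$-visits evenly across the $t$ traversals is a detail the paper leaves implicit, and is a worthwhile addition.
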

    \begin{proof}
        For a cycle $\tilde\gamma \in \Gamma_{X,Y}^{\geq 1, (k)}$, denote $w_1, \ldots w_k$ the intersection points with $Y$ and $\tilde\gamma^{(i)}$ the segment of $\tilde\gamma$ between $w_{i+1}$ and $w_i$ (here we set $w_{k+1} = w_1$). See Figure \ref{fig:pathdecompclosed}. Then obviously $\tilde\gamma$ is the concatenation of the $\tilde\gamma^{(i)}$, so concatenation is surjective. On the other hand a $k$-tuple of paths concatenates to the same closed path if and only if they are related to each other by a cyclic shift (this corresponds to a cyclic shift of the labeling of the intersection points). They are precisely $k/t(\tilde\gamma)$ such shifts.
    \end{proof}
    Recall that  $D'$ is the operator given by summing the weight $s(\tilde\gamma)=(m^{-2})^{l(\tilde\gamma)}(-1)^{h(\tilde\gamma)}$ over paths starting and ending on $Y$ without intersecting $Y$ in between (Eq. \eqref{eq: D prime hesitant paths }.)
    \begin{corollary}\label{cor: trace D prime k}
        We have that 
        \begin{equation}
            \mathrm{tr}(D')^k = k \sum_{\tilde\gamma\in 
            \Gamma_{X,Y}^{\geq 1,(k)}
            } 
            \frac{(m^{-2})^{l(\tilde\gamma)}(-1)^{h(\tilde\gamma)}}{t(\tilde\gamma)} .
        \end{equation}
    \end{corollary}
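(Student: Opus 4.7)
My plan is to compute the trace by expanding it as a sum of products of matrix elements of $D'$, convert each matrix element to its defining h-path sum, use multiplicativity of the weight $s(\tilde\gamma) = (m^{-2})^{l(\tilde\gamma)}(-1)^{h(\tilde\gamma)}$ to combine the $k$ factors into a single weight on the concatenated path, and then repackage the resulting sum using Lemma \ref{lem: path decomp closed}.

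In more detail, the first step is to write
\begin{equation*}
\mathrm{tr}(D')^k = \sum_{w_1,\ldots,w_k\in Y} \langle w_1|D'|w_2\rangle \langle w_2|D'|w_3\rangle \cdots \langle w_k|D'|w_1\rangle.
\end{equation*}
Inserting the path-sum formula \eqref{eq: D prime hesitant paths } for $D'$, each factor becomes a sum over $\Pi''_{X,Y}(w_i,w_{i+1})$ (indices mod $k$) weighted by $s$. Exchanging sums and using that $s$ is multiplicative under concatenation (it factors as $(m^{-2})^{l}$ times $(-1)^h$, both of which add under $*$), the product of weights equals $s(\tilde\gamma_1*\cdots*\tilde\gamma_k)$. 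Thus
\begin{equation*}
\mathrm{tr}(D')^k = \sum_{w_1,\ldots,w_k\in Y}\ \sum_{(\tilde\gamma_1,\ldots,\tilde\gamma_k)} s(\tilde\gamma_1 *\cdots*\tilde\gamma_k),
\end{equation*}
where the inner sum ranges over $\Pi''_{X,Y}(w_1,w_2)\times\cdots\times \Pi''_{X,Y}(w_k,w_1)$.

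The second step is to observe that $s$ descends to a well-defined function on cycles, since $l(\tilde\gamma)$ and $h(\tilde\gamma)$ are manifestly invariant under cyclic shifts of the sequence of vertices and edges. Therefore the sum above is naturally indexed by pairs (a cycle in $\Gamma^{\geq 1,(k)}_{X,Y}$, plus a choice of $k$-tuple in the preimage of the concatenation map \eqref{eq: lemma path decomp closed}). By Lemma \ref{lem: path decomp closed}, each cycle $[\tilde\gamma]$ contributes exactly $k/t(\tilde\gamma)$ preimages, all carrying the same weight $s(\tilde\gamma)$. Pulling the constant $k$ out yields
\begin{equation*}
\mathrm{tr}(D')^k = \sum_{[\tilde\gamma]\in \Gamma^{\geq 1,(k)}_{X,Y}} \frac{k}{t(\tilde\gamma)}\, s(\tilde\gamma) = k \sum_{[\tilde\gamma]\in \Gamma^{\geq 1,(k)}_{X,Y}} \frac{(m^{-2})^{l(\tilde\gamma)}(-1)^{h(\tilde\gamma)}}{t(\tilde\gamma)},
\end{equation*}
which is the claim.

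There is essentially no obstacle here: both steps are bookkeeping once Lemma \ref{lem: path decomp closed} is in hand. The only point that warrants explicit mention is the well-definedness of $s$ on cyclic equivalence classes and the fact that the concatenation $\tilde\gamma_1 * \cdots * \tilde\gamma_k$ of h-paths in $\Pi''_{X,Y}$ ending and beginning on $Y$-vertices is again a valid h-path whose length and hesitation count are additive; both are immediate from the definitions. One minor sanity check is that $l(\tilde\gamma)\geq 1$ is automatic since each $\tilde\gamma_i \in \Pi''_{X,Y}(w_i,w_{i+1})$ has length $\geq 1$ (a path starting and ending on $Y$ with no $Y$-vertices in between must either hesitate at its endpoint or take at least one jump, hence $l\geq 1$), so the concatenation lands in $\Gamma^{\geq 1,(k)}_{X,Y}$.
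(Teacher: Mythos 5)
Your proof is correct and follows essentially the same route as the paper: expand the trace into products of matrix elements of $D'$, use multiplicativity of $s$ under concatenation, and invoke Lemma \ref{lem: path decomp closed} to account for the $k/t(\tilde\gamma)$ preimages of each cycle. The paper's own proof is just a terser statement of exactly this argument, so no further comparison is needed.
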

    \begin{proof} 
    The statement follows by summing the weight $s(\tilde\gamma)$ over the l.h.s. and  r.h.s. of \eqref{eq: lemma path decomp closed} in Lemma \ref{lem: path decomp closed}, using multiplicativity of $s(\tilde\gamma)$ in the l.h.s. and with multiplicity $k/t(\tilde\gamma)$ in the r.h.s. (corresponding to the count of preimages of the map (\ref{eq: lemma path decomp closed})).
    \end{proof}
    \begin{figure}[H]
        \centering
        \includegraphics{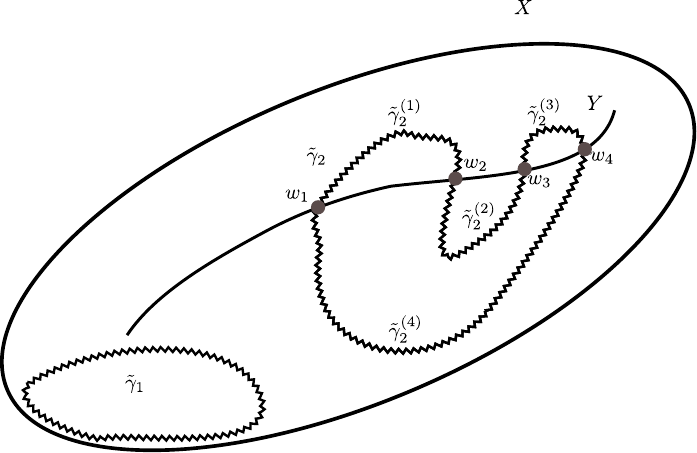}
        \caption{Cycles in $X$ either do not intersect $Y$ (like $\tilde\gamma_1$) and such that intersect $Y$ $k$ times (in the case of $\tilde\gamma_2$, $k=4$. Such paths can be decomposed into $k$ h-paths $\tilde\gamma^{(i)}$ in $\Pi''_{X,Y}(w_i,w_{i+1})$ in $k$ different ways, corresponding to cyclic shift of the labels of $w_i$'s.}
        \label{fig:pathdecompclosed}
    \end{figure}
\begin{proof}[h-path sum proof of Theorem \ref{thm: gluing prop and det}]
    We first prove the gluing formula 
    $$\langle u | G_X| v\rangle = \langle u | G_{X,Y} | v \rangle + \sum_{v_1,v_2 \in Y} \langle u | E_{Y,X} | v_1\rangle  \langle v_1 | \DN_{Y,X}^{-1} | v_2\rangle   \langle v_2 | E_{Y,X} | v\rangle.$$ 
    Applying the decomposition of $\Pi_X(u,v)$  \eqref{eq: decomposition hesitant paths}, and using multiplicativity of the weight $s(\tilde\gamma) = (m^{-2})^{l(\tilde\gamma)}(-1)^{h(\tilde\gamma)}$, we get 

\begin{multline}
G_X(u,v) = m^{-2}\sum_{\tilde\gamma \in \Pi_{X,Y}(u,v)}s(\tilde\gamma) + \\
\sum_{w_1,w_2\in Y}\left(\sum_{\tilde\gamma_1\in\Pi'_{X,Y}(u,w_1)}s(\tilde\gamma)\right)\left(m^{-2}\sum_{\tilde\gamma_2\in\Pi_{X}(w_1,w_2)}s(\tilde\gamma)\right)\left(\sum_{\tilde\gamma_1\in\Pi'_{X,Y}(w_2,v)}s(\tilde\gamma)\right).
\end{multline}
The first term is $G_{X,Y}$ by equation \eqref{eq: G rel hesitant paths}. In the second term, we recognize the path sum expressions \eqref{eq: h path sum extension operator} for the extension operator and \eqref{eq: D path sums} for the operator $D$, which is the inverse of the total Dirichlet-to-Neumann operator. This completes the proof of the gluing formula for the propagator. 

Next, we prove the gluing formula for the determinant 
$$\det(K_X) = \det(K_{X,Y})\det(\DN_{Y,X}). $$
Dividing both sides by $m^{2N}$, where $N$ is the number of vertices in $X$, this is equivalent to 
$$ \det(m^{-2}K_X) = \det(m^{-2}K_{X,Y})\det(m^{-2}\DN_{Y,X}). $$
Taking negative logarithms and using $ \log \det = \mathrm{tr} \log$, this is equivalent to 
\begin{equation} -\log\det (1 + m^{-2}\Delta_X) =  -\log\det (1 + m^{-2}\Delta_{X,Y})  - \mathrm{tr} \log (I - D'),\label{eq: proof path sums det}
\end{equation}
where we have used that $\DN_{Y,X} = m^2(I - D').$

We claim that equation \eqref{eq: proof path sums det} can be proven by summing over paths. Indeed, the left hand side is given by summing over closed h-paths in $X$. We decompose them into paths which do not intersect $Y$, and those that do.
From the former we obtain $-\log \det K_{X,Y}/m^2$ by equation \eqref{eq: det K rel hesitant paths}. Decompose the latter set into paths  that intersect $Y$ exactly $k$ times, previously denoted $C_{X,Y}^{\geq 1, (k)}$. By Corollary \ref{lem: path decomp closed}, when summing over those paths we obtain precisely $\mathrm{tr} (D')^k/k$.
Summing over $k$ we obtain $\mathrm{tr} \sum_{k\geq 1} (D')^k/k = -\mathrm{tr}\log (I-D')$, which proves the gluing formula for the determinant. 
\end{proof}
\begin{proof}[Path sum proof of Theorem \ref{thm: gluing prop and det}]
\label{rem: gluing path sums}
    In the proof above we used the h-path expansions, but of course one could have equally well used the formulae in terms of paths. To prove the gluing formula for the Green's function
    $$\langle u | G_X| v\rangle = \langle u | G_{X,Y} | v \rangle + \sum_{v_1,v_2 \in Y} \langle u | E_{Y,X} | v_1\rangle  \langle v_1 | \DN_{Y,X}^{-1} | v_2\rangle   \langle v_2 | E_{Y,X} | v\rangle$$ 
    in terms of path counts, notice that a path crossing $Y$ can again be decomposed into a path from $X$ to $Y$, then a path from $Y$ to $Y$ and another path from $Y$ to $X$. The weight $w(\gamma) = \prod_{v \in V(\gamma)}(m^2+\val(v))^{-1}$ is distributed by among those three paths by taking the vertices on $Y$ to the $Y-Y$ path: In this way, when summing over all paths from the $Y-Y$ paths we obtain precisely the operator $D = \DN_{Y,X}^{-1}$ (this is a submatrix of $G_X$ and hence the weights of paths \emph{include} start and end vertices) while from the other parts we obtain the extension operator $E_{Y,X}$ (where weights of paths \emph{do not} include the vertex on $Y$). 
    
Next, we consider the gluing formula for the determinant, 
$$ \det K_X = \det K_{X,Y}\det \DN_{Y,X}.$$ 
Dividing both sides by 
\begin{multline*}
\det \Lambda_X = \prod_{v \in X} (m^2 + \val(v))= \prod_{v \in X \setminus Y}(m^2 + \val_X(v))\prod_{v\in Y}(m^2 + \val_X(v)) \\ = \det \Lambda_{X,Y} \det \Lambda_Y,
\end{multline*}
this is equivalent to 
$$ \det \tilde{K}_X = \det \tilde{K}_{X,Y} \det \Lambda_Y^{-1} \DN_{Y,X}.$$ 
Taking logarithms and using the formulae \eqref{eq: det regular paths} and \eqref{eq: det K rel paths} for logarithms of determinants of kinetic operators, we get 
\begin{equation}
\label{eq: proof det gluing paths 1}
\log \det \tilde{K}_X - \log \det \tilde{K}_{X,Y} = - \sum_{[\gamma] \in C^{\geq 1}_{X}, V(\gamma) \cap Y \neq \varnothing} \frac{w'_{X}(\gamma)}{t(\gamma)},
\end{equation}
where on the right hand side we are summing over cycles in $X$ that intersect $Y$. We therefore want to show that the sum on the r.h.s of \eqref{eq: proof det gluing paths 1} equals $\log\det \Lambda^{-1}_Y \DN_{Y,X}$. 
From \eqref{eq: path sum D to N}, we have that $\DN_{Y,X} = \Lambda_Y - D''$, where we introduced the auxiliary operator $D'' \colon C^0(Y) \to C^0(Y)$ with matrix elements 
$$\langle u | D'' | v \rangle = \sum_{\gamma \in P''_{X}(u,v)} w_{X,Y}(\gamma).$$ 
Then 
\begin{equation}
\label{eq: proof det gluing paths 2}-\log\det \Lambda^{-1}_Y \DN_{Y,X} = -\operatorname{tr} \log (I - \Lambda^{-1}_YD'') = \sum_{k \geq 1} \operatorname{tr} \frac{(\Lambda^{-1}_YD'')^k}{k}.
\end{equation}
Notice that $\operatorname{tr}(\Lambda^{-1}_YD'')^k$ is given by summing over closed paths $\gamma$ that intersect $Y$ exactly $k$ times, with the weight $w'(\gamma)$: the factor $(m^2 + \val(v))^{-1}$ in $w'(\gamma)$ for vertices not on $Y$ comes from $D''$ (recall that $w_{X,Y}(\gamma)$ does not contain factors for vertices on $Y$), and from $\Lambda^{-1}_Y$, for $v \in Y$. By a combinatorial argument analogous to Lemma \ref{lem: path decomp closed}, every cycle appears in this way exactly $\frac{k}{t(\gamma)}$ times. Therefore the sum on the r.h.s. of \eqref{eq: proof det gluing paths 1} equals the sum on the r.h.s of equation \eqref{eq: proof det gluing paths 2}, which finishes the proof. 
\end{proof}


\section{Interacting theory: first quantization formalism} 
\label{s: Interacting theory: first quantization formalism}
In this section, we extend the path sum formulae to the interacting theory. In this language, weights of Feynman graphs are given by summing over all possible maps from a Feynman graph to a spacetime graph where edges are mapped to paths. We also analyze the gluing formula in terms of path sums. 
\subsection{Closed graphs}
We first consider the case of closed graphs.
\subsubsection{Edge-to-path maps}
Let $\Gamma$ and $X$ be graphs. Recall that by $P_X$ we denote the set of all paths in $X$, and by $\Pi_X$ the set of h-paths in $X$. 
\begin{definition}
    An \emph{edge-to-path map} $F = (F_V,F_P)$ from $\Gamma$ to $X$ is a pair of maps 
    $F_V \colon V_\Gamma \to V_X$ and $F_P\colon E(\Gamma) \to P_X$
    such that for every edge $e = (\s{u},\s{v})$ in $\Gamma$ we have 
    $$ F_P(e) \in P_X(F_V(\s{u}),F_V(\s{v})).$$ 
    The set of edge-to-path maps is denoted $P_X^\Gamma$.
    \end{definition}
    Equivalently, an edge-to-path map is a lift of a map $F_V \colon V_\Gamma \to V_X$ to the fibrations 
    $E_\Gamma \to V_\Gamma \times V_\Gamma$ and $P_X \to V_X \times V_X$. 
    \[
\begin{tikzcd}
E_\Gamma \arrow[r, "F_P"] \arrow[d]                 & P_X \arrow[d]  \\
V_\Gamma \times V_\Gamma \arrow[r, "F_V\times F_V"] & V_X \times V_X
\end{tikzcd}
    \]
    Similarly, we define an \emph{edge-to-h-path map} as a lift of a map $F_V \colon V_\Gamma \to V_X$ to the fibrations $E_\Gamma \to V_\Gamma \times V_\Gamma$, $\Pi_X \to V_X \times V_X$. 
    \[ 
\begin{tikzcd}
E_\Gamma \arrow[r, "F_\Pi"] \arrow[d]                 & \Pi_X \arrow[d]  \\
V_\Gamma \times V_\Gamma \arrow[r, "F_V\times F_V"] & V_X \times V_X
\end{tikzcd}
    \] 
    The set of such maps is denoted $\Pi_X^\Gamma$.
    Alternatively, an edge-to-path map can be thought of as labeling of $\Gamma$ where we label vertices in $\Gamma$ by vertices of $X$ and edges in $\Gamma$ by \emph{paths} in $X$. 
\subsubsection{Feynman weights}
Suppose that $\Gamma$ is a Feynman graph appearing in the perturbative partition function on a closed graph, with weight given by \eqref{Feynman weight, X closed}. By the results of the previous section, we have the following first quantization formula, a combinatorial analog of the first quantization formula \eqref{Feynman weight 1q formalism}: 
\begin{proposition}
    The weight of the Feynman graph $\Gamma$ has the path sum expression
    \begin{align}
    \Phi_{\Gamma,X} &= \prod_{\s{v} \in V_\Gamma}(-p_{\val(\s{v})})\sum_{F \in P_X^\Gamma}\prod_{e\in E_\Gamma} w(F_P(e)) \label{eq: Feynman weight closed paths}\\
    &=  \prod_{\s{v} \in V_\Gamma}(-p_{\val(\s{v})})\sum_{F \in \Pi_X^\Gamma}\prod_{e\in E_\Gamma}s(F_\Pi(e)), \label{eq: Feynman weight closed hesitant paths}
    \end{align}
    where in \eqref{eq: Feynman weight closed paths} we are summing over all edge-to-path maps from $\Gamma$ to $X$, and in \eqref{eq: Feynman weight closed hesitant paths} we are summing over all edge-to-h-path maps.
\end{proposition}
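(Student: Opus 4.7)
The proposition is essentially a bookkeeping exercise: substitute the path sum formulae for the propagator into the definition \eqref{Feynman weight, X closed} of $\Phi_{\Gamma,X}$ and recognize the resulting combinatorial sum as a sum over edge-to-path maps. Below I sketch how to organize the argument.

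The plan is to start from the defining expression
\begin{equation*}
\Phi_{\Gamma,X}= \sum_{f\colon V_\Gamma\to V_X} \prod_{\s{v}\in V_\Gamma}(-p_{\mr{val}(\s{v})})\prod_{(\s{u},\s{v})\in E_\Gamma} G_X(f(\s{u}),f(\s{v})),
\end{equation*}
fix a vertex assignment $f$, and plug in the path sum formula \eqref{eq: greens corollary} for each edge propagator, obtaining
\begin{equation*}
\prod_{(\s{u},\s{v})\in E_\Gamma} G_X(f(\s{u}),f(\s{v})) = \prod_{e=(\s{u},\s{v})\in E_\Gamma}\sum_{\gamma_e\in P_X(f(\s{u}),f(\s{v}))} w(\gamma_e).
\end{equation*}
Absolute convergence of each path sum (for $m^2>0$) allows interchanging the product and the sums, giving
\begin{equation*}
\sum_{\{\gamma_e\}_{e\in E_\Gamma}} \prod_{e\in E_\Gamma} w(\gamma_e),
\end{equation*}
where $\gamma_e\in P_X(f(\s{u}_e),f(\s{v}_e))$ for each edge.

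Next I would observe that the combined datum of a vertex map $f\colon V_\Gamma\to V_X$ together with, for each edge $e=(\s{u},\s{v})\in E_\Gamma$, a path $\gamma_e\in P_X(f(\s{u}),f(\s{v}))$ is precisely what is encoded by an edge-to-path map $F=(F_V,F_P)\in P_X^\Gamma$, with $F_V=f$ and $F_P(e)=\gamma_e$. In other words, the fibration
\[
P_X^\Gamma \twoheadrightarrow \mathrm{Map}(V_\Gamma, V_X),\qquad F\mapsto F_V
\]
has fiber over $f$ canonically isomorphic to $\prod_{e\in E_\Gamma}P_X(f(\s{u}_e),f(\s{v}_e))$. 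Swapping the double sum over $f$ and over $\{\gamma_e\}$ for a single sum over $F\in P_X^\Gamma$ yields \eqref{eq: Feynman weight closed paths}.

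The h-path version \eqref{eq: Feynman weight closed hesitant paths} follows by the identical manipulation, starting instead from the h-path sum formula \eqref{eq: hesitant path sum G} and noting that the $m^{-2}$ prefactor per edge matches the normalization implicit in the weight $s$ used there (equivalently: one resums fibers of $P\colon\Pi_X\twoheadrightarrow P_X$ to pass between the two identities). I do not anticipate any real obstacle here; the only delicate point is to be explicit that the interchange of summations is justified by absolute convergence of the path/h-path sums for $m^2$ in the appropriate range (and by the finiteness of $\Gamma$), and that the bijection between vertex-assignments-plus-edge-paths and edge-to-path maps is immediate from the definition of $P_X^\Gamma$ (resp.\ $\Pi_X^\Gamma$) as the pullback of the path fibration along $F_V\times F_V$.
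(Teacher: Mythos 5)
Your derivation of \eqref{eq: Feynman weight closed paths} is exactly the argument the paper intends (the paper presents the proposition as an immediate consequence of ``the results of the previous section,'' i.e.\ of the path sum formula \eqref{eq: greens corollary}): substitute the path sum for each edge propagator into \eqref{Feynman weight, X closed}, distribute the finite product over the (absolutely convergent) sums, and observe that a vertex map $f$ together with a choice of $\gamma_e\in P_X(f(\s{u}_e),f(\s{v}_e))$ for each edge is precisely an element of $P_X^\Gamma$. That part is correct and complete.

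The one place you need to be more careful is the h-path line. With the paper's own normalization, \eqref{eq: hesitant path sum G} reads $G_X(u,v)=m^{-2}\sum_{\tilde\gamma\in\Pi_X(u,v)}s(\tilde\gamma)$ with $s(\tilde\gamma)=(m^{-2})^{l(\tilde\gamma)}(-1)^{h(\tilde\gamma)}$, so the overall $m^{-2}$ is \emph{not} absorbed into $s$. Running your substitution therefore produces a prefactor $(m^{2})^{-|E_\Gamma|}$ in front of the sum over $\Pi_X^\Gamma$; your assertion that this prefactor ``matches the normalization implicit in the weight $s$'' is not correct, and the two displayed identities \eqref{eq: Feynman weight closed paths} and \eqref{eq: Feynman weight closed hesitant paths} as printed differ by exactly this factor. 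Compare the paper's own relative version \eqref{eq: Feynman weight relative h paths}, which carries the prefactor $(m^2)^{-\#E^{\mathrm{bulk-bulk}}}$ for precisely this reason; in the closed case every edge is bulk--bulk. So either \eqref{eq: Feynman weight closed hesitant paths} should be stated with the prefactor $(m^2)^{-|E_\Gamma|}$, or the per-edge weight must be redefined to include the extra $m^{-2}$. The alternative route you mention --- resumming the fibers of $P\colon\Pi_X\twoheadrightarrow P_X$ --- makes the same factor visible rather than removing it, since that resummation converts $m^{-2}\sum_{\tilde\gamma}s(\tilde\gamma)$, not $\sum_{\tilde\gamma}s(\tilde\gamma)$, into $\sum_\gamma w(\gamma)$ edge by edge.
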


Figure \ref{fig: edge to path} contains an example of an edge-to-path map from $\Gamma$ the 
$\Theta$-graph to a grid $X$. 
    \begin{figure}[H]
        \centering
\scalebox{0.8}{
\begin{tikzpicture}
\begin{scope}[shift={(-6,3)}]
  \draw (0,0) -- (2,0); 
  \draw (0,0) to[bend left = 60] (2,0);
  \draw (0,0) to[bend right = 60] (2,0);
\node at (-0.25,0) {$\s{u}$};
\node at (2.25,0) {$\s{v}$};
   \draw[fill = black] (0,0) circle (2pt);
    \draw[fill = black] (2,0) circle (2pt);
\draw[-latex] (3,0) to[bend left = 30] (5,0);
\node at (4,0.6) {$F$};
\node at (1,-2) {$\Gamma$}; 
\end{scope}
\foreach \x in {0,1,...,7} 
{
\draw (\x,-0.5) -- (\x,6.5);
}
\foreach \y in {0,1,...,6} 
{\draw (-.5,\y) -- (7.5,\y);}
\draw[red, fill = red] (1,4) circle (3pt);
\draw[red, fill = red] (5,2) circle (3pt); 
\draw[path] (1,4) -- (1,5) -- (2,5) -- (3,5) -- (3,6) -- (4,6) -- (5,6) -- (6,6) -- (6,2) -- (5,2);
\draw[path] (1,4) -- (4,4) -- (4,2) -- (5,2);
\draw[path] (1,4) -- (1,1) -- (2,1) -- (2,2) -- (3,2) -- (3,1) -- (4,1) -- (4,0) -- (5,0) -- (5,2);
\node at (0.5,4.2) {$F_V(\s{u})$}; 
\node at (5.5,1.7) {$F_V(\s{v})$};
\node at (3,-1) {$X$};
\end{tikzpicture} 
}
\caption{An example of an edge-to-path map.}
        \label{fig: edge to path}
    \end{figure}
  We then have the following expression of the perturbative partition function: 
    \begin{corollary}
        The perturbative partition function of $X$ is given in terms of edge-to-paths maps as
        \begin{equation}
            Z^\mr{pert}_X = \det (K_X)^{-\frac12}\sum_\Gamma\frac{\hbar^{-\chi(\Gamma)}}{\mr{Aut}(\Gamma)}\sum_{F \in P_X^\Gamma}\prod_{\s{v}\in V_\Gamma}(-p_{\val(\s{v})}) \prod_{e \in E_\Gamma}w(F_P(e)).
        \end{equation}
    \end{corollary}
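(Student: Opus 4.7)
The plan is to derive the corollary as an immediate consequence of the preceding proposition combined with the general formula \eqref{Z perturbative expansion} for the perturbative partition function. The work, in fact, is already done in the proposition; the corollary is a substitution.

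First, I would recall that by \eqref{Z perturbative expansion} we have the expansion
\begin{equation*}
Z^{\mr{pert}}_X = \det(K_X)^{-\frac12}\cdot \sum_\Gamma \frac{\hbar^{-\chi(\Gamma)}}{|\mr{Aut}(\Gamma)|}\, \Phi_{\Gamma,X},
\end{equation*}
where the Feynman weight $\Phi_{\Gamma,X}$ is given by \eqref{Feynman weight, X closed}. The path sum formula \eqref{eq: Feynman weight closed paths} from the proposition expresses $\Phi_{\Gamma,X}$ as a sum over edge-to-path maps:
\begin{equation*}
\Phi_{\Gamma,X} = \prod_{\s{v}\in V_\Gamma}(-p_{\val(\s{v})})\sum_{F\in P_X^\Gamma}\prod_{e\in E_\Gamma} w(F_P(e)).
\end{equation*}
Substituting this directly into the expansion of $Z^{\mr{pert}}_X$ yields the claimed formula.

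The only point that warrants a brief comment is the derivation of the proposition itself, which the corollary presupposes. The argument is a formal manipulation: starting from \eqref{Feynman weight, X closed}, substitute the path sum formula \eqref{eq: greens corollary} for each propagator, $G_X(f(\s{u}),f(\s{v})) = \sum_{\gamma_e\in P_X(f(\s{u}),f(\s{v}))} w(\gamma_e)$, and then interchange the product over edges with the sums over paths using distributivity. The resulting data (a vertex map $f\colon V_\Gamma\to V_X$ together with, for each edge $e=(\s{u},\s{v})$, a path $\gamma_e\in P_X(f(\s{u}),f(\s{v}))$) is precisely the data of an edge-to-path map $F=(F_V,F_P)\in P_X^\Gamma$ with $F_V=f$ and $F_P(e)=\gamma_e$. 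This gives \eqref{eq: Feynman weight closed paths}; the h-path version \eqref{eq: Feynman weight closed hesitant paths} follows the same way using \eqref{eq: hesitant path sum G} in place of \eqref{eq: greens corollary}.

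There is no real obstacle here—the only thing to be careful about is notational, namely that in the definition of an edge-to-path map the vertex map $F_V$ is independent of the path choices $F_P(e)$, so the double sum over $f$ and over paths respecting the endpoints $f(\s{u}),f(\s{v})$ really is a single unconstrained sum over $F\in P_X^\Gamma$. Once this repackaging is made, the corollary is immediate from \eqref{Z perturbative expansion}.
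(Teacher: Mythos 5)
Your proposal is correct and matches the paper's (implicit) argument exactly: the paper states the corollary as an immediate substitution of the proposition's path-sum expression for $\Phi_{\Gamma,X}$ into the expansion \eqref{Z perturbative expansion}, and the proposition itself is obtained, just as you describe, by inserting \eqref{eq: greens corollary} into \eqref{Feynman weight, X closed} and repackaging the pair (vertex map, choice of path per edge) as an edge-to-path map. Your remark that the vertex factors $(-p_{\val(\s{v})})$ depend only on valences in $\Gamma$ and hence commute with the sum over $F$ is the right point of care; nothing is missing.
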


We can reformulate this as the following ``first quantization formula.''
\begin{corollary}
    The logarithm of the perturbative partition function has the expression 
    \begin{multline}
       \label{eq: Z pert 1st quant}
        \log\, Z^\mr{pert}_X = \\
        =\frac12\sum_{[\gamma]\in C^{\geq 1}_X}\frac{w'(\gamma)}{t(\gamma)} +\sum_{\Gamma^\mr{conn}} \frac{\hbar^{-\chi(\Gamma)}}{\mr{Aut}(\Gamma)} \sum_{F \in P^\Gamma_X}\prod_{\s{v}\in V_\Gamma}(-p_{\val(\s{v})})\prod_{e \in E_\Gamma}w(F_P(e)) \\
       -\frac12\sum_{v\in X}\log (m^2 + \val(v)).
    \end{multline}
    Here $\Gamma^\mr{conn}$ stands for connected Feynman graphs.
\end{corollary}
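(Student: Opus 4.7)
The plan is to combine two facts already established in the paper: the previous corollary expressing $Z^\mr{pert}_X$ as an edge-to-path sum multiplied by $(\det K_X)^{-1/2}$, and the first quantization formula for the Gaussian partition function (the theorem stated just before Table~\ref{tab: closed path formulas}), which unfolds $-\tfrac12\log\det K_X$ into a sum over cycles plus a normalization term. After taking the logarithm of the expression in the previous corollary, these two ingredients produce the three summands on the right-hand side, provided we justify that $\log$ of the sum over all Feynman graphs equals the sum over connected ones.

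First I would write
\[
\log Z^\mr{pert}_X \;=\; -\tfrac12 \log\det K_X \;+\; \log \sum_\Gamma \frac{\hbar^{-\chi(\Gamma)}}{|\mr{Aut}(\Gamma)|}\, \Phi_{\Gamma,X},
\]
with $\Phi_{\Gamma,X}$ rewritten as the edge-to-path sum of \eqref{eq: Feynman weight closed paths}. For the first term, I would directly invoke the first quantization formula for the Gaussian partition function together with $\log Z_X = -\tfrac12\log\det K_X$, producing precisely
\[
\tfrac12 \sum_{[\gamma]\in C^{\geq 1}_X} \frac{w'(\gamma)}{t(\gamma)} \;-\; \tfrac12 \sum_{v\in X} \log(m^2+\val(v)).
\]

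For the second term, I would apply the standard exponentiation-of-connected-diagrams argument (the first item of Remark~\ref{rem: rems on Z pert rel}): since Euler characteristic is additive under disjoint union, automorphism groups of a disjoint union factor as a wreath product over isomorphism classes of connected components, and the edge-to-path weight $\prod_{\s{v}}(-p_{\val(\s{v})})\prod_e w(F_P(e))$ is manifestly multiplicative under disjoint union of Feynman graphs (an edge-to-path map on $\Gamma_1\sqcup\Gamma_2$ is just a pair of such maps), the total sum over $\Gamma$ equals the exponential of the sum over connected $\Gamma^\mr{conn}$. Taking $\log$ then gives exactly the middle summand on the right-hand side of \eqref{eq: Z pert 1st quant}.

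The only step that requires any care is the exponentiation: one must check that the weights assembled from edge-to-path data still satisfy the multiplicativity needed for the standard combinatorial identity, and that the factor $\hbar^{-\chi(\Gamma)}$ is compatible with the splitting. Both of these are routine (additivity of $\chi$, locality of the edge-to-path weight across connected components, and the standard symmetry-factor bookkeeping), so I do not anticipate a genuine obstacle — this is essentially a rewriting of the Gaussian first quantization theorem combined with the classical linked-cluster theorem.
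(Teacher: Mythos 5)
Your proposal is correct and follows the same route the paper intends: combine $\log Z^\mr{pert}_X = -\tfrac12\log\det K_X + \log\sum_\Gamma(\cdots)$ with the Gaussian first-quantization theorem for the determinant term and the standard exponentiation of connected diagrams (item (i) of Remark \ref{rem: rems on Z pert rel}) for the interaction term. The multiplicativity of the edge-to-path weight under disjoint union and the additivity of $\chi$ are exactly the points the paper relies on, so nothing is missing.
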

We remark that in the second line of \eqref{eq: Z pert 1st quant}, one can interpret the first term as coming from an analog of edge-to-path maps for the circle, divided by automorphisms of such maps (the factor of 2 comes from orientation reversal). In this sense, the second line can be interpreted as the partition function of a 1d sigma model with target $X$. The term in the third line should be interpreted as a normalizing constant.
\subsection{Relative version}
Now we let $X$ be a graph and $Y$ a subgraph, and consider the interacting theory on $X$ relative to $Y$. 
Recall that in the relative case, Feynman graphs $\Gamma$ have vertices split into bulk and boundary vertices, with Feynman weight given by \eqref{Feynman weight, relative}. Bulk vertices have valence at least 3, while boundary vertices are univalent. 
Again, we do not want to allow boundary-boundary edges. Edge-to-path maps now additionally have to respect the type of edge: bulk-bulk edges are mapped to paths in $P_{X\setminus Y}$ and bulk-boundary edges are mapped to paths in $P'_{X,Y}$. 
We collect this in following technical definition: 
\begin{definition}
    Let $\Gamma$ be a graph with $V(\Gamma) = V_\Gamma^{\mathrm{bulk}}\sqcup V_\Gamma^\partial$, such that $\val(v) \geq 3$ for all $v \in V_\Gamma^{\mathrm{bulk}}$ and $\val(v^\partial) = 1 $ for all $v^\partial \in V_\Gamma^\partial$. Denote by the induced decomposition of edges by 
    $$E(\Gamma) = E_\Gamma^{\mr{bulk}-\mr{bulk}} \sqcup E_\Gamma^{\mr{bulk}-\mr{bdry}} \sqcup E_\Gamma^{\mr{bdry}-\mr{bdry}}.$$ 
    Let $X$ be a graph and $Y \subset X$ be a subgraph. Then a \emph{relative edge-to-path map} (resp. \emph{relative edge-to-h-path map}) is a pair $F=(F_V,F_P)$ (resp. $F=(F_V,F_\Pi)$) where $F_V\colon V(\Gamma) \to V(X)$ and $F_P\colon E(\Gamma) \to P_X$ (resp. $F_\Pi \colon E(\Gamma) \to \Pi_X$) such that 
    \begin{itemize}
        \item $F_V$ respects the vertex decompositions, i.e. $F_V(V_\Gamma^{\mr{bulk}})\subset V(X) \setminus V(Y)$ and $F_V(V_\Gamma^{\mr{bdry}}) \subset V(Y)$, 
        \item $F_E$ (resp. $F_\Pi$) is a lift of $F_V$ i.e. for all edges $e = (\s{u},\s{v}) \in E(\Gamma)$ we have $F_P(e) \in P_X(F_V(\s{u}),F_V(\s{v}))$ (resp. $F_\Pi(e) \in \Pi_X(F_V(\s{u}),F_V(\s{v}))$),
        \item $F_P$ (resp. $F_\Pi$) respects the edge decompositions, i.e. $F_P(E_\Gamma^{\mr{bulk}-\mr{bulk}}) \subset P_{X\setminus Y}$, $F_P(E_\Gamma^{\mr{bulk}-\mr{bdry}})\subset P'_{X,Y}$, 
        and similarly for $F_\Pi$. 
    \end{itemize}
    The set of relative edge-to-(h-)path maps is denoted $P_{X,Y}^\Gamma$ (resp. $\Pi_{X,Y}^\Gamma$). 
\end{definition}

Figure \ref{fig: relative edge to path} contains an example of a relative edge-to-path map from $\Gamma$ a Feynman graph with boundary vertices to a grid $X$ relative to a subgraph $Y$. 
    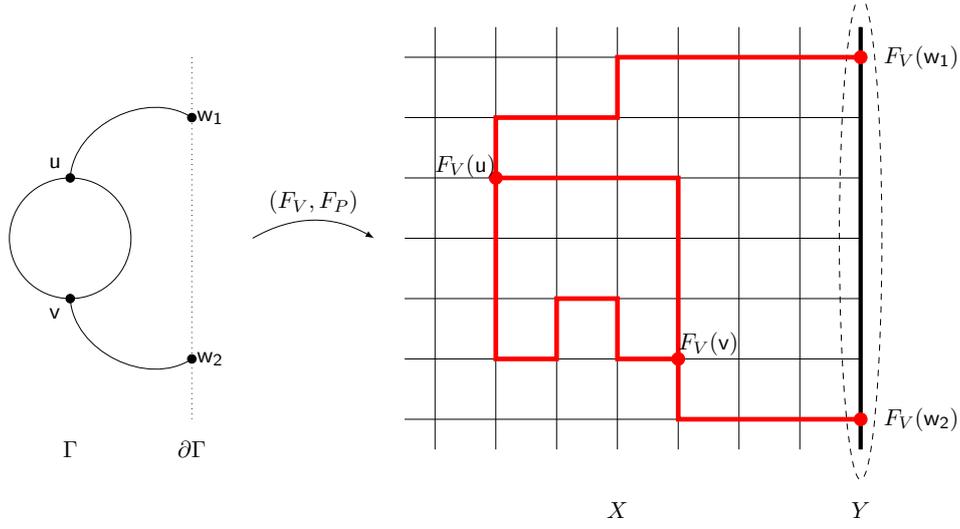
\begin{figure}[H]
        \centering
\scalebox{0.8}{
\begin{tikzpicture}
\begin{scope}[shift={(-6,3)}]
\draw[dotted] (2,3) -- (2,-3);
  \draw (0,0) circle (1cm); 
  \draw (0,1) to[bend left = 60] (2,2);
  \draw (0,-1) to[bend right = 60] (2,-2);
\node at (-0.25,1.25) {$\s{u}$};
\node at (-.25,-1.25) {$\s{v}$};
   \draw[fill = black] (0,1) circle (2pt);
    \draw[fill = black] (2,2) circle (2pt);
   \draw[fill = black] (2,-2) circle (2pt);
    \draw[fill = black] (0,-1) circle (2pt);
\draw[-latex] (3,0) to[bend left = 30] (5,0);
\node at (4,0.6) {$(F_V,F_P)$};
\node at (0,-3.5) {$\Gamma$}; 
\node at (2,-3.5) {$\partial\Gamma$}; 
\node at (2.3,2) {$\s{w_1}$};
\node at (2.3,-2) {$\s{w_2}$};
\end{scope}
\foreach \x in {0,1,...,7} 
{
\draw (\x,-0.5) -- (\x,6.5);
}
\foreach \y in {0,1,...,6} 
{\draw (-.5,\y) -- (7,\y);}
\draw[line width=2pt] (7,-.5) -- (7,6.5);
\draw[dashed] (7,3) ellipse (0.35cm and 4cm);
\draw[red, fill = red] (1,4) circle (3pt);
\draw[red, fill = red] (4,1) circle (3pt); 
\draw[red, fill = red] (7,6) circle (3pt); 
\draw[red, fill = red] (7,0) circle (3pt); 

\draw[path] (1,4) -- (1,5) -- (2,5) -- (3,5) -- (3,6) -- (7,6);
\draw[path] (1,4) -- (4,4) -- (4,1) ;
\draw[path] (1,4) -- (1,1) -- (2,1) -- (2,2) -- (3,2) -- (3,1) -- (4,1); 
\draw[path] (4,1) -- (4,0) -- (7,0); 
\node at (0.5,4.2) {$F_V(\s{u})$}; 
\node at (4.5,1.25) {$F_V(\s{v})$};
\node at (8,6) {$F_V(\s{w_1})$};
\node at (8,0) {$F_V(\s{w_2})$};
\node at (3,-1.5) {$X$};
\node at (7,-1.5) {$Y$};
\end{tikzpicture} 
}
\caption{An example of a relative  edge-to-path map.}
        \label{fig: relative edge to path}
    \end{figure}

We can now express the 
weight of a Feynman graph with boundary vertices as a sum over relative edge-to-path maps -- the combinatorial analog of the first quantization formula \eqref{eq: relative Feynman weight 1q formalism}: 
\begin{proposition}
    Suppose that $\Gamma$ is a Feynman graph with boundary vertices and $\phi \in C^0(Y)$. Then, the Feynman weight $\Phi_{\Gamma,(X,Y)}(\phi_Y)$ can be expressed by summing over relative edge-to-path maps as 
    \begin{multline}
        \Phi_{\Gamma,(X,Y)}(\phi_Y) = \\
        =\sum_{F \in P_{X,Y}^\Gamma} \prod_{\s{v}\in V^\mr{bulk}_\Gamma}(-p_{\val(\s{v})}) \prod_{\s{v}^\partial \in V^\partial_\Gamma}\phi_Y(F_V(\s{v}^\partial))
        \cdot \prod_{e \in E_\Gamma} w_{X,Y}(F_P(e)) . 
        \label{eq: Feynman weight relative paths}
    \end{multline} 
    In terms of h-paths, the expression is 
    \begin{multline}
        \Phi_{\Gamma,(X,Y)}(\phi_Y) =\\
        =(m^2)^{ - \#E^{\mr{bulk}-\mr{bulk}}} \sum_{F \in \Pi_{X,Y}^\Gamma} \prod_{\s{v}\in V^\mr{bulk}_\Gamma}(-p_{\val(\s{v})}) \prod_{\s{v}^\partial \in V^\partial_\Gamma}\phi_Y(F_V(\s{v}^\partial))
        \cdot \prod_{e \in E_\Gamma} s(F_\Pi(e)).
        \label{eq: Feynman weight relative h paths}
    \end{multline} 
\end{proposition}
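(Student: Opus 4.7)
The plan is to obtain both formulas by direct substitution of the relative path-sum expressions for the propagator $G_{X,Y}$ and the extension operator $E_{Y,X}$ into the edge-weighted sum \eqref{Feynman weight, relative}, and then swapping the order of the finite product over edges of $\Gamma$ with the sums over paths. This is a purely formal rearrangement; the real content sits in Proposition \ref{prop: rel gadgets path sums}, which has already been proved.

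For the path version \eqref{eq: Feynman weight relative paths}, I would proceed as follows. Fix a bulk/boundary-respecting map $f\colon V_\Gamma^{\mr{bulk}}\to V_X\setminus V_Y,\; V_\Gamma^\dd \to V_Y$. For each bulk-bulk edge $e=(\s{u},\s{v})$, substitute $G_{X,Y}(f(\s{u}),f(\s{v})) = \sum_{\gamma_e \in P_{X\setminus Y}(f(\s{u}),f(\s{v}))} w_{X,Y}(\gamma_e)$ from \eqref{eq: path sum rel prop}; for each bulk-boundary edge $e=(\s{u}^\dd,\s{v})$, substitute $E_{Y,X}(f(\s{u}^\dd),f(\s{v})) = \sum_{\gamma_e \in P'_{X,Y}(f(\s{v}),f(\s{u}^\dd))} w_{X,Y}(\gamma_e)$ from \eqref{eq: path sum extension}. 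Expanding the finite product over $E_\Gamma$ yields a sum over families $\{\gamma_e\}_{e\in E_\Gamma}$ with $\gamma_e$ constrained to $P_{X\setminus Y}$ or $P'_{X,Y}$ according to the edge type. Combined with the original sum over $f$, this is precisely the data of a relative edge-to-path map $F=(F_V,F_P) \in P^\Gamma_{X,Y}$ with $F_V=f$ and $F_P(e)=\gamma_e$. The vertex factors $\prod (-p_{\val(\s{v})})$ and boundary factors $\prod \phi_Y(f(\s{u}^\dd))$ depend only on $F_V$ and carry through unchanged, giving \eqref{eq: Feynman weight relative paths}.

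The h-path version \eqref{eq: Feynman weight relative h paths} follows from an identical argument using instead the identities $\langle u | G_{X,Y} |v \rangle = m^{-2}\sum_{\tilde\gamma\in\Pi_{X,Y}(u,v)}s(\tilde\gamma)$ and $\langle u | E_{Y,X} |v\rangle = \sum_{\tilde\gamma\in\Pi'_{X,Y}(u,v)}s(\tilde\gamma)$ from \eqref{eq: G rel hesitant paths} and \eqref{eq: h path sum extension operator}. The only bookkeeping difference is that each bulk-bulk edge carries an explicit overall factor $m^{-2}$ from the h-path formula for $G_{X,Y}$, while bulk-boundary edges do not; pulling these factors out of the product over edges produces the prefactor $(m^2)^{-\#E^{\mr{bulk}-\mr{bulk}}}$ in \eqref{eq: Feynman weight relative h paths}.

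No serious obstacle is anticipated. The one point requiring a moment's care is verifying that the edge-type constraints (bulk-bulk $\to P_{X\setminus Y}$ or $\Pi_{X,Y}$; bulk-boundary $\to P'_{X,Y}$ or $\Pi'_{X,Y}$) match exactly the definition of $P^\Gamma_{X,Y}$ (respectively $\Pi^\Gamma_{X,Y}$); this is immediate from how the sets $P_{X,Y}^\Gamma$ and $\Pi_{X,Y}^\Gamma$ were defined. Interchange of product and infinite sum is justified because the path sums for $G_{X,Y}$ and $E_{Y,X}$ converge absolutely for $m^2$ sufficiently large (and hold as formal power series otherwise), and the outer product is finite.
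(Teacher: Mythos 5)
Your proposal is correct and is essentially the paper's own proof: the paper likewise obtains \eqref{eq: Feynman weight relative paths} by substituting the path-sum formulae \eqref{eq: path sum rel prop} and \eqref{eq: path sum extension} into the edge product of \eqref{Feynman weight, relative}, and obtains \eqref{eq: Feynman weight relative h paths} from \eqref{eq: G rel hesitant paths} and \eqref{eq: h path sum extension operator}, with the prefactor $(m^2)^{-\#E^{\mr{bulk}-\mr{bulk}}}$ coming from the extra $m^{-2}$ carried by each bulk--bulk Green's function. Your write-up just spells out the product-sum interchange more explicitly than the paper does.
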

\begin{proof}
In \eqref{eq: Feynman weight relative paths}  we are using  the path sum formulae \eqref{eq: path sum rel prop}, \eqref{eq: path sum extension}
Similarly, to see \eqref{eq: Feynman weight relative h paths} we are using the relative h-path sums \eqref{eq: G rel hesitant paths}, \eqref{eq: h path sum extension operator} and  notice that every bulk-bulk Green's function comes with an additional power of $m^{-2}$.
\end{proof}
We immediately obtain the following formula for the partition function: 
\begin{proposition}
    The relative perturbative partition function can be expressed as 
    \begin{multline}
        Z^\mr{pert}_{X,Y}(\phi) = \det(K_{X,Y})^{-\frac12} \cdot e^{-\frac{1}{2\hbar} ( (\phi_Y,(\DN_{Y,X}-\frac12 K_Y)\phi_Y)
        - 
        S_Y^\mr{int}(\phi_Y))} \cdot \\
        \cdot \sum_\Gamma \frac{\hbar^{-\chi(\Gamma)}}{|\mr{Aut}(\Gamma)|}\sum_{F \in P_{X,Y}^\Gamma} \prod_{\s{v}\in V^\mr{bulk}_\Gamma}(-p_{\val(\s{v})}) \prod_{\s{v}^\partial \in V^\partial_\Gamma}\phi_Y(F_V(\s{v}^\partial))
        \cdot \prod_{e \in E_\Gamma} w_{X,Y}(F_P(e)). \label{eq: Z pert path sum}
    \end{multline}
\end{proposition}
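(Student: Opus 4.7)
The plan is to obtain this formula as a direct consequence of two results already established in the paper. The perturbative expansion \eqref{Z relative, perturbative expansion} writes $Z^\mr{pert}_{X,Y}(\phi_Y)$ as the product of the Gaussian prefactor $\det(K_{X,Y})^{-1/2}$, an exponential factor that is purely quadratic (plus polynomial) in $\phi_Y$ and involves only $\DN_{Y,X}$, $K_Y$ and $S_Y^{\mr{int}}$, and the formal sum $\sum_\Gamma \frac{\hbar^{-\chi(\Gamma)}}{|\mr{Aut}(\Gamma)|} \Phi_{\Gamma,(X,Y)}(\phi_Y)$ over Feynman graphs with bulk vertices of valence $\geq 3$ and univalent boundary vertices, with no boundary-boundary edges. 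The immediately preceding proposition gives the first-quantization rewriting \eqref{eq: Feynman weight relative paths} of each individual Feynman weight $\Phi_{\Gamma,(X,Y)}(\phi_Y)$ as a sum over relative edge-to-path maps $F \in P^\Gamma_{X,Y}$, with the prescribed vertex and edge factors.

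The proof is then a one-line substitution: inside the perturbative expansion, replace each $\Phi_{\Gamma,(X,Y)}(\phi_Y)$ by the right-hand side of \eqref{eq: Feynman weight relative paths}. The Gaussian prefactor is left untouched, and the graph sum is simply refined into a double sum in which an outer sum over $\Gamma$ is followed by an inner sum over edge-to-path maps $F \in P^\Gamma_{X,Y}$. The outcome is exactly \eqref{eq: Z pert path sum}.

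The only bookkeeping point concerns compatibility of edge types. Bulk-bulk edges of $\Gamma$ carry Dirichlet propagator factors $G_{X,Y}$, whose path sum expansion \eqref{eq: path sum rel prop} runs over paths in $X \setminus Y$, while bulk-boundary edges carry extension operator factors $E_{Y,X}$, whose path sum expansion \eqref{eq: path sum extension} runs over paths with exactly one vertex on $Y$. These are precisely the two edge constraints built into the definition of the space $P^\Gamma_{X,Y}$ of relative edge-to-path maps, so the substitution is performed edge by edge with no extraneous terms and no extra combinatorial factor.

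There is no substantive obstacle in this proposition; it is a restatement of the results of the preceding subsection rather than a new computation. Its content is interpretive: it recasts the relative perturbative partition function as a partition function of an auxiliary one-dimensional sigma model whose target is the spacetime graph $X$, with the bulk-boundary decomposition of the source Feynman graph faithfully reflecting the bulk-boundary decomposition of $X$ with respect to $Y$.
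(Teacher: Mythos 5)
Your proof is correct and is exactly the argument the paper intends: the proposition is introduced with ``We immediately obtain\ldots'' right after the edge-to-path rewriting of the Feynman weights, i.e.\ it is precisely the substitution of \eqref{eq: Feynman weight relative paths} into the perturbative expansion \eqref{Z relative, perturbative expansion} that you describe, with the edge-type compatibility being the only point to check. One small remark: carrying the prefactor of \eqref{Z relative, perturbative expansion} over unchanged, as you do, yields $+S_Y^{\mathrm{int}}(\phi_Y)$ inside the exponent rather than the $-S_Y^{\mathrm{int}}(\phi_Y)$ appearing in the stated formula \eqref{eq: Z pert path sum}, which appears to be a sign typo in the paper's statement rather than a gap in your argument.
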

\begin{remark}
    As in Remark \ref{rem: rems on Z pert rel}, the Dirichlet-to-Neumann operator in the exponent of \eqref{eq: Z pert path sum} could be expanded in terms of Feynman diagrams with boundary-boundary edges. An edge-to-path map $F$ should map such a boundary-boundary edge $e = (\s{u}^\partial,\s{v}^\partial)$ either to a path  $\gamma \in P''_X$ (which is weighted with $w_{X,Y}(\gamma)$) or, in the case where $F_V(\s{u}^\partial) = F_V(\s{v}^\partial)$, possibly to the constant path $(\s{v}^\partial)$ (which is then weighted with $-(m^2 + \val(\s{v}^\partial))$.

    Equivalently, one has the following expression for the logarithm of the relative perturbative partition function:
    \begin{multline}
        \log\, Z^\mr{pert}_{X,Y}(\phi)=\\
        =\frac{1}{2\hbar}\Bigg(
         \sum_{u,v\in Y}\phi_Y(u)\phi_Y(v) 
       \cdot\bigg(-\left(m^2 + \val_X(v)\right)\delta_{uv} +  \sum_{\gamma \in P''_{X,Y}(u,v)}w_{X,Y}(\gamma) \bigg) +\\
       +\frac12 (\phi_Y,K_Y \phi_Y)+ \sum_{v\in Y} p(\phi_Y(v))
        \Bigg) \\
        +
        \frac{1}{2}\left(\sum_{[\gamma]\in C^{\geq 1}_{X\setminus Y}}\frac{w'_{X,Y}(\gamma)}{t(\gamma)} - \sum_{v\in X}\log (m^2 + \val(v))\right)\\
        +\sum_{\Gamma^{\mr{conn}}} \frac{\hbar^{-\chi(\Gamma)}}{|\mr{Aut}(\Gamma)|}\sum_{F \in P_{X,Y}^\Gamma} \prod_{\s{v}\in V^\mr{bulk}_\Gamma}(-p_{\val(\s{v})}) \prod_{\s{v}^\partial \in V^\partial_\Gamma}\phi_Y(F_V(\s{v}^\partial))
        \cdot \prod_{e \in E_\Gamma} w_{X,Y}(F_P(e)).
    \end{multline}
    This generalizes the results (\ref{eq: Z pert 1st quant}) and (\ref{eq: 1q formula rel Z}) to relative interacting case.
    
\end{remark}
\subsection{Cutting and gluing}
\label{ss: 1st quantization cutting-gluing a Feynman graph}
The goal of this section is to provide a sketch of a proof of the gluing of perturbative partition functions \eqref{gluing of Z^pert} by counting paths. Suppose that $X = X' \cup_Y X''$ and $F\in P_X^\Gamma$ is an edge-to-path map from a Feynman graph $\Gamma$ to $X$.\footnote{Again, for notational simplicity we consider only the case where $X$ is closed, with the generalization to cobordisms notationally tedious but straightforward.} Then, the decomposition $X = X' \cup_Y X''$ induces a decoration of $\Gamma$, as in Section \ref{sec: cutting pert Z}. Namely, we decorate a vertex $\s{v} \in V_\Gamma$ with $\alpha \in \{X',Y,X''\}$ if $F_V(\s{v}) \in \alpha$, and we decorate an edge $e$ with $c$ if and only if the path $F_P(e)$ contains a vertex in $Y$. See Figure \ref{fig: edge to path compatible}.


 \begin{figure}[H]
     
        \centering
        \scalebox{0.8}{
\begin{tikzpicture}
\begin{scope}[shift={(-6,4)}]
  \draw (0,0) to (2,0); 
  \draw (0,0) to[bend left = 30]   (.5,.5);
    \draw (.5,.5) to[bend left = 60]   (1.5,.5);
\draw (.5,.5) to[bend right = 60]   (1.5,.5);
\draw (1.5,.5) to[bend left = 30] (2,0);
  \draw (0,0) to[bend right = 60]  (2,0);
\node at (-0.25,0) {$\s{u}$};
\node at (2.25,0) {$\s{v}$};
\node at (0.2,.55) {$\s{w}$};
\node at (1.8,.55) {$\s{z}$};
   \draw[fill = black] (0,0) circle (2pt);
    \draw[fill = black] (2,0) circle (2pt);
    \draw[fill = black] (.5,.5) circle (2pt);
    \draw[fill = black] (1.5,.5) circle (2pt);
\draw[-latex] (3,0) to[bend left = 30] (5,0);
\node at (4,0.6) {$(F_V,F_P)$};

\node at (-1,1) {$\Gamma$}; 
\end{scope}
\begin{scope}[shift={(-6,1)}]
  \draw (0,0) to[bend left = 30] node[above left] {$c$} (1,.75);
  \draw (1,.75) to[bend left = 60] node[above] {$c$} (2,.75);
    \draw (1,.75) to[bend right = 60] node[below] {$c$} (2,.75);
 \draw (2,.75) to[bend left =30] node[right] {$u$} (2,0);
  \draw (0,0) to[bend left = 0] node[above left] {$c$} (2,0);
  \draw (0,0) to[bend right = 60] node[below right] {$c$} (2,0);
  \draw[dashed] (1,1.5) -- (1,-1.5);
\node at (-0.25,0) {$\s{u}$};
\node at (2.25,0) {$\s{v}$};
\node at (0.77,.85) {$\s{w}$};
\node at (2.2,.85) {$\s{z}$};
   \draw[fill = black] (0,0) circle (2pt);
   \draw[fill = black] (1,0.75) circle (2pt);
    \draw[fill = black] (2,0) circle (2pt);
        \draw[fill = black] (2,0.75) circle (2pt);

\draw[decoration=snake,latex-] (3,0) to[bend left = 20] (5,0);
\node at (4,0.6) {infer decoration};

\node at (-.5,1.5) {$\Gamma^{dec}$}; 
\node at (0,-1) {$X'$}; 
\node at (2,-1) {$X''$}; 
\node at (1,-1.5) {$Y$};
\end{scope}
\foreach \x in {0,1,...,7} 
{
\draw (\x,-0.5) -- (\x,7);
\draw ($(\x,-0.5) + (.5,0)$) -- ($(\x,7) + (.5,0)$);
}
\foreach \y in {0,1,...,6} 
{\draw (-.5,\y) -- (8,\y);
\draw ($(-.5,\y) + (0,.5)$) -- ($(8,\y) + (0,0.5)$);
}
\draw[line width = 2pt] (4,-.5) -- (4,7);
\draw[red, fill = red] (1,4) circle (3pt);
\draw[red, fill = red] (4,6) circle (3pt);
\draw[red, fill = red] (6,4.5) circle (3pt);
\draw[red, fill = red] (5,2) circle (3pt);

\draw[dashed] (4,3.25) ellipse (0.25cm and 4cm);
\draw[path] (1,4) -- (1,5) -- (2,5) -- (3,5) -- (3,6) -- (4,6) -- (6,6) -- (6,4.5);
\draw[path] (4,6) -- (4,5.5) -- (3.5,5.5) -- (3.5,4.5) -- (6,4.5);
\draw[path] (6,4.5) -- (7,4.5) -- (7,3.5) -- (5,3.5) -- (5,2);
\draw[path] (1,4) -- (4.5,4) -- (4.5,3.5) -- (3,3.5) -- (3,3) -- (4.5,3) -- (4.5,2.5) -- (3.5,2.5) -- (3.5,2) -- (4,2) -- (5,2);
\draw[path] (1,4) -- (1,1) -- (2,1) -- (2,2) -- (3,2) -- (3,1) -- (4,1) -- (4,0) -- (5,0) -- (5,2);
\node at (3.5,6.2) {$F_V(\s{w})$}; 
\node at (0.5,4.2) {$F_V(\s{u})$}; 
\node at (5.5,1.7) {$F_V(\s{v})$};
\node at (6.5,4.8) {$F_V(\s{z})$}; 

\node at (1,-1.3) {$X'$};
\node at (4,-1.3) {$Y$};
\node at (6,-1.3) {$X''$};
\end{tikzpicture} 
}
\caption{Inferring a decoration of $\Gamma$ from cutting an edge-to-path map $\Gamma \to X' \cup_Y X''$. }
        \label{fig: edge to path compatible}
        \end{figure}
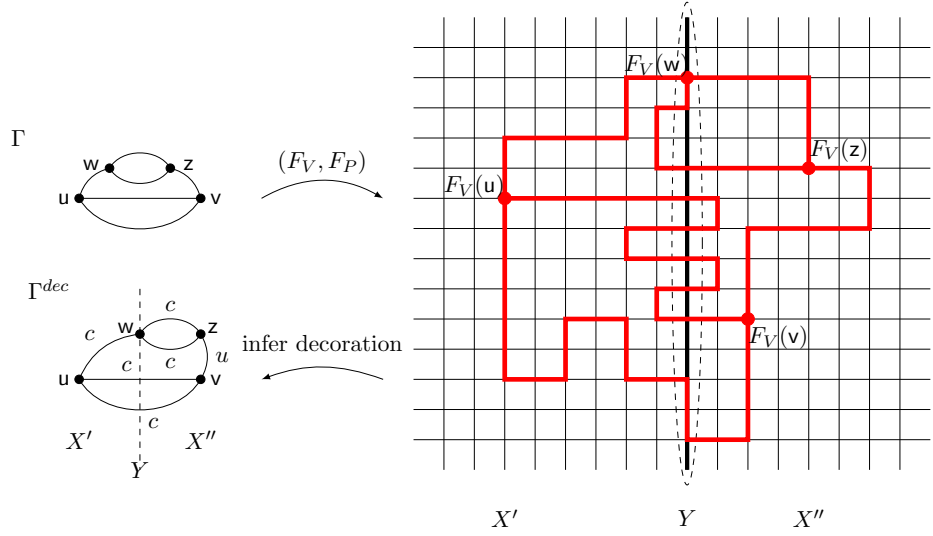

Recall that from a decorated graph, we can form two new graphs $X'$ and $X''$ with boundary vertices. Given and edge-to-path map $F$ and its induced decoration of $\Gamma$, we can define two new relative edge-to-path maps $(F'_V,F'_P)$ and $(F''_V,F''_P)$ for the new graphs $X'$ and $X''$ as follows. The map $F'_V$ is simply the restriction of $F_V$ to vertices colored $X'$. For edges labeled $u$, $F'_P(e) = F_P(e)$. For a bulk-boundary edge in $X'$,  $F'_P(e)$ is the segment of the path $F_P(\tilde{e})$ of the corresponding edge $\tilde{e}$ in $\Gamma$ (that was necessarily labeled $c$) up to (and including) the first vertex in $Y$. The construction of $(F''_V,F_P'')$ is similar, as is the extension to edge-to-hesitant-path maps. The definition of $\Gamma', \Gamma''$ ensures that $(F'_V,F'_P)$ and  $(F''_V,F_P'')$ are well-defined relative edge-to-path maps.   For example, from the edge-to-path map in Figure \ref{fig: edge to path compatible}, one obtains the two edge-to-path maps in Figure \ref{fig: edge to path cut}.
  \begin{figure} [H]
        \centering
        \scalebox{0.8}{
\begin{tikzpicture}
\begin{scope}[shift={(-6,3)}]
  \draw (0,0) -- (1,0); 
  \draw (0,0) to[bend left = 60] (1,1);
  \draw (0,0) to[bend right = 60] (1,-1);
    \draw (3,0) -- (2,0); 
  \draw (2,1) to[bend left = 0] (3,1);
  \draw (2,1.75) to[bend left = 60] (3,1);
  \draw (2,-1) to[bend right = 60] (3,0);
  \draw (3,1) to[bend left = 30] (3,0);
  \draw[dashed] (1,2) -- (1,-2);
\draw[dashed] (2,2) -- (2,-2);

\node at (-0.25,0) {$\s{u}$};
\node at (3.25,0) {$\s{v}$};
\node at (3.25,1) {$\s{z}$};
 \draw[fill = black] (0,0) circle (2pt);
 \draw[fill = black] (3,0) circle (2pt);
\draw[fill = black] (3,1) circle (2pt);
\draw (2,1.75) circle (2pt);
    \foreach \x in {1,2} 
    {
    \foreach \y in {1,0,-1} 
    {\draw (\x,\y) circle (2pt);
    }
    }
\draw[-latex] (1,2.2) to[bend left = 30] (5,2.2);
\draw[-latex] (3,-2.5) to[bend right = 30] (12,-3.7);
\node at (3,3.1) {$(F'_V,F'_P)$};
\node at (4,-4) {$(F''_V,F''_P)$};
\node at (0,-2.3) {$\Gamma'$}; 
\node at (1,-2.3) {$\partial\Gamma'$}; 
\node at (3,-2.3) {$\Gamma''$}; 
\node at (2,-2.3) {$\partial\Gamma''$}; 
\end{scope}
\foreach \x in {0,1,...,4} 
{
\draw ($(\x,-0.5) +(-.5,0)$) -- ($(\x,7) - (.5,0)$);

\draw (\x,-0.5) -- (\x,7);
}
\foreach \x in {5,6,7,8} 
{
\draw ($(\x,-0.5) +(.5,0)$) -- ($(\x,7) + (.5,0)$);

\draw (\x,-0.5) -- (\x,7);
}
\foreach \y in {0,1,...,6} 
{\draw (-.5,\y) -- (4,\y);
\draw ($(-.5,\y) + (0,.5)$) -- ($(4,\y)+ (0,.5)$);
\draw (5,\y) -- (9,\y);
\draw ($(5,\y) + (0,.5)$) -- ($(9,\y)+ (0,.5)$);}
\draw[line width = 2pt] (4,-.5) -- (4,7);
\draw[line width = 2pt] (5,-.5) -- (5,7);

\draw[red, fill = red] (1,4) circle (3pt);
\draw[red, fill = red] (6,2) circle (3pt); 
\draw[red, fill = red] (4,6) circle (3pt);
\draw[red, fill = red] (4,4) circle (3pt); 
\draw[red, fill = red] (4,1) circle (3pt); 
\draw[red, fill = red] (5,6) circle (3pt); 
\draw[red, fill = red] (5,4.5) circle (3pt); 
\draw[red, fill = red] (5,2) circle (3pt); 
\draw[red, fill = red] (5,0) circle (3pt); 
\draw[red, fill = red] (7,4.5) circle (3pt);

\draw[dashed] (4,3.25) ellipse (0.3cm and 4cm);
\draw[dashed] (5,3.25) ellipse (0.3cm and 4cm);
\draw[path] (1,4) -- (1,5) -- (2,5) -- (3,5) -- (3,6) -- (4,6); 
\draw[path] (5,6) -- (7,6) -- (7,4.5);
\draw[path] (1,4) -- (4,4);
\draw[path] (5,2) -- (6,2);
\draw[path] (1,4) -- (1,1) -- (2,1) -- (2,2) -- (3,2) -- (3,1) -- (4,1);
\draw[path] (5,0) -- (6,0) -- (6,2);
\draw[path] (5,4.5) -- (7,4.5);
\draw[path] (7,4.5) -- (8,4.5) -- (8,3.5) -- (6,3.5) -- (6,2);

\node at (0.5,4.2) {$F'_V(\s{u})$}; 
\node at (6.5,1.7) {$F''_V(\s{v})$};
\node at (7.5,4.75) {$F''_V(\s{z})$};
\node at (3,-1) {$X'$};
\node at (4,-1) {$Y$};
\node at (5,-1.3) {$Y$};

\node at (6,-1.3) {$X''$};

\end{tikzpicture} 
}
\caption{Relative edge-to-path maps $(F'_V,F'_P)$ from $\Gamma'$ to $(X',Y)$ and $(F''_V,F''_P)$ fron $\Gamma''$ to $(X'',Y)$ arising from the cutting of the Feynman graph $\Gamma^{dec}$ in Figure \ref{fig: edge to path compatible}. }
        \label{fig: edge to path cut}
    \end{figure}
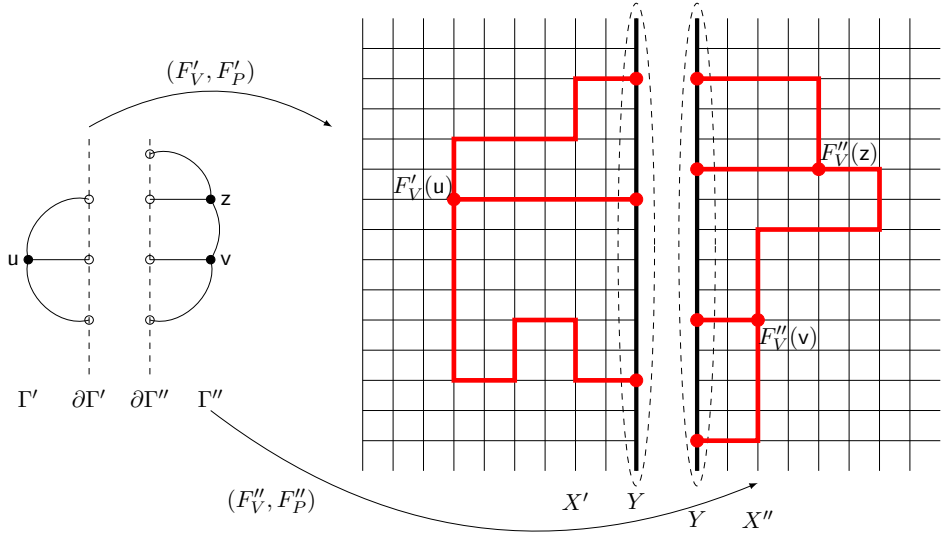
    
Notice that in the process of creating the cut edge-to-path-maps we are forgetting about the parts of the paths between the first and the last crossing of $Y$, as well as the vertices labelled with $Y$. This information is encoded in the Dirichlet-to-Neumann operator and the interacting term $S^\mr{int}_Y$ respectively. 
Integrating the product of a pair of relative edge-to-path maps appearing in the product $Z^\mr{pert}_{X',Y}(\phi_Y)Z^\mr{pert}_{X'',Y}(\phi_Y)$ over $\phi_Y$, two things happen: 
\begin{itemize}
    \item An arbitrary number of vertices on $Y$ is created (due to the factor of $e^{-\frac{1}{\hbar}S^\mr{int}_Y}(\phi_Y)$).
    \item All vertices on $Y$ (the new boundary vertices and those coming from the relative edge-to-path maps) are connected by the inverse $D$ of total Dirichlet-to-Neumann operators.
\end{itemize} In this way, we obtain all edge-to-path maps that give rise to this pair of relative edge-to-path maps. This provides a sketch of an alternative proof of the gluing formula for perturbative partition functions using the first quantization formalism, i.e. path sums. 
\section{Conclusion and outlook
}
In this paper we analyzed a combinatorial toy model for massive scalar QFT, where the spacetime manifold is a graph. We focused on incarnations of locality -- the behaviour under cutting and gluing -- and the interplay with the first quantization formalism. In particular, we showed that the convergent 
functional integrals naturally define a functor with source a graph cobordism category and target the category of Hilbert spaces, and we proposed an extended version with values in commutative unital algebras. We discussed the perturbative theory -- the $\hbar \to 0$ limit -- and its behaviour under cutting and gluing. Finally, we analyzed the  theory in the first quantization formalism, where all objects have expressions in terms of sums over paths (or h-paths) in the spacetime graph. We showed that cutting and gluing interacts naturally with those path sums.  Below we outline several promising directions for future research.
\begin{itemize}
    \item \emph{Continuum limit, renormalization, extended QFTs.} In this paper we discussed the behaviour of the theory in the continuum limit for line graphs only. However, our toy model is in principle dimension agnostic and makes sense on lattice graphs of any dimension, and one can take a similar continuum limit there. However, in the interacting theory in dimension $d \geq 2$, one has to take into account the issue of divergencies and renormalization.\footnote{In $d=2$, the only divergent subgraphs are short loops, but already those interact nontrivially with cutting and gluing - see \cite{KMW}.} It will be interesting to see how this problem manifests itself in the continuum limit of a higher-dimensional lattice graph, and whether this approach will be helpful in defining a renormalized massive scalar QFT with cutting and gluing. Furthermore, the extended QFT proposed in this paper could provide an insight in defining an extension of such a functorial QFT to higher codimensions.\footnote{
    One approach to constructing a QFT with corners involves geometric quantization of the BV-BFV phase spaces attached to corners, \cite{CMR1,IM}, see also \cite{Safronov}.
    } 

     Another interesting question in the continuum limit is to recover
    the first quantization path measure (in particular, the action functionals (\ref{S^1q}), (\ref{S^1q bar})) from a limit of our weight system on paths on a dense lattice graph.
    \item 
    \emph{Massless limit.} Another interesting problem is the study of the limit $m \to 0$. In this limit, the kinetic operator becomes degenerate if no boundary conditions are imposed, and extra work is needed to make sense of theory.\footnote{ 
    In this case it is natural to formulate the perturbative quantum answers in terms of effective actions of the zero-mode of the field $\phi$; it might be natural here to employ  the BV-BFV formalism \cite{CMRpert} combining effective actions with cutting-gluing.
    }
    This will be particularly interesting in the case of two-dimensional lattice graphs, where the massless limit of the  continuum theory is a conformal field theory  (in the free case, $p(\phi)=0$), thus the massless limit of our toy model is a discrete model for this CFT. 
    We also remark that while the h-path formulae do not interact well with the $m \to 0$ limit, since they are expansions in $m^{-2}$, for path sums the weight of a path at $m=0$ is 
    $$w(\gamma) = \prod_{v \in V(\Gamma)} \frac{1}{\val(v)},$$
    i.e. the weight of the path is the probability of a random walk on the graph where at every vertex, the walk can continue along all adjacent edges with probability $1/\val(v)$. 

    \item \emph{Gauge theories on cell complexes.} 
     Finally, it will be interesting to study in a similar fashion (including first quantization formalism) gauge theories (e.g. $p$-form electrodynamics, Yang-Mills or AKSZ theories) 
    on a cell complex, with gauge fields (and ghosts, higher ghosts and antifields) becoming cellular cochains.\footnote{  Second quantization formalism for (abelian and non-abelian) $BF$ theory on a cell complex is developed in \cite{CMR}. What we propose here is a generalization to other models, possibly involving metric on cochains, and a focus on the path-sum approach.} 
\end{itemize}

\end{document}